\newtheorem{theorem}{Theorem}
\newtheorem{corollary}{Corollary}
\newtheorem{lemma}{Lemma}
\newtheorem{proposition}{Proposition}
\theoremstyle{definition}
\newtheorem{definition}{Definition}
\newcommand{\ceil}[1]{\lceil{#1}\rceil}
\begin{document}

\preprint{APS/123-QED}

\title{\textbf{Efficient benchmarking of logical magic state} 
}%

\author{Su-un Lee}%
\email{suun@uchicago.edu}
\thanks{These authors contributed equally}
\affiliation{%
 Pritzker School of Molecular Engineering, The University of Chicago, Chicago, IL 60637, USA
}%

\author{Ming Yuan}%
\email{yuanming@uchicago.edu}
\thanks{These authors contributed equally}
\affiliation{%
 Pritzker School of Molecular Engineering, The University of Chicago, Chicago, IL 60637, USA
}%

\author{Senrui Chen}%
\affiliation{%
 Pritzker School of Molecular Engineering, The University of Chicago, Chicago, IL 60637, USA
}%

\author{Kento Tsubouchi}%
\affiliation{Department of Applied Physics, University of \mbox{Tokyo, 7-3-1} Hongo, Bunkyo-ku, Tokyo 113-8656, Japan}

\author{Liang Jiang}%
\email{liang.jiang@uchicago.edu}
\affiliation{%
 Pritzker School of Molecular Engineering, The University of Chicago, Chicago, IL 60637, USA
}%

\date{May 16, 2025}

\begin{abstract}
High-fidelity logical magic states are a critical resource for fault-tolerant quantum computation, enabling non-Clifford logical operations through state injection. However, benchmarking these states presents significant challenges: one must estimate the infidelity $\epsilon$ with multiplicative precision, while many quantum error-correcting codes only permit Clifford operations to be implemented fault-tolerantly. Consequently, conventional state tomography requires $\sim1/\epsilon^2$ samples, making benchmarking impractical for high-fidelity states. In this work, we show that any benchmarking scheme measuring one copy of the magic state per round necessarily requires $\Omega(1/\epsilon^2)$ samples for single-qubit magic states. We then propose two approaches to overcome this limitation: (i) Bell measurements on two copies of the twirled state and (ii) single-copy schemes leveraging twirled multi-qubit magic states. Both benchmarking schemes utilize measurements with stabilizer states orthogonal to the ideal magic state and we show that $O(1/\epsilon)$ sample complexity is achieved, which we prove to be optimal. Finally, we demonstrate the robustness of our protocols through numerical simulations under realistic noise models, confirming that their advantage persists even at moderate error rates currently achievable in state-of-the-art experiments.
\end{abstract}

\maketitle

\section{Introduction}

Fault-tolerant quantum computation (FTQC) relies on quantum error correction (QEC) to encode quantum information in a larger Hilbert space, thereby suppressing logical errors. Many widely adopted QEC codes support fault-tolerant Clifford gates and Pauli measurements through transversal operations~\cite{Steane_1996, Steane1996Error, gottesmanStabilizerCodesQuantum1997, gottesmanFaultTolerantQuantumComputation1999} or lattice surgery~\cite{horsmanSurfaceCodeQuantum2012, fowler2019lowoverheadquantumcomputation}. However, the Eastin-Knill theorem restricts fault-tolerant implementations of non-Clifford gates~\cite{eastinRestrictionsTransversalEncoded2009}, which are essential for universal quantum computation~\cite{gottesman1998heisenbergrepresentationquantumcomputers, aaronsonImprovedSimulationStabilizer2004a, Kitaev_1997}. To overcome this limitation, magic states—non-stabilizer states enabling non-Clifford gates via state injection—have emerged as a critical resource for FTQC~\cite{bravyiUniversalQuantumComputation2005}.

While directly preparing high-fidelity magic states within QEC codes is challenging, magic state distillation provides a practical solution by consuming multiple noisy copies to produce higher-fidelity ones~\cite{bravyiUniversalQuantumComputation2005, Litinski2019magicstate, Gidney2019efficientmagicstate}. Distilling a \textit{logical} magic state is particularly important for implementing high-fidelity non-Clifford logical gates. With this importance, magic state distillation on [[7, 1, 3]] and [[17, 1, 5]] color codes~\cite{Steane_1996,Steane1996Error,bombinTopologicalQuantumDistillation2006} has recently been demonstrated using reconfigurable neutral atom arrays~\cite{rodriguezExperimentalDemonstrationLogical2024}.

Given the central role of these logical magic states in FTQC, their accurate characterization is crucial. However, benchmarking high-fidelity logical magic states faces significant challenges. First, while the logical infidelity $\epsilon$ can be very small, one needs to estimate it with multiplicative precision, as the estimation error must be at least in the same order as $\epsilon$ itself. Second, one only has a limited set of quantum operations consisting of fault-tolerant Clifford gates and Pauli measurements, for logical errors from the benchmarking circuit being negligible to $\epsilon$. As a result, conventional methods that perform single-qubit Pauli measurements require $\sim 1/\epsilon^2$ samples, and this led to a large sampling overhead for recent experimental demonstrations of logical magic states~\cite{YeLogicalMagicState2023, GuptaEncodingBreak-even2024, mayer2024benchmarkinglogicalthreequbitquantum, Wang2024_832colorcode, rodriguezExperimentalDemonstrationLogical2024}. Especially, Ref.~\cite{rodriguezExperimentalDemonstrationLogical2024} reports fidelity of $0.994$ at the logical level, consuming $\approx 30{,}000$ magic states, which will become impractical as the infidelity is improved.

In this work, we present methods to circumvent the large $\Omega(1/\epsilon^2)$ sample complexity and achieve $O(1/\epsilon)$ scaling for magic state benchmarking. Specifically, we first establish a fundamental limitation for \textit{single-copy benchmarking schemes}, which measure each copy of the magic state individually, showing that any single-qubit magic state requires $\Omega(1/\epsilon^2)$ samples to benchmark. We then propose two strategies to overcome this limitation: (i) perform Bell measurements on two copies of the twirled magic state, and (ii) perform single-copy benchmarking schemes on twirled multi-qubit magic states. Both approaches achieve $O(1/\epsilon)$ sample complexity, which we prove is optimal. For both protocols, identifying stabilizer states orthogonal to the ideal magic state plays a central role in improving sampling overhead.

Finally, we evaluate the robustness of our proposed protocols under a range of experimental imperfections, including gate, state preparation, and measurement errors. Our numerical simulations confirm that—even under this realistic setup—our protocols retain their advantage, substantially reducing the sampling overhead compared to standard state tomography. Specifically, at a moderate noise level that is already achievable by modern experiments, our method reduces the sampling overhead by over two orders of magnitude, suggesting practical feasibility for near-term experiments.

\section{Problem setup}

The goal of magic state benchmarking is to estimate the infidelity of an ideal magic state and a noisy magic state. Specifically, given a target $n$-qubit magic state $\ket{\psi}$ and an arbitrary density matrix $\rho$ that we prepare, the fidelity is given by $F = \langle\psi|\rho|\psi\rangle$. Here, $\rho$ can be either physical or logical state. If it is a logical state, we assume that $\rho$ is the state after an error correction or error detection circuit has been applied, so that $\rho$ lies within the logical space. We consider the benchmarking task of estimating the infidelity $\epsilon = 1-F$ with a multiplicative error, $|\hat{\epsilon}- \epsilon| \le r \epsilon$, for some $0<r<1$. Multiplicative precision is essential in this context: additive precision alone becomes inadequate for small values of infidelity, as trivially reporting $\hat{\epsilon}=0$ would already achieve a small additive error, rendering benchmarking ineffective.

Motivated by the fact that many quantum error-correcting codes support only Clifford gates and Pauli measurements fault-tolerantly, we constrain the benchmarking schemes to use only Clifford gates and Pauli measurements. We first define a \textit{single-copy benchmarking scheme}. This scheme consists of $N$-rounds of measurement, and for $i$-th measurement, we prepare $\rho$ and apply a Clifford gate $U_i$. We then measure every qubit in the computational basis and get an outcome $x \in \{0,1\}^{n}$. Here, we allow the single-copy scheme to be adaptive, meaning the gate $U_i$ can be chosen based on the previous measurement outcomes. One may consider more general schemes involving ancillary qubits, but such ancillas do not affect the sample complexity (see Appendix~\ref{sec:ancilla}). The single-copy scheme includes a wide range of benchmarking schemes such as standard state tomography and direct fidelity estimation~\cite{flammiaDirectFidelityEstimation2011, da2011practical}.

As we restrict our benchmarking scheme to consist of Clifford gates, it is useful to introduce the stabilizer states, which are generated by applying Clifford circuits to computational basis states. Specifically, we denote the Clifford groups associated with $n$-qubits as $\mathcal{C}_n$ and define the set of stabilizer states as
\begin{equation}
    {\rm Stab}_n = \{U\ket{0^n}\in\mathcal{H} :U \in \mathcal{C}_{n}\}.
\end{equation}
It is immediately seen that the output probability $P(x)$ of measuring $x\in \{0,1\}^n$ for $i$-th round of measurement is the overlap between the input state $\rho$ and a stabilizer state $U_i^{\dagger}\ket{x}$, i.e., $P_i(x)= \langle x|U_i \rho U_i^{\dagger}| x\rangle$.

\section{Benchmarking single-qubit magic state}

\begin{figure}
    \centering
    \includegraphics[width=\columnwidth]{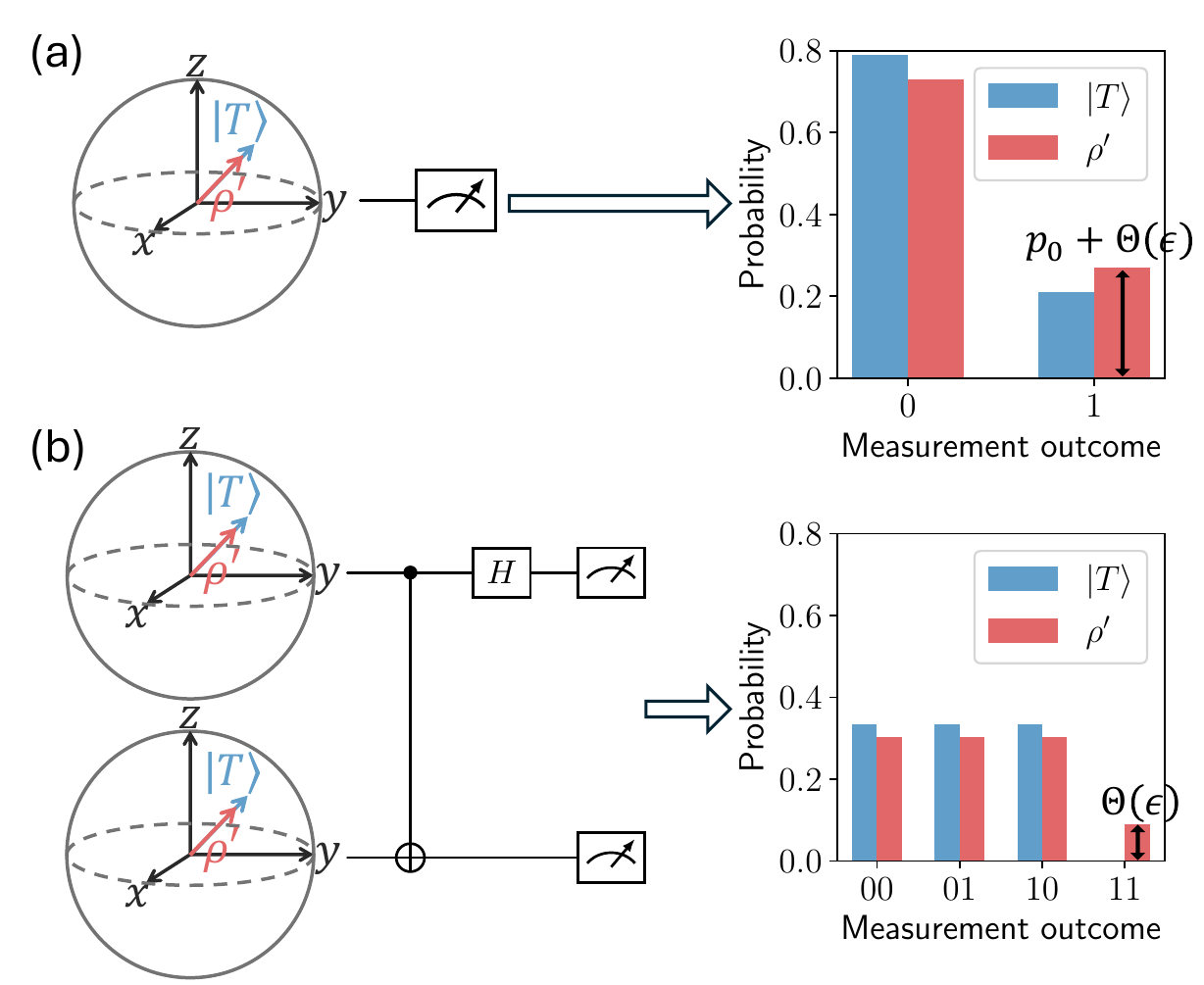}
    \caption{(a) Single-copy benchmarking scheme for $\ket{T}$ state with corresponding measurement output distribution, and (b) Bell measurement benchmarking scheme for $\ket{T}$ state with corresponding measurement output distribution.}
    \label{fig:fig1}
\end{figure}

We illustrate the limitation of the single-copy benchmarking scheme and how a joint measurement can overcome this limitation with an explicit example. In particular, we highlight the role of stabilizer states that are orthogonal to the ideal magic state.

Let us consider a single-copy benchmarking scheme for the following single-qubit magic state,
\begin{equation}
    \ketbra{T}{T} = \frac{1}{2}\left(I +\frac{1}{\sqrt{3}}(X+Y+Z)\right),
\end{equation}
which enables non-Clifford $\pi/6$-phase rotations and serves as a fundamental resource for FTQC~\cite{bravyiUniversalQuantumComputation2005}. Suppose we prepare a single-qubit state $\rho$ that has infidelity $\epsilon$ with respect to $\ket{T}$. We can convert the state into $\rho'$ with the same infidelity,
\begin{align}
    \rho'&=(1-\epsilon)\ketbra{T}{T}+\epsilon\ketbra{T^\perp}{T^\perp}\\
    &=\frac{1}{2}\left(I+\frac{1-2\epsilon}{\sqrt{3}}(X+Y+Z)\right)
\end{align}
by randomly applying some Clifford gates~\cite{bravyiUniversalQuantumComputation2005}, which is called state twirling. Here, $\ket{T^\perp}$ denotes the unique state orthogonal to $\ket{T}$.

Because of the symmetry $\rho'$ has, a measurement on $\rho$ proceeded by any single-qubit Clifford gate gives the same statistics. Therefore, without loss of generality, consider the case where we measure $\rho'$ in the computational basis $N$ times. It is straightforward that for each measurement, the outcome is sampled from the Bernoulli distribution of ${\rm Bern}(p_0+\frac{\epsilon}{\sqrt{3}})$, where $p_0 = \frac{1}{2}(1-\frac{1}{\sqrt{3}})$ [Fig.~\ref{fig:fig1}(a)]. Then, with $N$ samples of $x^{(1)},\dots,x^{(N)} \sim {\rm Bern}(p_0+\frac{\epsilon}{\sqrt{3}})$, the most straightforward estimator is $\hat{\epsilon}=\sqrt{3}\left(\frac{1}{N}\sum_{i=1}^{N}x^{(i)} - p_0\right)$, which has the standard deviation of $\Delta\hat{\epsilon}=\sqrt{3(p_0+\frac{\epsilon}{\sqrt{3}})(1-p_0-\frac{\epsilon}{\sqrt{3}})/N} =\Theta(1/\sqrt{N})$. Therefore, achieving a constant multiplicative precision in $\epsilon$ requires $N = \Theta(1/\epsilon^2)$, making the single-copy benchmarking impractical for small $\epsilon$.

We remark that this $1/\epsilon^2$ scaling stems from the fact that all stabilizer states, e.g., $\ket{0}$ and $\ket{1}$, have non-zero overlap with the $\ket{T}$ state. This leads to that the measurement output should always be sampled from ${\rm Bern}(c + O(\epsilon))$ for some constant $0<c<1$, and it brings constant standard deviation $\sqrt{c(1-c)}$ of each sample even when the infidelity is zero. However, if a stabilizer state $\ket{s}$ existed such that $\bra{s}\ket{T}=0$, the distribution of the measurement with this state would instead be a Bernoulli distribution without constant offset, i.e., ${\rm Bern}(\Theta(\epsilon))$. Then, the standard deviation of each sample becomes $\Theta(\sqrt{\epsilon})$, and thus that of the estimator is $\Theta(\sqrt{\epsilon/N})$. It leads to $N=O(1/\epsilon)$ sample complexity. Therefore, the existence of a stabilizer state orthogonal to the ideal magic state determines whether efficient benchmarking is possible.

Now, we demonstrate how a joint measurement on multiple copies of $\rho$ reduces sampling overhead. Note that while no stabilizer state is orthogonal to the single-copy magic state $\ket{T}$, there exists a stabilizer state orthogonal to two copies of it, i.e., $\ket{T}^{\otimes 2}$.  Specifically, the following stabilizer state, $\frac{1}{\sqrt{2}}(\ket{01}-\ket{10})={\rm CNOT}_{12}(H\otimes I)\ket{11}$, constitutes a singlet state and thus antisymmetric under SWAP operation. Meanwhile, $\ket{T}^{\otimes{2}}$ is symmetric under SWAP, so $\frac{1}{\sqrt{2}}(\ket{01}-\ket{10})$ has zero overlap with $\ket{T}^{\otimes 2}$.

This motivates us to consider the \textit{Bell measurement scheme}. For an even number $N$, the Bell measurement scheme consists of $N/2$ rounds of measurement consuming $N$ copies of $\rho$ in total. In each round, two copies of $\rho$ are prepared, and we convert them to twirled states $\rho'$. We then apply ${\rm CNOT}_{1,2}(H\otimes I)$, followed by computational basis measurements on all qubits [Fig.~\ref{fig:fig1}(b)]. Therefore, we get an outcome $x \in \{0,1\}^2$ with probability of $\langle{\rm Bell}_{x}|\rho'^{\otimes 2}|{\rm Bell}_{x}\rangle$, where we denote $\ket{{\rm Bell}_{x}}={\rm CNOT}_{12}(H\otimes I)\ket{x}$. By the uniqueness of the sinlget state, we have $\ket{{\rm Bell_{11}}} \propto \frac{1}{\sqrt{2}}(\ket{TT^\perp}-\ket{T^\perp T})$. Therefore, $\langle{\rm Bell}_{11}|\rho'^{\otimes 2}|{\rm Bell}_{11}\rangle=\epsilon(1-\epsilon)=\Theta(\epsilon)$. This enables sampling from ${\rm Bern}(\Theta(\epsilon))$, and thus we can achieve $N=O(1/\epsilon)$ sample complexity for estimating $\epsilon$ to constant multiplicative precision.

It is noteworthy that the Bell measurement scheme does not require two copies of input magic states to be identical. When the two copies are $\rho_1$ and $\rho_2$ with infidelities $\epsilon_1$ and $\epsilon_2$, respectively, the Bell measurement scheme estimates the average infidelity $(\epsilon_1+\epsilon_2)/2$. Also, we do not need to apply twirling operations to the both copies; applying it to one copy suffices to achieve the Bell measurement scheme (see Appendix~\ref{sec:swap_test}).

We remark that, beyond this specific application, the Bell measurement is broadly utilized across quantum resource theories regarding magic states: it enables efficient certification and quantification of the nonstabilizerness of quantum states~\cite{haung2023_ScalableMagic, haug2025efficientwitnessingtestingmagic}, and plays a central role in tailoring and characterizing resource states for measurement-based quantum computation~\cite{park2025efficientnoisetailoringdetection}.

\section{Main results}

Now, we formalize the limitations of single-copy benchmarking schemes and establish conditions under which benchmarking can be performed efficiently. The proofs of the main theorems are provided in Appendix~\ref{sec:proofs}.

We first establish a fundamental constraint on single-copy benchmarking, showing that for certain magic states, any benchmarking scheme requires $\Omega(1/\epsilon^2)$ samples to achieve multiplicative precision.

\begin{theorem}
    Let $\ket{\psi}$ be an $n$-qubit state having no orthogonal stabilizer state. Consider any single-copy benchmarking scheme producing an estimator $\hat{\epsilon}$ of the infidelity $\epsilon = 1 - \langle\psi|\rho|\psi\rangle$, satisfying the multiplicative error $|\hat{\epsilon}-\epsilon| \leq r\epsilon$ with probability at least $2/3$, for an arbitrary input state $\rho$. Such a protocol necessarily requires $N = \Omega\left(1/r^2\epsilon^2\right)$ copies of $\rho$.
    \label{thm:no-go}
\end{theorem}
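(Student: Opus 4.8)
The plan is to reduce the estimation guarantee to a two-point hypothesis test and then bound how much distinguishing information a single measurement round can provide, with the no-orthogonal-stabilizer hypothesis entering precisely in the per-round bound. First I would fix a pure state $\ket{\psi^\perp}$ orthogonal to $\ket\psi$ and build two candidate inputs
\[
\rho_b = (1-\epsilon_b)\ketbra{\psi}{\psi} + \epsilon_b\ketbra{\psi^\perp}{\psi^\perp}, \quad b\in\{0,1\},
\]
with infidelities $\epsilon_0=\epsilon$ and $\epsilon_1 = \epsilon\frac{1+r}{1-r}$, so that $\epsilon_1-\epsilon_0 = \Theta(r\epsilon)$ and, crucially, the multiplicative windows $[\epsilon_0(1-r),\epsilon_0(1+r)]$ and $[\epsilon_1(1-r),\epsilon_1(1+r)]$ are disjoint. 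A single protocol meeting the theorem's guarantee must succeed on \emph{both} inputs, so thresholding $\hat\epsilon$ at the gap yields a test distinguishing $\rho_0$ from $\rho_1$ with success probability at least $2/3$; by Le Cam's two-point argument this forces the $N$-round output distributions $P_0^{(N)},P_1^{(N)}$ to satisfy $\mathrm{TV}(P_0^{(N)},P_1^{(N)})\ge 1/3$.

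The heart of the argument is that each round carries only $O(r^2\epsilon^2)$ of distinguishing information. Writing $\ket{s_x}=U_i^\dagger\ket{x}$ and using $P_i(x)=\langle s_x|\rho|s_x\rangle$ from the setup, the two single-round distributions are $P_b(x)=a_x+\epsilon_b(c_x-a_x)$ with $a_x=|\langle s_x|\psi\rangle|^2$ and $c_x=|\langle s_x|\psi^\perp\rangle|^2$, so $P_1(x)-P_0(x)=(\epsilon_1-\epsilon_0)(c_x-a_x)$. Here the hypothesis is decisive: since $\ket\psi$ has no orthogonal stabilizer state and $\mathrm{Stab}_n$ is finite, $a_{\min}:=\min_{\ket s\in\mathrm{Stab}_n}|\langle s|\psi\rangle|^2>0$ is a positive constant, and this guarantees $P_0(x)\ge a_{\min}(1-\epsilon_0)\ge a_{\min}/2$ for $\epsilon_0\le 1/2$. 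I would then bound the chi-squared divergence uniformly over the choice of $U_i$,
\[
\chi^2(P_1\|P_0)=(\epsilon_1-\epsilon_0)^2\sum_x\frac{(c_x-a_x)^2}{P_0(x)}\le \frac{2(\epsilon_1-\epsilon_0)^2}{a_{\min}}\sum_x(c_x-a_x)^2 = O(r^2\epsilon^2),
\]
using $\sum_x a_x=\sum_x c_x=1$ to get $\sum_x(c_x-a_x)^2\le 2$.

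To accumulate over rounds while respecting adaptivity, I would invoke the KL chain rule: $D(P_1^{(N)}\|P_0^{(N)})$ equals the sum over rounds of the conditional KL divergences, and conditioning on the past outcomes fixes the chosen Clifford $U_i$, so each conditional term is a single-round KL bounded (via $D\le\chi^2$) by the uniform estimate above. Hence $D(P_1^{(N)}\|P_0^{(N)})\le N\cdot O(r^2\epsilon^2)$. Combining Pinsker's inequality $\mathrm{TV}\le\sqrt{D/2}$ with the $\Omega(1)$ lower bound on $\mathrm{TV}$ then gives $N=\Omega(1/r^2\epsilon^2)$.

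I expect the main subtlety to be twofold. First, the reduction must faithfully turn a multiplicative-precision estimation requirement into a genuine distinguishing task, which is exactly what the disjoint-window choice of $(\epsilon_0,\epsilon_1)$ secures. Second—and this is the real crux—the per-round bound is where the no-orthogonal-stabilizer assumption is indispensable: it keeps every denominator $P_0(x)$ bounded below by a constant, so the per-round divergence scales as $(\epsilon_1-\epsilon_0)^2=\Theta(r^2\epsilon^2)$. Were some stabilizer orthogonal to $\ket\psi$, the corresponding $a_x$ would vanish, the denominator would collapse to $\Theta(\epsilon)$, and the per-round bound would degrade to $O(r^2\epsilon)$, yielding only the efficient $\Omega(1/r^2\epsilon)$ scaling—consistent with the $O(1/\epsilon)$ schemes constructed later in the paper. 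The adaptivity, by contrast, is handled cleanly by the chain rule precisely because the per-round estimate holds uniformly over all measurement choices.
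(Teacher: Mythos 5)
Your proposal is correct and takes essentially the same route as the paper's proof: a two-point discrimination argument (Le Cam), a per-round divergence bound of $O(r^2\epsilon^2)$ in which the no-orthogonal-stabilizer hypothesis supplies the constant lower bound on every outcome probability, the KL chain rule to handle adaptivity, and Pinsker's inequality to close the loop --- your only cosmetic deviations are using a pure $\ket{\psi^\perp}$ where the paper uses the maximally mixed orthogonal component $\frac{\epsilon}{2^n-1}(I-\ketbra{\psi}{\psi})$, and bounding via $\chi^2$ rather than expanding the KL directly. One small nit: with $\epsilon_1=\epsilon\frac{1+r}{1-r}$ the two windows $[\epsilon_0(1-r),\epsilon_0(1+r)]$ and $[\epsilon_1(1-r),\epsilon_1(1+r)]$ are tangent at $\epsilon(1+r)$ rather than strictly disjoint (an estimator putting mass exactly on that point evades your threshold test), so nudge $\epsilon_1$ slightly upward, as in the paper's choice of $\epsilon$ versus $(1-2r)\epsilon$, which separates the windows strictly.
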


Notably, no single-qubit non-stabilizer state $\ket{\psi}$ has an orthogonal stabilizer state because if such a stabilizer state existed, $\ket{\psi}$ must be a stabilizer state itself. Consequently, Theorem~\ref{thm:no-go} implies that any single-copy benchmarking scheme for a single-qubit magic state always suffers from $\Omega(1/\epsilon^2)$ sample complexity.

To overcome the constraints imposed by Theorem~\ref{thm:no-go}, we explore two approaches: (1) performing a joint measurement on multiple copies of $\rho$, and (2) considering multi-qubit magic states. A key tool in both approaches is state twirling, which simplifies the structure of noisy magic states while preserving their fidelity.

\begin{definition}
    Given an $n$-qubit magic state $\ket{\psi}$, we define its \emph{twirling group} as $G_{\psi} = \{U \in \mathcal{C}_n : U \ket{\psi} \propto \ket{\psi}\}$.
    The \emph{twirling operation} with respect to $\ket{\psi}$ is defined as the application of $U$ uniformly sampled from $G_{\psi}$, leading to the transformation
    \begin{equation}
        \rho \mapsto \rho'=\frac{1}{|G_{\psi}|}\sum_{U \in G_{\psi}} U\rho U^{\dagger}.
    \end{equation}
\end{definition}

Note that $G_{\psi}$ forms a subgroup of $\mathcal{C}_n$ by its definition. Since $\langle \psi| \rho | \psi \rangle = \langle \psi|U \rho U^\dagger| \psi \rangle$ for all $U \in G_{\psi}$, fidelity is invariant under the twirling operation. By the definition of $G_{\psi}$, $\mathcal{H}_0 = {\rm span}\{\ket{\psi}\}$ is an irreducible representation of $G_{\psi}$. (See Appendix~\ref{sec:reprethry} for a self-contained review of representation theory.) We decompose the Hilbert space as $\mathcal{H}=\mathcal{H}_0 \oplus \mathcal{H}_1 \oplus\dots \oplus \mathcal{H}_k$, where $\mathcal{H}_1, \dots, \mathcal{H}_k$ are the other irreducible representations of $G_{\psi}$. Then, for $\rho' = \frac{1}{|G_\psi|}\sum_{U\in G_\psi}U\rho U^\dagger$, we decompose it as
\begin{equation}
    \rho' = \sum_{i,j=0}^k\rho'_{ij},
\end{equation}
where $\rho'_{ij} = \Pi_i \rho' \Pi_j$. Here, $\Pi_j$ is the projection onto $\mathcal{H}_j$ for $j=0,1,\dots,k$. Since $\mathcal{H}_j$'s are irreducible representation, $V\Pi_i = \Pi_i V$ for $i=0,1,\dots, k$ and all $V \in G_\psi$. Therefore, for all $V \in G_\psi$,
\begin{align}
    V\rho'_{ij}
    &= \frac{1}{|G_\psi|}\sum_{U\in G_\psi}\Pi_iVU\rho U^\dagger \Pi_j\\
    &= \frac{1}{|G_\psi|}\sum_{U\in G_\psi}\Pi_iU\rho (U)^\dagger V\Pi_j \\
    &= \rho'_{ij}V,
\end{align}
and thus each $\rho'_{ij}$ is a $G_\psi$-linear map from $\mathcal{H}_j$ to $\mathcal{H}_i$. Due to Schur's lemma, we have for all $i=0,1,\dots, k$, $\rho'_{ii} = \lambda_i\Pi_i$ for some $\lambda_i \in \mathbb{R}$, and $\rho'_{ij}=0$ if $\mathcal{H}_i \not\simeq \mathcal{H}_j$ which denotes that $\mathcal{H}_i$ and $\mathcal{H}_j$ are not equivalent representations of $G_\psi$.

State twirling is useful especially when $\mathcal{H}_0 \not\simeq \mathcal{H}_j$ for all $j=1,\dots, k$. In this case, the state after twirling becomes
\begin{equation}
    \rho'= (1-\epsilon)\Pi_0 \oplus \sum_{i,j=1}^k \rho'_{ij}.
\end{equation}
Note that since $\rho'$ is positive semi-definite, $\sum_{i,j=1}^k \rho'_{ij}$ also positive semi-definite. Therefore, for some density matrix $\sigma$ of the subspace $\mathcal{H}_1 \oplus \dots \oplus \mathcal{H}_k$,
\begin{equation}
    \rho' = (1-\epsilon)\ketbra{\psi}{\psi} \oplus \epsilon \sigma.
    \label{eq:twirled_state}
\end{equation}

With the assumption that $\mathcal{H}_0 \not\simeq \mathcal{H}_j$ for all $j = 1, \dots, k$, we now show how joint measurements on two copies of $\rho$ overcome the limitations established in Theorem~\ref{thm:no-go}. The Bell measurement scheme can be generalized to benchmark multi-qubit magic states: preparing two copies of $\rho'$, we apply $\prod_{i=1}^{n}{\rm CNOT}_{i,i+n}(H_{i}\otimes I_{i+n})$, followed by computational basis measurements on all qubits. Crucially, this circuit effectively realizes the SWAP test~\cite{buhrmanQuantumFingerprinting2001}, thereby estimating purity of $\rho'$. Specifically, denoting $P_{\rm odd}$ as the probability of obtaining a measurement outcome $x \in \{0,1\}^{2n}$ such that $\sum_{i=1}^n x_i \cdot x_{i+n}$ is an odd number, we have
\begin{equation}
    P_{\rm odd} = \frac{1-\Tr\left[\rho'^2\right]}{2}.
\end{equation}
(See Appendix~\ref{sec:swap_test} for the detailed derivation.) Meanwhile, Eq.~\eqref{eq:twirled_state} shows that the purity of $\rho'$ is $\Tr[\rho'^2]=1-2\epsilon + O(\epsilon^2)$, leading to $P_{\rm odd}=\epsilon+O(\epsilon^2)$. Therefore, applying the Bell measurement scheme on the twirled state $\rho'$ enables sampling from ${\rm Bern}(\epsilon + O(\epsilon^2))$ and thus achieving $O(1/\epsilon)$ sample complexity. More precisely, applying Chernoff bound to the resulting binomial distribution yields:

\begin{theorem} 
    For a $n$-qubit state $\ket{\psi} \in \mathcal{H}$, denote $\mathcal{H}=\mathcal{H}_0\oplus \dots \oplus \mathcal{H}_k$, where $\mathcal{H}_j$ are irreducible representations of $G_\psi$ and $\mathcal{H}_0={\rm span}\{\ket{\psi}\}$. If $\mathcal{H}_{0} \not\simeq \mathcal{H}_j$ for all $j=1,\dots,k$, then for $0<r<1$ and for any state $\rho$ with infidelity $\epsilon$, Bell measurement scheme outputs an estimator $\hat{\epsilon}$ satisfying $|\epsilon-\hat{\epsilon}|\le (r+O(\epsilon))\epsilon$ with probability at least $1-\delta$, using $N=O(\log(1/\delta)\cdot1/r^2\epsilon)$ copies of $\rho$.
    \label{thm:suff_two-copy}
\end{theorem}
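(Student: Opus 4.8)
The plan is to establish the result in three stages: first, to derive the precise statistics of the Bell measurement outcomes under the twirled state; second, to apply a Chernoff-type concentration bound to control the deviation of the empirical estimator; and third, to solve for the sample count $N$ needed to guarantee the stated multiplicative precision with confidence $1-\delta$.

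\textbf{Step 1: The outcome distribution.} I would start from the structural result already established in the excerpt, namely that under the hypothesis $\mathcal{H}_0 \not\simeq \mathcal{H}_j$ for all $j \geq 1$, the twirled state takes the form $\rho' = (1-\epsilon)\ketbra{\psi}{\psi} \oplus \epsilon\sigma$ of Eq.~\eqref{eq:twirled_state}. Computing the purity gives $\Tr[\rho'^2] = (1-\epsilon)^2 + \epsilon^2\Tr[\sigma^2]$. Since $0 \le \Tr[\sigma^2] \le 1$, this is $1 - 2\epsilon + \epsilon^2(1 + \Tr[\sigma^2]) = 1 - 2\epsilon + O(\epsilon^2)$. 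Combined with the SWAP-test identity $P_{\rm odd} = (1 - \Tr[\rho'^2])/2$, the parity-odd outcome probability is $p := P_{\rm odd} = \epsilon - \epsilon^2(1 + \Tr[\sigma^2])/2 = \epsilon + O(\epsilon^2)$. Thus each of the $N/2$ rounds yields an i.i.d.\ Bernoulli$(p)$ bit, and I define the estimator $\hat{\epsilon}$ as the empirical frequency of the parity-odd event (the $O(\epsilon^2)$ systematic bias is exactly what is absorbed into the $O(\epsilon)$ term in the stated error $(r + O(\epsilon))\epsilon$).

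\textbf{Step 2: Concentration.} Let $m = N/2$ be the number of rounds and $\hat{p}$ the empirical mean of the Bernoulli samples, so $\hat{\epsilon} = \hat{p}$ and $\mathbb{E}[\hat{p}] = p$. I would apply the multiplicative Chernoff bound, which states that for a sum of i.i.d.\ Bernoulli$(p)$ variables with empirical mean $\hat{p}$,
\begin{equation}
    \Pr\!\left[|\hat{p} - p| \geq t\, p\right] \leq 2\exp\!\left(-\frac{m p\, t^2}{3}\right)
\end{equation}
for $0 < t < 1$. Setting the right-hand side to $\delta$ requires $m p\, t^2 / 3 \geq \ln(2/\delta)$, i.e.\ $m \geq 3\ln(2/\delta)/(p\, t^2)$. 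Here I would take $t = r$, so that $|\hat{p} - p| \le r p$ holds with probability at least $1 - \delta$; since $p = \epsilon + O(\epsilon^2)$, translating the bound from deviation around $p$ to deviation around $\epsilon$ contributes the additional $O(\epsilon)\epsilon$ term, yielding $|\hat{\epsilon} - \epsilon| \le (r + O(\epsilon))\epsilon$.

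\textbf{Step 3: Sample complexity.} Substituting $p = \Theta(\epsilon)$ into $m \geq 3\ln(2/\delta)/(p\, r^2)$ gives $m = O(\log(1/\delta)/(r^2\epsilon))$, and since $N = 2m$, the total number of copies consumed is $N = O(\log(1/\delta)\cdot 1/r^2\epsilon)$, matching the claim. The main subtlety I anticipate is bookkeeping the $O(\epsilon^2)$ purity correction carefully: one must verify that the systematic shift between $p$ and $\epsilon$ is genuinely $O(\epsilon^2)$ (uniformly in $\sigma$, which holds because $\Tr[\sigma^2] \le 1$) and that it folds cleanly into the $(r + O(\epsilon))\epsilon$ error budget rather than degrading the leading $1/(r^2\epsilon)$ scaling. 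A secondary point is ensuring the Chernoff bound is applied in its multiplicative form --- an additive Hoeffding bound would only give $O(1/\epsilon^2)$ and miss the advantage entirely, so the argument hinges on exploiting that the variance of a Bernoulli$(\Theta(\epsilon))$ variable is itself $\Theta(\epsilon)$ rather than constant.
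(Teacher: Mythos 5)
Your proposal is correct and follows essentially the same route as the paper's proof: twirl to obtain $\rho' = (1-\epsilon)\ketbra{\psi}{\psi} \oplus \epsilon\sigma$, use the SWAP-test identity $P_{\rm odd} = (1-\Tr[\rho'^2])/2 = \epsilon - \epsilon^2(1+\Tr[\sigma^2])/2$ to get i.i.d.\ ${\rm Bern}(\epsilon(1+O(\epsilon)))$ samples, then apply the multiplicative Chernoff bound with relative deviation $r$ and absorb the $O(\epsilon^2)$ bias via the triangle inequality into the $(r+O(\epsilon))\epsilon$ error budget. Your closing remarks (uniformity of the bias in $\sigma$ via $\Tr[\sigma^2]\le 1$, and the necessity of the multiplicative rather than additive concentration bound) are accurate observations about why the argument works, and you are in fact slightly more careful than the paper about distinguishing the number of rounds $N/2$ from the number of consumed copies $N$.
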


We remark that Theorem~\ref{thm:suff_two-copy} applies to a wide range of magic states, including both single-qubit magic states: $\ket{T}$ and $\ket{H}=\sqrt{1/2}(\ket{0}+e^{i\pi/4}\ket{1})$, and multi-qubit magic states:
\begin{align}
    \ket{\rm CZ}&=\frac{1}{\sqrt{3}}(\ket{00}+\ket{01}+\ket{10}),\\
    \ket{\rm CCZ}&=\frac{1}{2}(\ket{00+}+\ket{01+}+\ket{10+}+\ket{11-}),
\end{align}
where we denote $\ket{\pm}=(\ket{0}\pm\ket{1})/\sqrt{2}$ (see Appendix~\ref{sec:twirling}).

While Bell measurements on two copies of $\rho$ provide an efficient benchmarking protocol, preparing two high-fidelity magic states simultaneously might be challenging, especially when the magic state distillation has a low success probability and there is no reliable way to store the distilled states. In such cases, single-copy benchmarking schemes are desirable, and we find that leveraging multi-qubit magic states enables efficient single-copy benchmarking protocol. The key idea is that certain multi-qubit magic states possess orthogonal stabilizer states whose overlaps with the twirled state $\rho'$ can specify the infidelity. To be specific, as seen above, for $i=0,1,\dots, k$, $\rho_{ii} = \lambda_i\Pi_i$ while $\lambda_0 = 1-\epsilon$. Since $\rho'$ has a unit trace, we have
\begin{equation}
    \epsilon=\sum_{i=1}^k \lambda_i \cdot \dim\mathcal{H}_i.
\end{equation}
Therefore, estimating $\lambda_1,\dots,\lambda_k$ enables estimation of $\epsilon$. For $j=1,\dots, k$, if each $\mathcal{H}_j$ contains a stabilizer state $\ket{s_j}$, then the overlap $\langle s_j|\rho'|s_j\rangle=\lambda_j$ allows efficient estimation of $\lambda_j$, enabling an efficient single-copy benchmarking protocol.

\begin{theorem} 
    For a $n$-qubit state $\ket{\psi} \in \mathcal{H}$, with $n$ being constant larger than 1, denote $\mathcal{H}=\mathcal{H}_0\oplus \mathcal{H}_1\oplus \dots \oplus \mathcal{H}_k$, where $\mathcal{H}_j$ are irreducible representations of $G_\psi$ and $\mathcal{H}_0={\rm span}\{\ket{\psi}\}$. Suppose there exist stabilizer states $\ket{s_1}\in\mathcal{H}_1,\dots,\ket{s_k}\in\mathcal{H}_k$. Then, there is a single-copy scheme that, for $0<r<1$ and any state $\rho$ with infidelity $\epsilon$, outputs an estimator $\hat{\epsilon}$ satisfying $|\epsilon-\hat{\epsilon}|\le r\epsilon$ with probability at least $1-\delta$, using $N=O(\log(1/\delta)\cdot1/r^2\epsilon)$ copies of $\rho$.
    \label{thm:suff_single-copy}
\end{theorem}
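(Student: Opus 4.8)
The plan is to construct an explicit single-copy scheme that estimates each $\lambda_j$ separately and then combines them via the identity $\epsilon = \sum_{j=1}^k \lambda_j \dim\mathcal{H}_j$. First I would partition the $N$ rounds into $k$ equal batches of size $N/k$, dedicating the $j$-th batch to estimating $\lambda_j$. Because $n$ is a constant, $k$ is a constant, so splitting into $k$ batches costs only a constant factor. Within the $j$-th batch, in each round I prepare $\rho$, twirl it to obtain $\rho'$ (recall twirling preserves fidelity), and measure the orthogonal stabilizer state $\ket{s_j}$: since $\ket{s_j}=U_j\ket{0^n}$ for some Clifford $U_j\in\mathcal{C}_n$, I apply $U_j^\dagger$ and measure in the computational basis, recording the outcome $x=0^n$. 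Using the earlier observation $P_i(x)=\langle x|U_i\rho'U_i^\dagger|x\rangle$ together with $\ket{s_j}\in\mathcal{H}_j$ and $\rho'_{jj}=\lambda_j\Pi_j$, the probability of the outcome $0^n$ is exactly $\langle s_j|\rho'|s_j\rangle = \lambda_j\langle s_j|\Pi_j|s_j\rangle = \lambda_j$, so each round in batch $j$ draws an independent ${\rm Bern}(\lambda_j)$ sample.

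Next I would form the empirical mean $\hat\lambda_j$ over the $N/k$ samples in batch $j$ and set $\hat\epsilon = \sum_{j=1}^k \hat\lambda_j\dim\mathcal{H}_j$. The key quantitative point is that each $\lambda_j = \Theta(\epsilon)$ in the sense that $0\le\lambda_j\dim\mathcal{H}_j\le\epsilon$, so each $\lambda_j\le\epsilon$; hence the Bernoulli samples have success probability $O(\epsilon)$, which is precisely the small-offset regime identified in the main text as giving $O(1/\epsilon)$ scaling. I would then invoke a multiplicative Chernoff bound on each binomial count: to guarantee $|\hat\lambda_j-\lambda_j|\le r\lambda_j$ (or an appropriately scaled additive bound when $\lambda_j$ is a small fraction of $\epsilon$) with failure probability at most $\delta/k$, it suffices to take $N/k = O(\log(k/\delta)/(r^2\lambda_j))$. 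Since $\lambda_j\dim\mathcal{H}_j\le\epsilon$ and there are only constantly many batches, the worst case is controlled by $N=O(\log(1/\delta)\cdot 1/(r^2\epsilon))$ total copies.

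To assemble the final error bound I would apply a union bound over the $k$ batches so that all $\hat\lambda_j$ are simultaneously accurate with probability at least $1-\delta$, and then propagate the per-batch errors through the linear combination: $|\hat\epsilon-\epsilon| = |\sum_j(\hat\lambda_j-\lambda_j)\dim\mathcal{H}_j| \le \sum_j r\lambda_j\dim\mathcal{H}_j = r\sum_j\lambda_j\dim\mathcal{H}_j = r\epsilon$, using the identity for $\epsilon$ again in the last step. This is the clean feature of the scheme: because the errors enter with the same weights $\dim\mathcal{H}_i$ that define $\epsilon$, a relative error $r$ on each $\lambda_j$ yields exactly a relative error $r$ on $\epsilon$, with no $O(\epsilon)$ correction term (unlike the Bell-measurement Theorem~\ref{thm:suff_two-copy}).

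The main obstacle I anticipate is the Chernoff bookkeeping when the individual $\lambda_j$ are very unequal. If some $\lambda_j$ is a tiny fraction of $\epsilon$ while another carries almost all of it, then naively demanding a relative error $r$ on the smallest $\lambda_j$ could inflate the sample count; I would need to allocate samples so that the \emph{additive} error on each term $\lambda_j\dim\mathcal{H}_j$ is at most $r\epsilon/k$, which a multiplicative Chernoff bound delivers with $N/k=O(\log(k/\delta)/(r^2\lambda_j))$ samples—and crucially, even as $\lambda_j\to 0$ the required count does not blow up because the additive precision target $r\epsilon/k$ scales with $\epsilon$, not with $\lambda_j$. Making this allocation argument precise, and confirming that the constant $k$ (equivalently $\dim\mathcal{H}$, which is $2^n$ with $n$ constant) is genuinely absorbed into the $O(\cdot)$, is the part that requires care; everything else reduces to the orthogonal-stabilizer overlap computation and a standard concentration inequality.
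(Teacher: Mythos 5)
Your scheme is the same as the paper's: rotate each stabilizer state $\ket{s_j}$ to the computational basis to draw ${\rm Bern}(\lambda_j)$ samples from the twirled state, estimate each $\lambda_j$, combine via $\epsilon=\sum_j\lambda_j\dim\mathcal{H}_j$, and union-bound over the constantly many classes. You also correctly diagnose the one real subtlety (wildly unequal $\lambda_j$'s) and the correct fix, namely targeting an \emph{additive} error of order $r\epsilon/(k\dim\mathcal{H}_j)$ on each $\hat\lambda_j$ rather than a relative error $r\lambda_j$. Note, however, that your middle paragraph's chain $|\hat\epsilon-\epsilon|\le\sum_j r\lambda_j\dim\mathcal{H}_j=r\epsilon$ presumes the per-$\lambda_j$ relative guarantee, which is unattainable with $O(1/\epsilon)$ samples when some $\lambda_j\ll\epsilon$ (it fails outright for $\lambda_j$ arbitrarily small but nonzero); and $\lambda_j$ is not $\Theta(\epsilon)$ --- only the one-sided bound $\lambda_j\le\lambda_j\dim\mathcal{H}_j\le\epsilon$ holds.

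The genuine gap is in the concentration step that is supposed to deliver the additive guarantee. You assert that a multiplicative Chernoff bound gives the additive target with $N/k=O(\log(k/\delta)/(r^2\lambda_j))$ samples and simultaneously that the count ``does not blow up as $\lambda_j\to0$''; these two statements contradict each other, and neither is the right resolution. The standard bound $\mathbb{P}[\hat q-q\ge\delta q]<e^{-Nq\delta^2/3}$ is valid only for $0<\delta<1$, whereas the additive target $t=\frac{r\epsilon}{k\dim\mathcal{H}_j}$ corresponds to a relative deviation $\delta=t/\lambda_j$ exceeding $1$ precisely in the problematic small-$\lambda_j$ regime. What is needed --- and what the paper supplies as Lemma~\ref{lem:chernoff_2}, derived from $\frac{e^\delta}{(1+\delta)^{(1+\delta)}}\le e^{-\delta^2/(2+\delta)}$ and valid for all $\delta>0$ (a Bernstein-type inequality would serve equally well) --- is a tail bound whose exponent behaves like $\frac{Nt^2}{2\lambda_j+t}$. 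Since $\lambda_j\le\epsilon$ and $t\le\epsilon$, this exponent is at least $\left(\frac{r}{k\dim\mathcal{H}_j}\right)^2\frac{\epsilon N}{3}$ \emph{uniformly} in $\lambda_j$, giving $N_j=O(\log(k/\delta)/(r^2\epsilon))$ with the constants $k^2\dim^2\mathcal{H}_j$ absorbed, exactly as you intended. With that lemma inserted (and the lower tail handled by the $e^{-Nq\delta^2/2}$ bound, which does hold for all $\delta>0$), your argument matches the paper's essentially line for line.
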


\begin{figure*}[t]
    \centering
    \includegraphics[width=\linewidth]{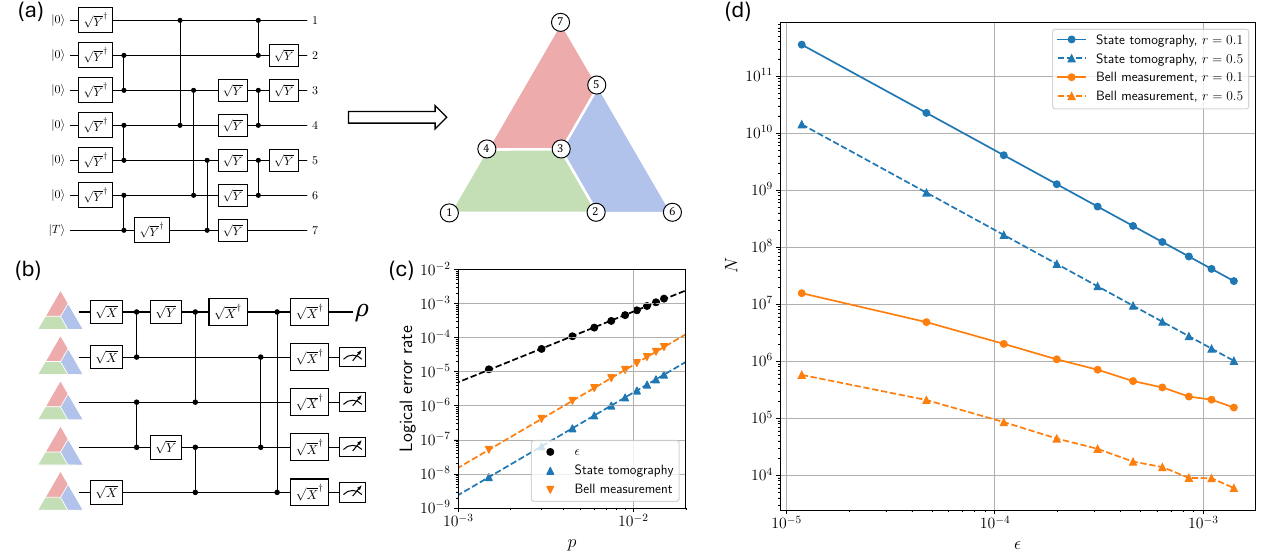}
    \caption{\label{fig:fig2}(a) Circuit for encoding a physical $\ket{T}$ state into the [[7, 1, 3]] color code, generating a noisy logical magic state. (b) Circuit for performing 5-to-1 magic-state distillation (MSD) at the logical level, resulting in a higher-fidelity logical state $\rho$. Both circuits in (a) and (b) follow the methodology of Ref.~\cite{rodriguezExperimentalDemonstrationLogical2024}. (c) Infidelity $\epsilon$ of the state $\rho$ and the logical error rates from benchmarking circuits. The dashed lines (black, blue, and orange) indicate power-law fits, given by $y=8.11x^{2.07}$, $y=2.52x^{3.01}$, and $y=16.8x^{3.01}$, respectively. (d) Sampling overhead $N$ for the standard state tomography and the Bell measurement scheme with multiplicative accuracies $r = 0.1$ and $r = 0.5$, with $68\%$ confidence.}
\end{figure*}

Important applications of Theorem~\ref{thm:suff_single-copy} are benchmarking $\ket{\rm CZ}$ and $\ket{\rm CCZ}$ states. First, $\ket{\rm CZ}$ state decomposes the Hilbert space as $\mathcal{H}=\mathcal{H}_0 \oplus \mathcal{H}_1 \oplus \mathcal{H}_2$ with the projectors $\Pi_1 = \ketbra{11}{11}$ and $\Pi_2 = I-\ketbra{\rm CZ}{\rm CZ}-\ketbra{11}{11}$. Then, the corresponding stabilizer states $\ket{11} \in \mathcal{H}_1$ and $(\ket{01} - \ket{10})/\sqrt{2} \in \mathcal{H}_2$ allow efficient benchmarking. Similarly, $\ket{\rm CCZ}$ state decomposes the Hilbert space as $\mathcal{H}=\mathcal{H}_0 \oplus \mathcal{H}_1$ with $\Pi_1 = I-\ketbra{\rm CCZ}{\rm CCZ}$. The stabilizer state $\ket{00+} \in \mathcal{H}_1$ enables an efficient single-copy fidelity estimation (see Appendix~\ref{sec:twirling} for a detailed analysis). For an arbitrary multi-qubit magic state, we provide an explicit algorithm to check whether the condition of Theorem~\ref{thm:suff_single-copy} is satisfied (see Appendix~\ref{sec:twirling}).

The benchmarking schemes of Theorem~\ref{thm:suff_two-copy} and Theorem~\ref{thm:suff_single-copy} achieve the sample complexity $O(1/r^2\epsilon)$ with a fixed success probability $1-\delta$. We remark that this scaling is optimal in the sense that even when allowing arbitrary joint measurements on multiple copies of $\rho$, any benchmarking scheme must consume $\Omega(1/r^2\epsilon)$ samples:

\begin{theorem}
    Let $\ket{\psi}$ be an $n$-qubit state. Suppose there exists a benchmarking scheme that allows arbitrary joint measurements on multiple copies of $\rho$ and, for any noisy state $\rho$ with infidelity $\epsilon$, outputs an estimator $\hat{\epsilon}$ satisfying $|\hat{\epsilon} - \epsilon| \leq r\epsilon$ with probability at least $2/3$, while consuming $N$ copies of $\rho$ in total. Then, $N = \Omega(1/r^2\epsilon)$.
    \label{thm:optimality}
\end{theorem}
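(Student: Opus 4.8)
The plan is to reduce the benchmarking task to a distinguishing problem between two nearby states and then invoke a quantum information-theoretic lower bound on the number of copies needed to tell them apart. The key observation is that an estimator achieving multiplicative precision $|\hat\epsilon - \epsilon| \le r\epsilon$ with probability $\ge 2/3$ must, in particular, be able to distinguish an input state with infidelity $\epsilon$ from one with infidelity $\epsilon' = (1+3r)\epsilon$, since the corresponding confidence intervals $[(1-r)\epsilon,(1+r)\epsilon]$ and $[(1-r)\epsilon',(1+r)\epsilon']$ are disjoint for small enough constants. Thus any successful benchmarking scheme yields a two-outcome measurement on $N$ copies that distinguishes $\rho^{\otimes N}$ from $(\rho')^{\otimes N}$ with bias bounded away from $1/2$.

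First I would fix two explicit single-qubit states realizing these two infidelities. Take $\rho = (1-\epsilon)\ketbra{\psi}{\psi} + \epsilon\ketbra{\psi^\perp}{\psi^\perp}$ and $\rho' = (1-\epsilon')\ketbra{\psi}{\psi} + \epsilon'\ketbra{\psi^\perp}{\psi^\perp}$, where $\ket{\psi^\perp}$ is orthogonal to $\ket{\psi}$ (for multi-qubit $\ket{\psi}$ one may embed this two-dimensional problem in the span of $\ket{\psi}$ and any fixed orthogonal state, leaving the rest of the Hilbert space untouched). Both have exactly the claimed infidelities with respect to $\ket{\psi}$, and they are diagonal in the same basis, so the problem reduces to distinguishing two classical Bernoulli distributions $\mathrm{Bern}(\epsilon)$ and $\mathrm{Bern}(\epsilon')$ at the level of a single copy — but crucially we must bound the cost even when \emph{arbitrary joint} (entangled, collective) measurements on all $N$ copies are allowed.

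Next I would invoke the standard necessary condition for distinguishability: any measurement distinguishing $\rho^{\otimes N}$ from $(\rho')^{\otimes N}$ with success probability $\ge 2/3$ requires the trace distance (equivalently, a lower bound via fidelity) between the two $N$-fold states to be bounded below by a constant. Using the multiplicativity of fidelity under tensor products, $F(\rho^{\otimes N}, \rho'^{\otimes N}) = F(\rho,\rho')^N$, and the Fuchs–van de Graaf inequalities relating trace distance to fidelity, the constant-distinguishability requirement forces $F(\rho,\rho')^N$ to be bounded away from $1$, hence $N \ge \Omega\!\left(1/(1 - F(\rho,\rho'))\right)$. A short computation of the single-copy fidelity between two Bernoulli-type diagonal states gives $1 - F(\rho,\rho') = \Theta\!\big((\sqrt{\epsilon'}-\sqrt{\epsilon})^2\big) = \Theta\!\big((\epsilon'-\epsilon)^2/\epsilon\big) = \Theta(r^2\epsilon)$, where the last step uses $\epsilon' - \epsilon = 3r\epsilon$. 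Substituting yields $N = \Omega(1/r^2\epsilon)$, as claimed.

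The main obstacle I anticipate is not the fidelity computation but justifying the reduction rigorously when collective measurements are permitted: one must argue that the benchmarking scheme's overall measure-and-estimate procedure genuinely induces a valid binary POVM on $\rho^{\otimes N}$ versus $\rho'^{\otimes N}$ with the stated bias, and that no adaptivity or ancilla loophole circumvents the tensor-power fidelity bound. This is handled cleanly because the fidelity argument makes no assumption about the measurement — it bounds the distinguishability of the global states themselves — so \emph{any} decision rule, however entangled or adaptive, is subject to the same $\Omega(1/r^2\epsilon)$ floor. The only care needed is in the constant bookkeeping: choosing the gap $\epsilon' - \epsilon$ as a fixed multiple of $r\epsilon$ so that the two acceptance intervals are provably disjoint, and ensuring the $2/3$ success probability translates into constant distinguishing bias via a union bound over the two hypotheses.
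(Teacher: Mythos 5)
Your proposal is correct, and its skeleton matches the paper's proof: reduce multiplicative estimation to a two-point hypothesis test between states of the form $(1-\epsilon)\ketbra{\psi}{\psi}+\epsilon\ketbra{\psi^\perp}{\psi^\perp}$ with two nearby infidelities, convert the estimator into a binary POVM via its acceptance regions, and then lower-bound the number of copies needed to distinguish the tensor powers by an argument that makes no assumption on the measurement. Where you genuinely diverge is the quantitative engine. You use exact multiplicativity of fidelity, $F(\rho^{\otimes N},\rho'^{\otimes N})=F(\rho,\rho')^N$, together with the Fuchs--van de Graaf inequalities, and compute $1-F(\rho,\rho')=\Theta\big((\sqrt{\epsilon'}-\sqrt{\epsilon})^2\big)=\Theta(r^2\epsilon)$ --- a classical Hellinger/Bhattacharyya calculation, legitimate here since the two states commute. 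The paper instead applies the quantum Pinsker inequality together with additivity of quantum relative entropy, $S(\rho_0^{\otimes N}\|\rho_1^{\otimes N})=N\cdot S(\rho_0\|\rho_1)$, and bounds $S(\rho_0\|\rho_1)=O(r^2\epsilon)$ by a direct logarithm expansion; the paper's infidelity gap is $(1-2r)\epsilon$ with $r<1/2$, versus your $(1+3r)\epsilon$ with (implicitly) $r<1/3$ for interval disjointness --- both harmless for an $\Omega$ statement. The two routes are equally rigorous and give the same bound: your fidelity chain gives $F^N\le 8/9$, hence $N\ge \ln(9/8)/\ln(1/F)=\Omega\big(1/(1-F)\big)=\Omega(1/r^2\epsilon)$, and your algebra $(\sqrt{\epsilon'}-\sqrt{\epsilon})^2=(\epsilon'-\epsilon)^2/(\sqrt{\epsilon'}+\sqrt{\epsilon})^2=\Theta(r^2\epsilon)$ checks out. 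Your route is arguably more self-contained (multiplicativity is immediate, no Pinsker needed), while the paper's choice makes this proof structurally parallel to its classical Pinsker-plus-chain-rule argument for Theorem~1. Two minor bookkeeping points the paper handles explicitly that you gloss: no union bound is actually needed --- the trace-distance lower bound comes from the difference of the two acceptance probabilities, $2/3-1/3=1/3$; and since the guaranteed sample size may depend on the infidelity, one should take $N=\max\{N_{\epsilon,r},N_{\epsilon',r}\}$ and note at the end that a lower bound on the maximum transfers to $N_{\epsilon,r}$ itself after a constant rescaling of $\epsilon$.
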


\section{Numerical results for experimental realization}

Finally, we present numerical results demonstrating the effectiveness of our protocols in realistic experimental settings. Our simulations closely follow the recent experimental demonstration of a 5-to-1 logical $\ket{T}$ magic-state distillation (MSD)~\cite{rodriguezExperimentalDemonstrationLogical2024}.

Specifically, we consider a scenario where we generate a high-fidelity logical $\ket{T}$ state via MSD, encoded in the [[7, 1, 3]] color code, which supports transversal Clifford gates and Pauli measurements. Initially, we encode a physical $\ket{T}$ state into the [[7, 1, 3]] color code using the circuit shown in Fig.~\ref{fig:fig2}(a), creating a noisy logical magic state. Subsequently, we apply a 5-to-1 MSD protocol~\cite{bravyiUniversalQuantumComputation2005} at the logical level, as illustrated in Fig.~\ref{fig:fig2}(b), yielding a higher-fidelity logical magic state $\rho$.

To benchmark the resulting logical state $\rho$, we first apply the twirling operation at the logical level by transversal Clifford gates. Then, we perform two types of benchmarking schemes: standard state tomography and our proposed Bell measurement scheme, both implemented at the logical level. Finally, we perform error detection by analyzing syndromes derived from physical measurement outcomes and post-select syndrome-free runs, thereby effectively projecting the distilled state $\rho$ into the logical space. See Appendix~\ref{sec:numerical_appen} for details regarding the numerical simulation.

Reflecting typical conditions in state-of-the-art devices~\cite{BluvsteinLogical2024, rodriguezExperimentalDemonstrationLogical2024}, each operation at the physical level is subject to noise parameterized by $p$: single-qubit gates are followed by single-qubit depolarizing channels with average fidelity $1 - p/5$; two-qubit gates are followed by two-qubit depolarizing channels with fidelity $1 - p$; state preparations are followed by single-qubit depolarizing noise with average fidelity $1 - p/2$; and measurement outcomes are flipped with probability $p/2$.

We confirmed that the majority of logical errors originate from the non-fault-tolerant encoding circuit [Fig.~\ref{fig:fig2}(a)], while the 5-to-1 MSD, which is fault-tolerant, suppresses the infidelity $\epsilon$ of $\rho$ up to the second order of $p$ [Fig.~\ref{fig:fig2}(c)]. Meanwhile, both benchmarking circuits of state tomography and Bell measurement utilize only transversal Clifford gates, ensuring fault-tolerance. Since the [[7, 1, 3]] color code can detect up to two errors, logical errors from MSD and benchmarking circuits are suppressed to the third order of $p$, rendering them negligible compared to $\epsilon$ [Fig.~\ref{fig:fig2}(c)].

Numerical results presented in Fig.~\ref{fig:fig2}(d) show that our Bell measurement scheme maintains its scaling advantage even under realistic imperfections, significantly reducing the required number of samples compared to standard state tomography. Specifically, at noise levels around $p = 0.01$, which are already achievable by modern experiments~\cite{rodriguezExperimentalDemonstrationLogical2024, BluvsteinLogical2024, Kim2023EvidenceUtility, mckay2023benchmarkingquantumprocessorperformance, GoogleWillow2025, decross2024computationalpowerrandomquantum, AkhtarIonQ2023}, our method reduces the sampling overhead by over two orders of magnitude. Additional numerical results for the single-copy multi-qubit scheme, presented in Appendix~\ref{sec:numerical_appen_additional}, lead to the same conclusion. Together, these results strongly support the practical feasibility of our protocol for near-term experimental implementation.

\begin{acknowledgments}
We thank Dolev Bluvstein, Mikhail Lukin, Qian Xu, and Hengyun Zhou for helpful discussions. S.L. and S.C. are grateful to Madhur Tulsiani for the course ``Information and Coding Theory" (TTIC 31200) at the University of Chicago, which inspired rigorous proofs presented in this work. S.L., M.Y., S.C., and L.J. acknowledge support from the ARO(W911NF-23-1-0077), ARO MURI (W911NF-21-1-0325), AFOSR MURI (FA9550-21-1-0209, FA9550-23-1-0338), DARPA (HR0011-24-9-0359, HR0011-24-9-0361), NSF (ERC-1941583, OMA-2137642, OSI-2326767, CCF-2312755, OSI-2426975), and the Packard Foundation (2020-71479). K.T. is supported by the Program for Leading Graduate Schools (MERIT-WINGS), JST BOOST Grant Number JPMJBS2418, JST ASPIRE Grant Number JPMJAP2316, and JST ERATO Grant Number JPMJER2302. S.L. is partially supported by the Kwanjeong Educational Foundation. We are also grateful for the support of the University of Chicago’s Research Computing Center for assistance with the calculations carried out in this work.
\end{acknowledgments}

\bibliography{references}


\onecolumngrid
\appendix

\section{\label{sec:reprethry}Preliminaries on representation theory}
In this section, we briefly review key results from representation theory utilized in this work, closely following Chapter 1 of Ref.~\cite{fultonRepresentationTheory2004}, but restricting to representations of finite subgroups of the unitary group. For a comprehensive explanation, see Ref.~\cite{fultonRepresentationTheory2004}.

Let $\mathcal{H}$ be a finite-dimensional vector space (e.g., $n$-qubit Hilbert space) and $\mathcal{U}$ be the corresponding unitary group. For a finite subgroup $G\le \mathcal{U}$, we first define an invariant subspace:

\begin{definition}
    A subspace $\mathcal{H}' \le \mathcal{H}$ is \emph{invariant under $G$} if it is invariant under the action of any $U \in G$, i.e.,
    \begin{equation}
        U\ket{\psi} \in \mathcal{H}',
    \end{equation}
    for all $\ket{\psi} \in \mathcal{H}'$.
\end{definition}

Here, note that the trivial subspace $\{0\}$ is also invariant under $G$. Next, we introduce $G$-linear maps and equivalence between invariant subspaces:
\begin{definition}
    Given two invariant subspaces $\mathcal{H}_1, \mathcal{H}_2 \le \mathcal{H}$ under $G$, a linear map $\varphi:\mathcal{H}_1 \rightarrow \mathcal{H}_2$ is a \emph{$G$-linear map} if
    \begin{equation}
        \varphi \circ U = U\circ\varphi
    \end{equation}
    for all $U \in G$. Here, $U$ on the left-hand side acts on $\mathcal{H}_1$ and $U$ on the right-hand side acts on $\mathcal{H}_2$. Equivalently, the following diagram commutes for all $U\in G$:
    \begin{equation}
        \begin{tikzcd}[row sep=large, column sep=large]
        \centering
        \mathcal{H}_1 \arrow[r, "\varphi"] \arrow[d, "U"] & \mathcal{H}_2 \arrow[d, "U"] \\
        \mathcal{H}_1 \arrow[r, "\varphi"] & \mathcal{H}_2
        \end{tikzcd}
    \end{equation}
    Furthermore, if there exists a $G$-linear map $\varphi:\mathcal{H}_1 \rightarrow \mathcal{H}_2$ that is one-to-one and onto, we say $\mathcal{H}_1$ and $\mathcal{H}_2$ are \emph{equivalent}, denoted by $\mathcal{H}_1\simeq\mathcal{H}_2$.
\end{definition}

With a $G$-linear map $\varphi:\mathcal{H}_1\rightarrow\mathcal{H}_2$, we have
\begin{align}
    \varphi(\ket{\psi}) = 0
    &\Leftrightarrow (U\circ\varphi)(\ket{\psi}) = 0\\
    &\Leftrightarrow \varphi(U\ket{\psi}) = 0,
\end{align}
for all $\ket{\psi} \in \mathcal{H}_1$ and $U\in G$. Therefore, the kernel of a $G$-linear map is invariant under $G$. Similarly, the image of a $G$-linear map is also invariant under $G$. With these relations, we can decompose $\mathcal{H}$ with a direct sum of invariant subspaces:

\begin{proposition}[Adapted from Prop.~1.5 in \cite{fultonRepresentationTheory2004}]
    If $\mathcal{H}' \le \mathcal{H}$ is an invariant subspace under $G$, there exists a complementary invariant subspace $\mathcal{H}'^\perp$ such that $\mathcal{H}=\mathcal{H}'\oplus\mathcal{H}'^\perp$.
    \label{prop:decomposition}
\end{proposition}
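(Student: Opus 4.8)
The plan is to exploit the fact that $G$ consists of unitary operators, so that the standard inner product on $\mathcal{H}$ is already $G$-invariant and no averaging is needed. I would simply take $\mathcal{H}'^\perp$ to be the orthogonal complement of $\mathcal{H}'$ with respect to this inner product. Then $\mathcal{H}=\mathcal{H}'\oplus\mathcal{H}'^\perp$ holds automatically as a vector-space decomposition of the finite-dimensional space $\mathcal{H}$, so the only nontrivial claim remaining is that $\mathcal{H}'^\perp$ is itself invariant under $G$.

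To establish invariance, I would fix an arbitrary $\ket{\phi}\in\mathcal{H}'^\perp$ and an arbitrary $U\in G$, and verify that $U\ket{\phi}$ remains orthogonal to every $\ket{\psi}\in\mathcal{H}'$. Using unitarity, $\langle\psi|U|\phi\rangle=\langle U^\dagger\psi|\phi\rangle$. The key step is that $U^\dagger=U^{-1}$ lies in $G$, because $G$ is a group and hence closed under inverses; since $\mathcal{H}'$ is invariant under all of $G$, we have $U^{-1}\ket{\psi}\in\mathcal{H}'$. Orthogonality of $\ket{\phi}$ to $\mathcal{H}'$ then gives $\langle U^{-1}\psi|\phi\rangle=0$, so $\langle\psi|U|\phi\rangle=0$ for all $\ket{\psi}\in\mathcal{H}'$, which means $U\ket{\phi}\in\mathcal{H}'^\perp$ as desired.

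The step I expect to require the most care is the appeal to the group structure: invariance of $\mathcal{H}'$ under $U$ alone is not sufficient, and one genuinely needs invariance under $U^{-1}$, which is guaranteed only because $G$ is closed under inverses. For a general non-unitary representation this argument would break down, and one would instead first construct a $G$-invariant Hermitian inner product by averaging an arbitrary one over the group, namely $\langle v,w\rangle_G=\frac{1}{|G|}\sum_{U\in G}\langle Uv,Uw\rangle$, and then take the orthogonal complement with respect to $\langle\cdot,\cdot\rangle_G$. Since the present setting restricts to finite subgroups of the unitary group, this averaging step is unnecessary and the standard inner product suffices, which is precisely what reduces the argument to a short consequence of unitarity together with closure of $G$ under inverses.
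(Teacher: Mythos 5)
Your proof is correct, and it takes a genuinely different route from the paper's. The paper follows Fulton--Harris Prop.~1.5 literally: it starts from an arbitrary projection $\varphi_0:\mathcal{H}\to\mathcal{H}'$, averages it over the group to form the $G$-linear map $\varphi=\sum_{U\in G}U\circ\varphi_0\circ U^\dagger$, and takes $\ker\varphi$ as the complement, using the fact that kernels of $G$-linear maps are invariant. You instead exploit the standing assumption of the appendix that $G$ is a finite subgroup of the \emph{unitary} group: the standard inner product is then already $G$-invariant, so the ordinary orthogonal complement $\mathcal{H}'^\perp$ works directly. Your key step---that $\langle\psi|U|\phi\rangle=\langle U^{-1}\psi|\phi\rangle$ requires $U^{-1}\in G$, i.e., closure of $G$ under inverses, and not merely invariance of $\mathcal{H}'$ under $U$ itself---is exactly right and is where a careless version of this argument would be incomplete. (One could alternatively note that since $U$ maps the finite-dimensional space $\mathcal{H}'$ injectively into itself, $U|_{\mathcal{H}'}$ is a bijection and hence $U^{-1}\mathcal{H}'=\mathcal{H}'$ even without invoking group closure, but your argument is cleaner.) As for what each approach buys: yours is shorter and avoids the averaging construction entirely, but is tied to unitarity; the paper's averaging argument is the one that generalizes to arbitrary, possibly non-unitary, representations of finite groups in characteristic zero---precisely the generalization you sketch at the end, where the averaging is applied to the inner product rather than to a projection. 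Incidentally, your route also sidesteps a detail the paper's proof leaves implicit: to conclude that $\ker\varphi$ is complementary to $\mathcal{H}'$ one should additionally observe that $\operatorname{im}\varphi\subseteq\mathcal{H}'$ and that $\varphi$ restricted to $\mathcal{H}'$ equals $|G|$ times the identity (which itself uses $U^\dagger\in G$ and the invariance of $\mathcal{H}'$), so that $\mathcal{H}'\cap\ker\varphi=\{0\}$ and the dimensions add up.
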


\begin{proof}
    Let $\varphi_0:\mathcal{H} \rightarrow \mathcal{H'}$ be a projection onto $\mathcal{H'}$ given by the decomposition $\mathcal{H}=\mathcal{H}'\oplus\mathcal{H}''$. Define another map $\varphi:\mathcal{H} \rightarrow \mathcal{H'}$,
    \begin{equation}
        \varphi=\sum_{U\in G}U\circ\varphi_0\circ U^\dagger.
    \end{equation}
    Since $\varphi \circ V = \sum_{U\in G}U\circ\varphi_0\circ (U^\dagger V) = \sum_{U\in G}(VU)\circ\varphi_0\circ U^\dagger=V\circ\varphi$ for all $V \in G$, $\varphi$ is a $G$-linear map. Therefore, $\ker\varphi$ is a subspace of $\mathcal{H}$ invariant under $G$ and complementary to $\mathcal{H'}$.
\end{proof}

An irreducible representation is then defined as follows:

\begin{definition}
    A subspace $\mathcal{H}' \le \mathcal{H}$ is an \emph{irreducible representation} of $G$ if:
    \begin{enumerate}[label=(\roman*)]
        \item $\mathcal{H}'$ is invariant under $G$.
        \item $\mathcal{H}'$ has no non-trivial proper invariant subspaces under $G$.
    \end{enumerate}
\end{definition}

In other words, the action of $G$ on an irreducible representation cannot be decomposed further. Proposition~\ref{prop:decomposition} immediately implies:

\begin{corollary}
    $\mathcal{H}$ can be decomposed as a direct sum of irreducible representations.
\end{corollary}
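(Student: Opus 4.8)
The plan is to prove the statement by strong induction on $\dim\mathcal{H}$, using Proposition~\ref{prop:decomposition} as the single engine that drives the recursion. The base of the induction is the case where $\mathcal{H}$ is itself irreducible (this includes, in particular, the one-dimensional case, since a one-dimensional invariant subspace admits no non-trivial proper subspace at all): here the trivial decomposition $\mathcal{H}=\mathcal{H}$ already exhibits $\mathcal{H}$ as a direct sum consisting of a single irreducible representation, and there is nothing further to do.

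For the inductive step I would assume the claim holds for every $G$-invariant space of dimension strictly less than $\dim\mathcal{H}$, and treat the remaining case in which $\mathcal{H}$ is \emph{not} irreducible. By the definition of irreducibility, $\mathcal{H}$ then possesses a non-trivial proper invariant subspace $\mathcal{H}'$, so that $\{0\}\neq\mathcal{H}'\neq\mathcal{H}$ and hence $0<\dim\mathcal{H}'<\dim\mathcal{H}$. First I would invoke Proposition~\ref{prop:decomposition} to produce a complementary invariant subspace $\mathcal{H}'^{\perp}$ with
\begin{equation}
    \mathcal{H}=\mathcal{H}'\oplus\mathcal{H}'^{\perp},
\end{equation}
where, crucially, $\mathcal{H}'^{\perp}$ is guaranteed to be invariant under $G$ (not merely a linear complement). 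Since the decomposition is a direct sum, $\dim\mathcal{H}'^{\perp}=\dim\mathcal{H}-\dim\mathcal{H}'$, and the strict bound $\dim\mathcal{H}'<\dim\mathcal{H}$ forces $\dim\mathcal{H}'^{\perp}<\dim\mathcal{H}$ as well. Both summands thus have strictly smaller dimension than $\mathcal{H}$.

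Next I would observe that each summand, being invariant under $G$, carries its own representation of $G$ obtained by restricting the action of every $U\in G$, so the inductive hypothesis applies to both. Writing $\mathcal{H}'=\bigoplus_a W_a$ and $\mathcal{H}'^{\perp}=\bigoplus_b W_b'$ as direct sums of irreducible representations and concatenating these lists yields $\mathcal{H}=\bigl(\bigoplus_a W_a\bigr)\oplus\bigl(\bigoplus_b W_b'\bigr)$, a decomposition of $\mathcal{H}$ into irreducibles, completing the induction. I do not anticipate a genuine obstacle here: the entire content has been pushed into Proposition~\ref{prop:decomposition}, whose averaging construction $\varphi=\sum_{U\in G}U\circ\varphi_0\circ U^{\dagger}$ supplies the invariant complement that a naive orthogonal complement would not automatically provide. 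The only points warranting care are the bookkeeping that the recursion terminates (ensured by the finite, strictly decreasing dimensions) and the remark that restricting $G$ to an invariant subspace indeed gives a well-defined representation, so that the inductive hypothesis is legitimately invoked on each piece.
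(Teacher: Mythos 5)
Your proof is correct and is precisely the argument the paper intends: the paper states the corollary follows ``immediately'' from Proposition~\ref{prop:decomposition}, leaving implicit exactly the induction on $\dim\mathcal{H}$ that you spell out (split off a non-trivial proper invariant subspace, use the proposition to get an invariant complement, and recurse on both strictly smaller pieces). Your added care about the base case, termination, and the well-definedness of the restricted representation fills in the routine details the paper omits, with no gaps.
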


Finally, we present Schur’s lemma:

\begin{theorem}[Schur's lemma]
    If $\mathcal{H}_1, \mathcal{H}_2 \le \mathcal{H}$ are irreducible representations of $G$ and $\varphi:\mathcal{H}_1 \rightarrow \mathcal{H}_2$ is a $G$-linear map, then
    \begin{enumerate}[label=(\roman*)]
        \item Either $\varphi$ is an isomorphism, or $\varphi=0$.
        \item If $\mathcal{H}_1 = \mathcal{H}_2$, then $\varphi$ is proportional to the identity, i.e., $\varphi= \lambda\cdot I$ for some $\lambda \in \mathbb{C}$.
    \end{enumerate}
\end{theorem}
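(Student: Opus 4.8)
The plan is to leverage the two structural facts already established in this excerpt: that the kernel and the image of any $G$-linear map are invariant subspaces under $G$, combined with the defining property of irreducibility that the only invariant subspaces are the trivial ones $\{0\}$ and the whole space. Both parts of the lemma will follow from repeatedly applying this dichotomy.

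For part (i), I would begin with an arbitrary $G$-linear map $\varphi:\mathcal{H}_1\to\mathcal{H}_2$ and analyze its kernel and image separately. Since $\ker\varphi\le\mathcal{H}_1$ is invariant under $G$ and $\mathcal{H}_1$ is irreducible, $\ker\varphi$ equals either $\{0\}$ or all of $\mathcal{H}_1$; in the latter case $\varphi=0$ and we are finished. Likewise $\mathrm{im}\,\varphi\le\mathcal{H}_2$ is invariant, and irreducibility of $\mathcal{H}_2$ forces it to be either $\{0\}$ or all of $\mathcal{H}_2$. Hence if $\varphi\neq 0$, then necessarily $\ker\varphi=\{0\}$, so $\varphi$ is injective, and $\mathrm{im}\,\varphi=\mathcal{H}_2$, so $\varphi$ is surjective; thus $\varphi$ is a bijection. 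A short verification that $\varphi^{-1}$ is again $G$-linear—obtained by composing the intertwining relation $\varphi\circ U=U\circ\varphi$ with $\varphi^{-1}$ on both sides to get $U\circ\varphi^{-1}=\varphi^{-1}\circ U$—shows that $\varphi$ is an isomorphism of representations.

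For part (ii), with $\mathcal{H}_1=\mathcal{H}_2=\mathcal{H}'$ a finite-dimensional complex vector space, I would invoke algebraic closure of $\mathbb{C}$: the endomorphism $\varphi$ possesses at least one eigenvalue $\lambda\in\mathbb{C}$. I then consider the map $\varphi-\lambda I$, which is still $G$-linear because the identity intertwines trivially and $G$-linear maps are closed under linear combinations. Applying part (i) to $\varphi-\lambda I$ shows it must be either an isomorphism or zero; but it annihilates a nonzero eigenvector, so it has nontrivial kernel and cannot be injective. Therefore $\varphi-\lambda I=0$, i.e., $\varphi=\lambda I$.

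The single load-bearing hypothesis—and the step I would flag as the crux—is the use of algebraic closure of $\mathbb{C}$ in part (ii) to guarantee the existence of an eigenvalue; over $\mathbb{R}$ the conclusion would fail, as a planar rotation has no real eigenvalue while defining an irreducible real representation of a cyclic group. For part (i) the only genuine subtlety is confirming that $\varphi^{-1}$ inherits $G$-linearity, which is immediate from the intertwining identity. Everything else reduces mechanically to the dichotomy supplied by irreducibility applied to the invariant subspaces $\ker\varphi$ and $\mathrm{im}\,\varphi$.
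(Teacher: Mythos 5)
Your proof is correct and follows essentially the same route as the paper's: both parts rest on the invariance of $\ker\varphi$ and $\mathrm{im}\,\varphi$ together with the irreducibility dichotomy, and part (ii) subtracts an eigenvalue (guaranteed by algebraic closure of $\mathbb{C}$) and applies part (i). Your extra verifications—that $\varphi^{-1}$ is $G$-linear and that the statement fails over $\mathbb{R}$—are accurate refinements the paper leaves implicit, but they do not change the argument.
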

\begin{proof}
    (i) Since $\ker\varphi \le \mathcal{H}_1$ and ${\rm im}~\varphi\le \mathcal{H}_2$, the irreducibilities of $\mathcal{H}_1$ and $\mathcal{H}_2$ implies either (1) $\ker\varphi = \{0\}$ and ${\rm im}~\varphi = \mathcal{H}_2$ (i.e., $\varphi$ is an isomorphism), or (2) $\ker\varphi = \mathcal{H}_1$ and ${\rm im}~\varphi = \{0\}$ (i.e., $\varphi=0$).

    (ii) Let $\lambda \in \mathbb{C}$ be an eigenvalue of $\varphi$, so that $\ker(\varphi - \lambda \cdot I) \ne \{0\}$. As $\varphi - \lambda \cdot I$ is also a $G$-linear map, $\varphi - \lambda \cdot I=0$ by (i), and it leads to $\varphi=\lambda \cdot I$.
\end{proof}

\section{\label{sec:swap_test}Bell measurement and SWAP test}

\begin{figure}
    \includegraphics[width=0.3\linewidth]{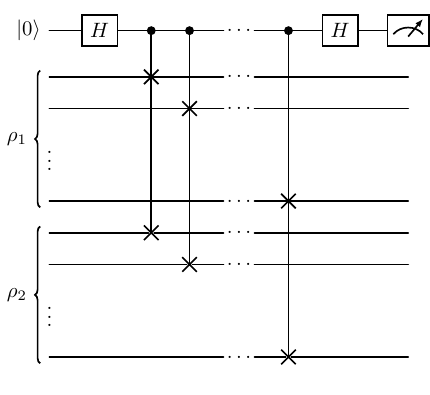}
    \caption{The SWAP test circuit for measuring the Hilbert-Schmidt inner product $\Tr[\rho_1 \rho_2]$.}
    \label{fig:SWAP_test_circuit}
\end{figure}

The SWAP test, originally proposed for determining the overlap between two quantum states~\cite{buhrmanQuantumFingerprinting2001, FilipOverlap}, can be utilized to measure the purity of a state, or more generally, the Hilbert-Schmidt inner product between two quantum states. In this section, we explain how the Bell measurement circuit effectively implements the SWAP test, adapted from Ref.~\cite{garcia-escartinSwapTestHongOuMandel2013}.

Let $\rho$ be an $n$-qubit state. The purity can be measured by preparing two copies of $\rho$ and performing the SWAP test circuit.
More generally, the SWAP test can be used to measure the Hilbert-Schmidt inner product between two quantum states $\rho_1$ and $\rho_2$, i.e., $\Tr[\rho_1 \rho_2]$. (Note that the purity is a special case with $\rho_1 = \rho_2 = \rho$.)
This circuit consists of two input states $\rho_1$ and $\rho_2$ with one ancilla qubit initialized in $\ket{0}$. We first apply a Hadamard gate on the ancilla qubit and apply a controlled SWAP gate---a gate applies SWAP gate on the two copies of $\rho$ if the ancilla is in $\ket{1}$ state, and does nothing otherwise. This controlled SWAP gate can be realized by applying controlled SWAP gates on all pairs of qubits of $\rho_1$ and $\rho_2$, as seen in Fig.~\ref{fig:SWAP_test_circuit}. After that, we apply another Hadamard gate to the ancilla and finally measure it.
Then, the probability of measuring $1$ is given by
\begin{equation}
\begin{aligned}
    P(1) &= \Tr\left[(\bra{-}\otimes I \otimes I){\rm CSWAP}\left(\ketbra{+}{+}\otimes \rho_1 \otimes \rho_2\right){\rm CSWAP}^\dagger (\ket{-}\otimes I \otimes I)\right]\\
    &= \frac{1}{2}\Tr\big[(\bra{-}\otimes I \otimes I)
    \big(\ketbra{0}{0}\otimes \rho_1 \otimes \rho_2 + \ketbra{0}{1}\otimes (\rho_1 \otimes \rho_2 {\rm SWAP}) \\ &\qquad\qquad\qquad\qquad\qquad+ \ketbra{1}{0}\otimes ( {\rm SWAP} \rho_1 \otimes \rho_2) + \ketbra{1}{1}\otimes \rho_2 \otimes \rho_1 \big)
    (\ket{-}\otimes I \otimes I)\big]\\
    &=\frac{1}{4}\left(\Tr[\rho_1 \otimes \rho_2] - \Tr[{\rm SWAP}\rho_1\otimes\rho_2] - \Tr[\rho_1\otimes\rho_2{\rm SWAP}] + \Tr[\rho_2 \otimes \rho_1]\right)\\
    &=\frac{1-\Tr[\rho_1 \rho_2]}{2},
\end{aligned}
\end{equation}
where we denote ${\rm CSWAP}$ as the controlled-SWAP gate. Therefore, repeating this circuit enables the estimation of the Hilbert-Schmidt inner product $\Tr[\rho_1 \rho_2]$.

\begin{figure}
    \centering
    \includegraphics[width=\linewidth]{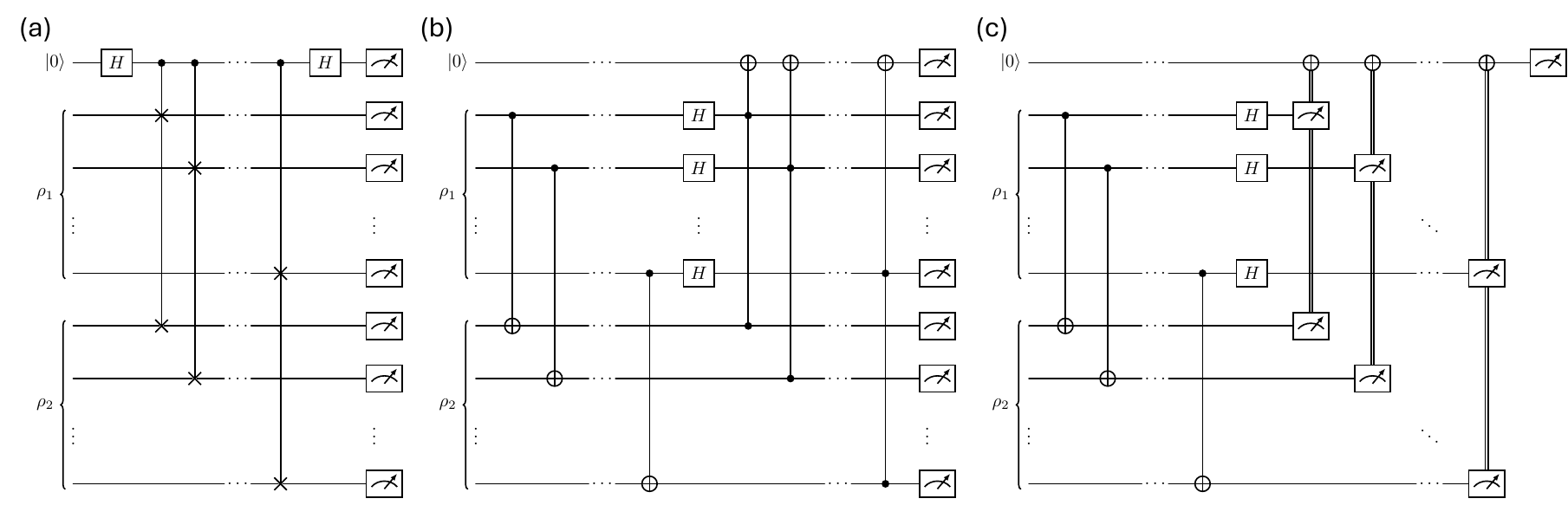}
    \caption{Equivalence of the SWAP test and the Bell measurement circuits. (a) The original SWAP test circuit with additional measurements at the end. (b) The circuit equivalent to the SWAP test circuit. (c) The Bell measurement circuit followed by feedforward channels.}
    \label{fig:equiv_swap_bell}
\end{figure}

Now, we explain how the Bell measurement circuit realizes the SWAP test thereby enabling the purity estimation. First, by the principle of implicit measurement (see, e.g., Ref.~\cite{Nielsen_Chuang_2010}), the SWAP test circuit is unaffected by measuring the remaining $2n$ qubits in the end [Fig.~\ref{fig:equiv_swap_bell}(a)]. Ref.~\cite{garcia-escartinSwapTestHongOuMandel2013} showed that this SWAP test circuit is equivalent to the following circuit described in Fig.~\ref{fig:equiv_swap_bell}(b), by decomposing SWAP gates with multiple rounds of CNOT gates. By the principle of deferred measurement (see, e.g., Ref.~\cite{Nielsen_Chuang_2010}), the circuit described in Fig.~\ref{fig:equiv_swap_bell}(c) which replaces the controlled-controlled-not gates with measurement and feedforward channels gives the same measurement statistics. Here, each measurement and feedforward channel activates $X$ gate on the ancilla when the two corresponding measurement outcomes are 1.

Note that the circuit in Fig.~\ref{fig:equiv_swap_bell}(c) is the Bell measurement circuit followed by the post-processing. The measurement outcome of the ancilla is 1 if and only if an odd number of $X$ gates are activated. Denoting the measurement outcome of $2n$ qubits corresponding to the two copies of $\rho$ as $x\in\{0,1\}^{2n}$, this condition is identical that $\sum_{i=1}^n x_i \cdot x_{i+n}$ is an odd number. Since the probability of measuring $1$ for ancilla is $\frac{1-\Tr[\rho_1 \rho_2]}{2}$, we have
\begin{equation}
    P_{\rm odd} = \frac{1-\Tr\left[\rho_1 \rho_2\right]}{2},
\end{equation}
where $P_{\rm odd}$ denotes the probability of $\sum_{i=1}^n x_i \cdot x_{i+n}$ being an odd number. Therefore, the Bell measurement circuit effectively realizes the SWAP test and estimates the Hilbert-Schmidt inner product $\Tr[\rho_1 \rho_2]$.

Here, we make a few remarks on the Bell measurement scheme. First, the Bell measurement scheme works even the two copies of the input magic states $\rho_1$ and $\rho_2$ are different. Let $\rho_1$ and $\rho_2$ be two $n$-qubit states with infidelities $\epsilon_1$ and $\epsilon_2$ to a target magic state $\ket{\psi}$, respectively. Then, the twirling operation converts them into $\rho_1'$ and $\rho_2'$, where we can write
\begin{align}
    \rho_1' &= (1-\epsilon_1)\ketbra{\psi}{\psi} + \epsilon_1 \sigma_1,\\
    \rho_2' &= (1-\epsilon_2)\ketbra{\psi}{\psi} + \epsilon_2 \sigma_2,
\end{align}
where $\sigma_1$ and $\sigma_2$ are some mixed states that have no overlap with $\ket{\psi}$. Then the output of the Bell measurement scheme is given by
\begin{equation}
    P_{\rm odd} = \frac{1-\Tr[\rho_1' \rho_2']}{2} = \frac{\epsilon_1 + \epsilon_2 - \epsilon_1 \epsilon_2 (1 + \Tr[\sigma_1 \sigma_2])}{2}.
\end{equation}
Therefore, if $\epsilon_1$ and $\epsilon_2$ are sufficiently small, $P_{\rm odd}$ approximates $(\epsilon_1 + \epsilon_2)/2$. Consequently, the Bell measurement scheme efficiently estimates the average infidelity $(\epsilon_1 + \epsilon_2)/2$.

Second, we also remark that only one of the two copies needs to be twirled. To see this, assume that only $\rho_1$ is twirled into $\rho_1'$, while $\rho_2$ is not twirled. Specifically, let $G_{\psi}$ be the twirling group associated with $\ket{\psi}$. Then, the Hilbert-Schmidt inner product between $\rho_1'$ and $\rho_2$ is given by
\begin{align}
    \Tr[\rho_1' \rho_2] &= \Tr\left[\left(\frac{1}{|G_{\psi}|}\sum_{U\in G_{\psi}} U\rho_1 U^\dagger\right)\rho_2\right]\\
    &= \Tr\left[\left(\frac{1}{|G_{\psi}|^2}\sum_{V\in G_{\psi}}\sum_{U\in G_{\psi}} V^\dagger U\rho_1 U^\dagger V\right)\rho_2\right]\\
    &= \frac{1}{|G_{\psi}|^2}\sum_{U, V\in G_{\psi}}\Tr[U\rho_1 U^\dagger V\rho_2 V^\dagger]\\
    &= \Tr[\rho_1' \rho_2'],
\end{align}
where $\rho_2' = \frac{1}{|G_{\psi}|}\sum_{V\in G_{\psi}} V\rho_2 V^\dagger$ is the twirled state of $\rho_2$. In other words, twirling only one copy has the same effect as twirling both copies. Therefore, the Bell measurement scheme only requires one of the input states to be twirled.

\section{\label{sec:proofs}Proofs of the main theorems}

In this section, we provide the deferred proofs of the main theorems. To this end, we first provide the Chernoff bounds for binomial distributions:
\begin{lemma}[Chernoff bounds for binomial distribution, adapted from Ref.~\cite{mitzenmacherProbabilityComputingRandomized2005}]
    Let $X_1,\dots,X_N \sim {\rm Bern}(q)$ be independent and identically distributed random variables for $q>0$. Denoting $\hat{q}=\frac{1}{N}\sum_{i=1}^NX_i$, the following inequalities hold:
    \begin{enumerate}[label=(\roman*)]
        \item for $\delta > 0$,
        \begin{equation}
            \mathbb{P}\left[\hat{q}-q\ge \delta q\right] <  \left(\frac{e^\delta}{(1+\delta)^{(1+\delta)}}\right)^{Nq},
        \end{equation}
        \item for $0<\delta<1$,
        \begin{equation}
            \mathbb{P}\left[\hat{q}-q\ge \delta q\right] < e^{-Nq\delta^2/3},
        \end{equation}
        \item for $\delta >0$,
        \begin{equation}
            \mathbb{P}\left[\hat{q}-q\le -\delta q\right] < e^{-Nq\delta^2/2}.
        \end{equation}
    \end{enumerate}
    \label{lem:chernoff}
\end{lemma}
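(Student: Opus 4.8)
The plan is to establish all three inequalities by the classical exponential-moment (Chernoff) method. Writing $S = \sum_{i=1}^N X_i = N\hat{q}$, so that $S$ is Binomial$(N,q)$ with mean $\mu = Nq$, the events in question become $\{S \ge (1+\delta)\mu\}$ for the upper tail (parts (i) and (ii)) and $\{S \le (1-\delta)\mu\}$ for the lower tail (part (iii)), since $\mathbb{P}[\hat{q}-q \ge \delta q] = \mathbb{P}[S \ge (1+\delta)\mu]$ and likewise for the lower tail. The single tool driving everything is Markov's inequality applied to $e^{tS}$, together with the factorized moment generating function $\mathbb{E}[e^{tS}] = \prod_{i=1}^N \mathbb{E}[e^{tX_i}] = (1-q+qe^t)^N$ and the elementary bound $1+x \le e^x$, which yields $\mathbb{E}[e^{tS}] \le e^{\mu(e^t-1)}$.

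For part (i) I would take $t>0$ and apply Markov to get $\mathbb{P}[S \ge (1+\delta)\mu] \le e^{-t(1+\delta)\mu}\,\mathbb{E}[e^{tS}] \le \exp\!\big(\mu(e^t-1-t(1+\delta))\big)$. Minimizing the exponent over $t$ gives the optimizer $t = \ln(1+\delta)$, which is positive precisely because $\delta>0$; substituting this value collapses the exponent to $\mu\big(\delta - (1+\delta)\ln(1+\delta)\big)$, and exponentiating reproduces the claimed bound $\big(e^\delta/(1+\delta)^{1+\delta}\big)^{Nq}$.

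Parts (ii) and (iii) then follow by bounding these exact expressions with clean Gaussian-type exponents. For (ii) it suffices to show $\delta - (1+\delta)\ln(1+\delta) \le -\delta^2/3$ on $(0,1)$; I would verify this by setting $f(\delta) = \delta - (1+\delta)\ln(1+\delta) + \delta^2/3$, checking $f(0)=0$, and analyzing $f'(\delta) = \tfrac{2}{3}\delta - \ln(1+\delta)$, whose second derivative $\tfrac{2}{3} - (1+\delta)^{-1}$ changes sign once at $\delta=\tfrac12$; this shows $f'$ starts at $0$, dips negative, and returns to a still-negative value at $\delta=1$, so $f'\le 0$ and hence $f\le 0$ throughout. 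For (iii) I would first rederive the lower-tail bound by repeating the argument with the optimizer $t = -\ln(1-\delta)>0$ applied to $e^{-tS}$, obtaining $\mathbb{P}[S \le (1-\delta)\mu] \le \big(e^{-\delta}/(1-\delta)^{1-\delta}\big)^{\mu}$, and then establish $-\delta - (1-\delta)\ln(1-\delta) \le -\delta^2/2$ via the auxiliary function $g(\delta) = -\delta - (1-\delta)\ln(1-\delta) + \delta^2/2$, whose derivative $g'(\delta) = \delta + \ln(1-\delta)$ satisfies $g'(0)=0$ and $g''(\delta) = -\delta/(1-\delta) < 0$, forcing $g' \le 0$ and thus $g \le 0$ on $(0,1)$.

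The MGF estimates are entirely routine, so the only real bookkeeping lies in the two one-variable calculus inequalities used to pass from the exact Chernoff bounds to the exponential-in-$\delta^2$ forms. I expect the upper-tail inequality for (ii) to be the most delicate step, since the sign of $f''$ is not constant and one must track that $f'$ remains nonpositive across the entire interval despite its interior minimum; the lower-tail inequality for (iii) is easier because $g''$ keeps a fixed sign.
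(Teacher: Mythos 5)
Your proof is correct and coincides with the paper's approach: the paper states this lemma without proof, citing it to Mitzenmacher and Upfal, and your argument --- Markov's inequality applied to $e^{tS}$ with the moment generating function bound $\mathbb{E}[e^{tS}] \le e^{\mu(e^t-1)}$, the optimizers $t=\ln(1+\delta)$ and $t=-\ln(1-\delta)$, and the two auxiliary-function inequalities (with $f'$ starting at zero, dipping negative through the sign change of $f''$ at $\delta=1/2$, and remaining negative at $\delta=1$ since $2/3<\ln 2$ fails, i.e.\ $f'(1)=2/3-\ln 2<0$; and $g''<0$ forcing $g\le 0$) --- is exactly the textbook derivation there, and indeed the same calculus technique the authors themselves use immediately after the lemma to show $e^{\delta}/(1+\delta)^{1+\delta}\le e^{-\delta^2/(2+\delta)}$. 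The only loose end is that your analysis for part (iii) covers $0<\delta<1$ while the lemma asserts the bound for all $\delta>0$; the extension is immediate, since for $\delta>1$ the event $\hat{q}\le(1-\delta)q$ is impossible, and at $\delta=1$ one has $\mathbb{P}[\hat{q}\le 0]=(1-q)^N\le e^{-Nq}<e^{-Nq/2}$.
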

Note that (i) and (iii) imply $\mathbb{P}\left[|\hat{q}-q|\ge \delta q\right] < 2e^{-Np\delta^2/3}$ for $0<\delta<1$. In addition, we can also show that
\begin{equation}
    \frac{e^\delta}{(1+\delta)^{(1+\delta)}} \le e^{-\frac{\delta^2}{2+\delta}},
    \label{eq:chernoff_application}
\end{equation}
for $\delta>1$. To this end, note that this relation is equivalent to
\begin{equation}
    f(\delta) = \delta - (1+\delta)\ln(1+\delta) + \frac{\delta^2}{2+\delta} \le 0.
\end{equation}
This is shown by $f'(\delta)=-\ln(1+\delta)+\frac{2\delta}{2+\delta} \le 0$ and $f(0) = 0$. Combining the first bound of Lemma~\ref{lem:chernoff} and Eq.~\ref{eq:chernoff_application}, we have the following:

\begin{lemma}
    Let $X_1,\dots,X_N \sim {\rm Bern}(q)$ be independent and identically distributed random variables for $q>0$. Denoting $\hat{q}=\frac{1}{N}\sum_{i=1}^NX_i$, we have
    \begin{equation}
        \mathbb{P}\left[\hat{q}-q\ge \delta q\right] <  e^{-\left(\frac{\delta^2}{2+\delta}\right)Nq}.
    \end{equation}
    \label{lem:chernoff_2}
\end{lemma}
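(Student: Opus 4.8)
The plan is to derive Lemma~\ref{lem:chernoff_2} as an immediate strengthening of the multiplicative Chernoff bound in Lemma~\ref{lem:chernoff}(i), trading its unwieldy factor $e^{\delta}/(1+\delta)^{1+\delta}$ for the closed-form exponential $e^{-\delta^2/(2+\delta)}$. Lemma~\ref{lem:chernoff}(i) already supplies
\begin{equation}
    \mathbb{P}\left[\hat{q}-q\ge \delta q\right] < \left(\frac{e^\delta}{(1+\delta)^{1+\delta}}\right)^{Nq},
\end{equation}
so once the pointwise bound~\eqref{eq:chernoff_application}, $e^\delta/(1+\delta)^{1+\delta}\le e^{-\delta^2/(2+\delta)}$, is in hand, I would simply raise both positive sides to the power $Nq\ge 0$ (using monotonicity of $x\mapsto x^{Nq}$) to conclude. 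The strict inequality from Lemma~\ref{lem:chernoff}(i) is preserved through this exponentiation.

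The only substantive work is therefore justifying the pointwise inequality, which I would do by taking logarithms and studying
\begin{equation}
    f(\delta) = \delta - (1+\delta)\ln(1+\delta) + \frac{\delta^2}{2+\delta}.
\end{equation}
The claim becomes $f(\delta)\le 0$, which I would establish by the standard boundary-plus-monotonicity template: first note $f(0)=0$, then show $f'(\delta)\le 0$ for $\delta>0$. Differentiating gives $f'(\delta)=-\ln(1+\delta)+(\delta^2+4\delta)/(2+\delta)^2$, so the sign reduces to verifying $\ln(1+\delta)\ge \delta(\delta+4)/(\delta+2)^2$; this auxiliary inequality again vanishes at $\delta=0$, and differentiating it once more collapses the question to the elementary polynomial fact $(\delta+2)^3\ge 8(1+\delta)$, whose difference factors as $\delta(\delta^2+6\delta+4)\ge 0$. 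Nesting these ``evaluate at zero, then differentiate'' steps yields $f\le 0$; notably the argument holds for every $\delta>0$, so the bound is valid on whatever deviation range the lemma is later invoked in (in particular the large-$\delta$ regime where this form is sharper than the $e^{-Nq\delta^2/3}$ bound implicit in Lemma~\ref{lem:chernoff}).

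I expect the main obstacle to be this calculus step rather than the combination: the delicate point is confirming the sign of $f'$ uniformly in $\delta$, not merely asymptotically, and the clean way to secure it is the double reduction down to a manifestly nonnegative polynomial, after which no further estimates are needed. Everything surrounding it---invoking Lemma~\ref{lem:chernoff}(i), the positivity and monotonicity used in the final exponentiation, and the preservation of strictness---is routine bookkeeping.
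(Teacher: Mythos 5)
Your proposal is correct and takes essentially the same route as the paper: invoke Lemma~\ref{lem:chernoff}(i) and upgrade it via the pointwise bound $e^\delta/(1+\delta)^{1+\delta}\le e^{-\delta^2/(2+\delta)}$, which both you and the paper prove by showing $f(0)=0$ and $f'\le 0$ for the same function $f(\delta)=\delta-(1+\delta)\ln(1+\delta)+\tfrac{\delta^2}{2+\delta}$. The only divergence is a calculus detail that works in your favor: you compute the exact derivative $f'(\delta)=-\ln(1+\delta)+\tfrac{\delta(\delta+4)}{(\delta+2)^2}$ and settle its sign by a second differentiation down to the polynomial fact $\delta(\delta^2+6\delta+4)\ge 0$, whereas the paper writes $f'(\delta)=-\ln(1+\delta)+\tfrac{2\delta}{2+\delta}$ (in fact an over-estimate of the true derivative, harmless but relying on the known bound $\ln(1+\delta)\ge\tfrac{2\delta}{2+\delta}$), and your argument explicitly covers all $\delta>0$ rather than only the $\delta>1$ range the paper states for Eq.~\eqref{eq:chernoff_application}.
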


With these concentration bounds of Lemmas~\ref{lem:chernoff} and \ref{lem:chernoff_2}, the rest of this section proves Theorems~\ref{thm:no-go}, \ref{thm:suff_two-copy}, \ref{thm:suff_single-copy}, and \ref{thm:optimality}.

\begingroup
  \renewcommand{\thetheorem}{\ref{thm:no-go}}
  \begin{theorem}[restated]
    \label{thm:no-go-restated}
    Let $\ket{\psi}$ be an $n$-qubit state having no orthogonal stabilizer state. Consider any single-copy benchmarking protocol producing an estimator $\hat{\epsilon}$ of the infidelity $\epsilon = 1 - \langle\psi|\rho|\psi\rangle$, satisfying the multiplicative error $|\hat{\epsilon}-\epsilon| \leq r\epsilon$ with probability at least $2/3$, for an arbitrary input state $\rho$. Such a protocol necessarily requires $N = \Omega\left(1/r^2\epsilon^2\right)$ copies of $\rho$.
  \end{theorem}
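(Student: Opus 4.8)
The plan is to reduce the benchmarking task to a binary hypothesis test and then invoke an information-theoretic lower bound on the $N$-round measurement transcript. First I would construct two input states with nearly equal infidelities that any $r$-multiplicative estimator must nonetheless distinguish. Fix any state $\tau$ supported on the orthogonal complement of $\ket{\psi}$ (which is nonzero since $\dim \mathcal{H} = 2^n \ge 2$), and set $\rho_b = (1-\epsilon_b)\ketbra{\psi}{\psi} + \epsilon_b\,\tau$ for $b\in\{0,1\}$, so that $\rho_b$ has infidelity exactly $\epsilon_b$. Choosing $\epsilon_0=\epsilon$ and $\epsilon_1=\epsilon(1+3r)$ makes the confidence intervals $[\epsilon_b(1-r),\epsilon_b(1+r)]$ disjoint for $r<1/3$, with gap $\epsilon_1-\epsilon_0=3r\epsilon=\Theta(r\epsilon)$ (the regime $r\ge 1/3$ only makes the target bound $\Omega(1/\epsilon^2)$ easier). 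Thresholding $\hat\epsilon$ between the two intervals then yields a test distinguishing $\rho_0$ from $\rho_1$ with error at most $1/3$ under each hypothesis, so by Le Cam the transcript distributions satisfy $\|\mathbb{P}_0-\mathbb{P}_1\|_{\mathrm{TV}}\ge 1/3$.

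The hypothesis that $\ket{\psi}$ has no orthogonal stabilizer state enters next, and this is the crux of the argument. Since the number of $n$-qubit stabilizer states is finite, the quantity $c:=\min_{\ket{s}\in{\rm Stab}_n}|\bra{s}\ket{\psi}|^2$ is strictly positive. In each round the protocol applies some Clifford $U_i$ (possibly adaptively) and measures in the computational basis, so outcome $x$ occurs with probability $P_i^{(b)}(x)=(1-\epsilon_b)|\bra{s}\ket{\psi}|^2+\epsilon_b\bra{s}\tau\ket{s}$, where $\ket{s}=U_i^\dagger\ket{x}$ is a stabilizer state. Two facts then hold \emph{uniformly over every Clifford choice}: a constant floor $P_i^{(b)}(x)\ge(1-\epsilon_b)c\ge c/2$ for small $\epsilon$, and a pointwise gap $|P_i^{(0)}(x)-P_i^{(1)}(x)|=|\epsilon_1-\epsilon_0|\,\big||\bra{s}\ket{\psi}|^2-\bra{s}\tau\ket{s}\big|=O(r\epsilon)$. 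The floor is exactly where the no-orthogonal-stabilizer assumption is indispensable; without it some outcome probability could vanish and the floor would collapse.

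Finally I would control the transcript distance. Because the protocol's rule for selecting $U_i$ from the observed history is identical under both hypotheses (it does not know which $\rho_b$ it was handed), the chain rule for relative entropy applies to the adaptive transcript, $D_{\mathrm{KL}}(\mathbb{P}_0 \,\|\, \mathbb{P}_1)=\sum_{i=1}^N \mathbb{E}_{h\sim\mathbb{P}_0}\!\left[ D_{\mathrm{KL}}\!\left( P_i^{(0)}(\cdot|h)\,\|\,P_i^{(1)}(\cdot|h)\right)\right]$. Each per-round term is bounded by the $\chi^2$-divergence, $D_{\mathrm{KL}}(P\|Q)\le\chi^2(P\|Q)=\sum_x (P(x)-Q(x))^2/Q(x)$, which by the floor and gap above is at most $(2/c)\,2^n\,O(r^2\epsilon^2)=O(r^2\epsilon^2)$, with the constant depending only on $n$ and $c$. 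Summing gives $D_{\mathrm{KL}}(\mathbb{P}_0\|\mathbb{P}_1)=O(N r^2\epsilon^2)$, and Pinsker's inequality yields $\|\mathbb{P}_0-\mathbb{P}_1\|_{\mathrm{TV}}\le\sqrt{\tfrac{1}{2}D_{\mathrm{KL}}(\mathbb{P}_0\|\mathbb{P}_1)}=O(\sqrt{N}\,r\epsilon)$. Combining with $\|\mathbb{P}_0-\mathbb{P}_1\|_{\mathrm{TV}}\ge 1/3$ forces $N=\Omega(1/r^2\epsilon^2)$.

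I expect the main obstacle to be the adaptive nature of the protocol: the per-round distributions are conditioned on a random, hypothesis-independent history, so a naive tensorization over independent rounds does not directly apply. The resolution is precisely that only the unknown state differs between the two hypotheses while the adaptive strategy is shared, which legitimizes the relative-entropy chain rule and reduces everything to the uniform per-round estimate derived above. A secondary technical point is ensuring that $c$ is bounded away from zero over \emph{all} stabilizer states rather than just typical ones — this is exactly the content of the no-orthogonal-stabilizer hypothesis, and it explains why the bound degrades to $\Omega(1/\epsilon)$ once an orthogonal stabilizer state exists and the floor $c$ is allowed to vanish.
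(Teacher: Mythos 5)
Your proposal is correct and follows essentially the same route as the paper's proof: a two-point reduction to distinguishing states of the form $(1-\epsilon_b)\ketbra{\psi}{\psi}+\epsilon_b\tau$, Le Cam/TV $\ge 1/3$, Pinsker's inequality, the chain rule for KL divergence over the adaptive transcript (valid because the adaptive strategy is hypothesis-independent), and a per-round bound $D_{\rm KL}=O(r^2\epsilon^2)$ hinging on the constant probability floor $c=\min_{\ket{s}\in{\rm Stab}_n}|\langle s|\psi\rangle|^2>0$, which is exactly where the paper also invokes the no-orthogonal-stabilizer hypothesis. Your cosmetic variations (arbitrary $\tau$ on the complement instead of the maximally mixed complement, $\epsilon_1=\epsilon(1+3r)$ instead of $(1-2r)\epsilon$, and bounding the per-round KL by the $\chi^2$-divergence rather than via $\log(1+u)\le u/\ln 2$) do not change the substance of the argument.
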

\endgroup

\begin{proof}
    Let $\ket{\psi}$ be an $n$-qubit state and $\hat{\epsilon} = \hat{\epsilon}(X_1,\dots,X_N)$ be the estimator of the input state $\rho$'s infidelity with $N$ samples. Here, $X_j \in \{0,1\}^n$ is the measurement outcome from $j$-th copy. Let us further assume that given the input state's infidelity $\epsilon$ and the multiplicative precision $r$, $\hat{\epsilon}$ guarantees to be $|\hat{\epsilon} - \epsilon|\le r\epsilon$ with probability at least $2/3$, if $N \ge N_{\epsilon,r}$. With this setup, the goal of the proof is to show that $N_{\epsilon,r}=\Omega(1/r^2\epsilon^2)$.
    
    We first show that using this estimator $\hat{\epsilon}$, we can distinguish the following states:
    \begin{align}
        \rho_{0} &= (1-\epsilon)\ketbra{\psi}{\psi} + \frac{\epsilon}{2^n -1}(I-\ketbra{\psi}{\psi}), \label{eq:rho_1}\\
        \rho_{1} &= (1-(1-2r)\epsilon)\ketbra{\psi}{\psi} + \frac{(1-2r)\epsilon}{2^n -1}(I-\ketbra{\psi}{\psi}),
        \label{eq:rho_2}
    \end{align}
    with probability at least 2/3. Here, we take $r<1/2$. Specifically, note that the infidelities of $\rho_0$ and $\rho_1$ are $\epsilon$ and $(1-2r)\epsilon$, respectively. We choose $N =\max\{N_{\epsilon,r},N_{(1-2r)\epsilon, r}\}$, and let $P$ and $Q$ be the distributions of the measurement outcomes $(X_1,\dots,X_N)$ for $\rho_0$ and $\rho_1$, respectively. Then, by the assumption on $\hat{\epsilon}$, we have
    \begin{align}
        \underset{(X_1,\dots,X_N)\sim P}{\mathbb{P}}\left[\hat{\epsilon}(X_1,\dots,X_N) \le (1+r)(1-2r)\epsilon\right] \ge 2/3,\\
        \underset{(X_1,\dots,X_N)\sim Q}{\mathbb{P}}\left[\hat{\epsilon}(X_1,\dots,X_N) \ge (1-r)\epsilon\right] \ge 2/3.\\
    \end{align}
    Since $(1+r)(1-2r)\epsilon \le (1-r)\epsilon$ the function $T:\{0,1\}^{n} \mapsto \{0,1\}$,
    \begin{equation}
        T(X_1,\dots,X_N) =
        \begin{cases}
            0, \quad&\text{if }\hat{\epsilon}(X_1,\dots,X_N) \ge (1-r)\epsilon,\\
            1, \quad&\text{if }\hat{\epsilon}(X_1,\dots,X_N) < (1-r)\epsilon,
        \end{cases}
    \end{equation}
    distinguishes $\rho_0$ and $\rho_1$ with probability at least $2/3$.
    
    Now, we show that for such $T$ to exist, $N=\Omega(1/r^2\epsilon^2)$. To this end, note that the total variation distance $d_{\rm TV}(R_1,R_2)$, for any probability distributions $R_1$ and $R_2$ over the sample space $\Omega$, should satisfy $d_{\rm TV}(R_1,R_2)=\sup_{A\subset \Omega}\left\lvert\underset{x \sim R_1}{\mathbb{P}}[x\in A]-\underset{x \sim R_2}{\mathbb{P}}[x\in A]\right\rvert$  Therefore,
    \begin{align}
        d_{\rm TV}(P,Q) &\ge \underset{(X_1,\dots,X_N) \sim P}{\mathbb{P}}[T(X_1,\dots,X_N)=0]-\underset{(X_1,\dots,X_N) \sim Q}{\mathbb{P}}[T(X_1,\dots,X_N)=0]\\
        &\ge \frac{1}{3}.
        \label{eq:tvd}
    \end{align}
    By Pinsker's inequality, we have
    \begin{equation}
        D_{\rm KL}(Q\|P) \ge \frac{2}{\ln 2}\left(d_{\rm TV}(P,Q)\right)^2,
    \end{equation}
    where $D_{\rm KL}(R_1\| R_2)=\sum_{x \in \Omega}R_1(x)\log\left(\frac{R_1(x)}{R_2(x)}\right)$ denotes the Kullback-Leibler divergence (KL divergence). Therefore, we have
    \begin{equation}
        D_{\rm KL}(Q\|P) \ge \frac{2}{9\ln 2}.
        \label{eq:pinsker}
    \end{equation}
    
    Meanwhile, the Clifford gate $U_i$ for the $i$-th measurement can be chosen based on the previous measurement outcomes $x_1,\dots,x_{i-1}$, so the distribution of $X_i$ depends on those previous outcomes. Therefore, the probabilities of measuring $X_i = x_i$ are
    \begin{align}
        P(X_i|X_1=x_1,\dots X_{i-1}=x_{i-1}),\\
        Q(X_i|X_1=x_1,\dots X_{i-1}=x_{i-1}).
    \end{align}
    Let us denote these distributions as $P_i(x_i|x_{<i})$ and $Q_i(x_i|x_{<i})$, respectively. Further, let $R(x_i)$ be the output distribution of the ideal magic state, i.e.,
    \begin{equation}
        R_i(x_i|x_{<i}) = \left\vert\langle x|U_i|\psi \rangle\right\vert^2.
    \end{equation}
    With these notions and Eqs.~\eqref{eq:rho_1} and \eqref{eq:rho_2}, we have
    \begin{align}
        P(x_i|x_{<i})&=R_i(x) + \epsilon\frac{1-2^nR_i(x)}{2^n-1}\\
        Q_i(x_i|x_{<i}) &= R_i(x) + (1-2r)\epsilon\frac{1-2^nR_i(x)}{2^n-1}
    \end{align}
    for all $i=1,\dots,N$ and $x_1,\dots x_i \in \{0,1\}^n$. We also have $P(x_i|x_{<i})=Q(x_i|x_{<i})+2r\epsilon\frac{1-2^nR(x)}{2^n-1}$. With these relations, the KL divergence of $P(x_i|x_{<i})$ and $Q(x_i|x_{<i})$ is bounded as
    \begin{align}
        D_{\rm KL}(P_i(\cdot|x_{<i})\|Q_i(\cdot|x_{<i}))
        &= \sum_{x_i \in\{0,1\}^n}P_i(x_i|x_{<i})\log\left(\frac{P_i(x_i|x_{<i})}{Q_i(x_i|x_{<i})}\right)\\
        &= \sum_{x_i \in\{0,1\}^n}\left(Q_i(x_i|x_{<i}) + 2r\epsilon\frac{1-2^nR_i(x_i|x_{<i})}{2^n-1}\right)
        \log\left(1 + \frac{2r\epsilon}{Q_i(x_i|x_{<i})}\frac{1-2^nR_i(x_i|x_{<i})}{2^n-1}\right)\\
        &\le\sum_{x_i \in\{0,1\}^n}\left(Q_i(x_i|x_{<i}) + 2r\epsilon\frac{1-2^nR_i(x_i|x_{<i})}{2^n-1}\right)\cdot \frac{2r\epsilon}{\ln 2\cdot Q_i(x_i|x_{<i})}\frac{1-2^nR_i(x_i|x_{<i})}{2^n-1}\\
        &=\frac{2r\epsilon}{\ln 2}\sum_{x_i \in\{0,1\}^n}\frac{1-2^nR_i(x_i|x_{<i})}{2^n-1} + \frac{4r^2\epsilon^2}{\ln 2}\sum_{x_i \in\{0,1\}^n}\frac{1}{Q_i(x_i|x_{<i})}\left(\frac{1-2^nR_i(x_i|x_{<i})}{2^n-1}\right)^2\\
        &=\frac{4r^2\epsilon^2}{\ln 2}\sum_{x_i \in\{0,1\}^n}\frac{R^2_i(x_i)}{Q_i(x_i|x_{<i})}\left(\frac{1/R_i(x_i|x_{<i})-2^n}{2^n-1}\right)^2.
    \end{align}
    Since $R_i(x_i|x_{<i})/Q_i(x_i|x_{<i})=1+O(\epsilon)$, we have
    \begin{align}
        D_{\rm KL}(P_i(\cdot|x_{<i})\|Q_i(\cdot|x_{<i}))&\le \frac{4r^2\epsilon^2(1+O(\epsilon))}{\ln 2}\sum_{x_i \in\{0,1\}^n}R_i(x_i|x_{<i})\left(\frac{1/R_i(x_i|x_{<i})-2^n}{2^n-1}\right)^2\\
        &= \frac{4r^2\epsilon^2(1+O(\epsilon))}{\ln 2}\underset{x_i \sim R_i}{\mathbb{E}}\left[\left(\frac{1-1/(2^nR_i(x_i|x_{<i}))}{1-1/2^n}\right)^2\right].
    \end{align}
    As $\ket{\psi}$ has no orthogonal stabilizer state, $R_i(x_i|x_{<i}) = \Theta(1)$ for all $x \in \{0,1\}^n$, and thus,
    \begin{equation}
        \underset{x_i \sim P_i(\cdot|x_{<i})}{\mathbb{E}}\left[\left(\frac{1-1/(2^nP_i(x_i|x_{<i}))}{1-1/2^n}\right)^2\right]=O(1).
    \end{equation}
    Therefore, $D_{\rm KL}(Q_i(\cdot|x_{<i})\|P_i(\cdot|x_{<i})) = O(r^2\epsilon^2)$. By the chain rule of the KL divergence,
    \begin{align}
         D_{\rm KL}(Q\|P)
         &= \sum_{i=1}^N Q_{<i}(x_{<i})D_{\rm KL}(Q_i(\cdot|x_{<i})\|P_i(\cdot|x_{<i}))\\
         &= \sum_{i=1}^N Q_{<i}(x_{<i})O(r^2\epsilon^2)\\
         &= N\cdot O(r^2\epsilon^2),
    \end{align}
    where $Q_{<i}(x_{<i}) = Q(X_1=x_1,\dots,X_{i-1}=x_{i-1})$. Therefore, with Eq.~\eqref{eq:pinsker}, we have
    \begin{equation}
        N \cdot O(r^2\epsilon^2) \ge \frac{2}{9\ln 2},
    \end{equation}
    and it leads to $N=\Omega(1/r^2\epsilon^2)$.

    Finally, since $N=\max\{N_{\epsilon,r},N_{(1-2r)\epsilon,r}\}$, $N_{\epsilon,r} = \Omega(1/r^2\epsilon)$ or $N_{(1-2r)\epsilon,r} =\Omega(1/r^2\epsilon)$ (or both). If $N_{\epsilon,r} = \Omega(1/r^2\epsilon)$, we achieve the desired immediately. If $N_{\epsilon,r} = \Omega(1/r^2\epsilon)$, simply putting $\epsilon'=(1-2r)\epsilon$ yields the same result since $N_{\epsilon',r}=\Omega(1/r^2\epsilon)=\Omega(1/r^2\epsilon')$.
\end{proof}

\begingroup
  \renewcommand{\thetheorem}{\ref{thm:suff_two-copy}}
  \begin{theorem}[restated]
    \label{thm:suff_two-copy-restated}
    For a $n$-qubit state $\ket{\psi} \in \mathcal{H}$, denote $\mathcal{H}=\mathcal{H}_0\oplus \dots \oplus \mathcal{H}_k$, where $\mathcal{H}_j$ are irreducible representations of $G_\psi$ and $\mathcal{H}_0={\rm span}\{\ket{\psi}\}$. If $\mathcal{H}_{0} \not\simeq \mathcal{H}_j$ for all $j=1,\dots,k$, then for $0<r<1$ and for any state $\rho$ with infidelity $\epsilon$, Bell measurement scheme outputs an estimator $\hat{\epsilon}$ satisfying $|\epsilon-\hat{\epsilon}|\le (r+O(\epsilon))\epsilon$ with probability at least $1-\delta$, using $N=O(\log(1/\delta)\cdot1/r^2\epsilon)$ copies of $\rho$.
  \end{theorem}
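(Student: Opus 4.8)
The plan is to reduce the Bell measurement scheme to the estimation of a single Bernoulli parameter and then invoke the Chernoff concentration bounds of Lemma~\ref{lem:chernoff}. The key ingredients are already assembled in the main text: under the hypothesis $\mathcal{H}_0 \not\simeq \mathcal{H}_j$ the twirled state takes the block-diagonal form of Eq.~\eqref{eq:twirled_state}, and the Bell measurement realizes a SWAP test whose odd-parity outcome probability is $P_{\rm odd} = (1-\Tr[\rho'^2])/2$ (Appendix~\ref{sec:swap_test}).

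First, I would pin down $P_{\rm odd}$ exactly in terms of $\epsilon$. From Eq.~\eqref{eq:twirled_state} the purity is $\Tr[\rho'^2] = (1-\epsilon)^2 + \epsilon^2\Tr[\sigma^2] = 1 - 2\epsilon + \epsilon^2(1+\Tr[\sigma^2])$, so that
\begin{equation}
    P_{\rm odd} = \epsilon - \tfrac{1}{2}\epsilon^2\bigl(1+\Tr[\sigma^2]\bigr) = \epsilon(1 - c\epsilon),
\end{equation}
where $c = \tfrac{1}{2}(1+\Tr[\sigma^2]) \in (\tfrac{1}{2},1]$ is bounded independently of $\epsilon$. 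In particular $P_{\rm odd} = \Theta(\epsilon)$ and $|P_{\rm odd}-\epsilon| = c\epsilon^2 = O(\epsilon^2)$. Next I would set up the estimator: the scheme runs $M = N/2$ independent rounds, each yielding an indicator $Y_j \in \{0,1\}$ of the event that $\sum_i x_i x_{i+n}$ is odd, so $Y_1,\dots,Y_M$ are i.i.d.\ ${\rm Bern}(P_{\rm odd})$, and I take $\hat{\epsilon} = \hat{q} := \tfrac{1}{M}\sum_{j=1}^M Y_j$.

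The sample-complexity count then follows from the two-sided bound $\mathbb{P}\bigl[\,|\hat{q}-P_{\rm odd}|\ge r P_{\rm odd}\,\bigr] < 2e^{-M P_{\rm odd} r^2/3}$ for $0<r<1$, which is implied by Lemma~\ref{lem:chernoff} with $q = P_{\rm odd}$ and $\delta = r$. Requiring the right-hand side to be at most $\delta$ forces $M \ge 3\ln(2/\delta)/(r^2 P_{\rm odd})$; since $P_{\rm odd} = \Theta(\epsilon)$, this is $M = O(\log(1/\delta)/r^2\epsilon)$, and hence $N = 2M = O(\log(1/\delta)/r^2\epsilon)$ copies suffice. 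Finally, on the high-probability event $|\hat{q} - P_{\rm odd}| \le r P_{\rm odd}$, I would convert the multiplicative guarantee on $P_{\rm odd}$ into one on $\epsilon$ by the triangle inequality,
\begin{equation}
    |\hat{\epsilon}-\epsilon| \le |\hat{q}-P_{\rm odd}| + |P_{\rm odd}-\epsilon| \le r P_{\rm odd} + c\epsilon^2 \le r\epsilon + c\epsilon^2 = (r + O(\epsilon))\epsilon,
\end{equation}
which is precisely the claimed bound.

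The step I expect to require the most care is this last conversion: the empirical mean concentrates around $P_{\rm odd}$ rather than around $\epsilon$ itself, so the quadratic bias $P_{\rm odd}-\epsilon = -c\epsilon^2$ must be tracked explicitly and shown to be absorbable into the $O(\epsilon)$ correction to the multiplicative precision. This is exactly the origin of the $(r+O(\epsilon))$ factor in the statement, as opposed to a clean $r\epsilon$. A secondary point is to verify that both Chernoff tails are controlled simultaneously and that $P_{\rm odd} \ge \epsilon/2$ in the small-$\epsilon$ regime, so that the $\Theta(\epsilon)$ estimate used in the sample-complexity count is rigorous.
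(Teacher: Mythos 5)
Your proposal is correct and follows essentially the same route as the paper's own proof: twirl to obtain the block form of Eq.~\eqref{eq:twirled_state}, use the SWAP-test identity $P_{\rm odd}=(1-\Tr[\rho'^2])/2$ to reduce each round to an i.i.d.\ ${\rm Bern}(\epsilon(1+O(\epsilon)))$ sample, apply the Chernoff bound of Lemma~\ref{lem:chernoff}, and absorb the quadratic bias $|P_{\rm odd}-\epsilon|=O(\epsilon^2)$ into the $(r+O(\epsilon))\epsilon$ guarantee via the triangle inequality. If anything, your write-up is slightly more explicit than the paper's (tracking the bias constant $c$, the lower bound $P_{\rm odd}\ge\epsilon/2$, and the factor of two between rounds and consumed copies), but the argument is the same.
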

\endgroup

\begin{proof}
    Given $\rho$, we first apply twirling, i.e.,
    \begin{equation}
        \rho \rightarrow \rho' = \frac{1}{|G_\psi|}\sum_{U\in G_\psi}U\rho U^{\dagger}.
    \end{equation}
    As seen in Eq.~(11) in the main text, $\mathcal{H}_0\not\simeq H_j$ for all $j=1,\dots, k$ implies $\rho'=(1-\epsilon)\ketbra{\psi}{\psi} \oplus \epsilon\sigma$ for some density matrix $\sigma$ of the subspace $\mathcal{H}_1 \oplus \dots \oplus \mathcal{H}_k$.
    Given $\rho' \otimes \rho'$, we perform a Bell measurement on each pair of qubits. For a measurement outcome $x\in\{0,1\}^{2n}$, we set a random variable $X$ as
    \begin{equation}
        X = 
        \begin{cases}
            1,\quad &\text{if $\sum_{i=1}^n x_i \cdot x_{i+n}$ is odd},\\
            0,\quad &\text{otherwise}.
        \end{cases}
    \end{equation}
    As shown in Appendix~\ref{sec:swap_test}, we have
    Then $\mathbb{P}[X=1]=(1-\Tr(\rho'^2))/2$. Therefore, $\mathbb{P}[X=1]=\epsilon-\epsilon^2(1+\Tr(\sigma^2))/2=\epsilon(1+O(\epsilon))$, and thus $X \sim {\rm Bern}(\epsilon(1+O(\epsilon)))$
    By repeating it $N$ times, we have samples $X_1,\dots ,X_N$. We set the estimator
    \begin{equation}
        \hat{\epsilon}=\frac{1}{N}\sum_{j=1}^N X_j.
    \end{equation}
    It is straightforward that $\mathbb{E}[\hat{\epsilon}]=(1+O(\epsilon))\epsilon$, and by Lemma~\ref{lem:chernoff}, we have
    \begin{equation}
        \mathbb{P}[|\hat{\epsilon}-\mathbb{E}[\hat{\epsilon}]|\ge r\mathbb{E}[\hat{\epsilon}]] \le 2\exp\left(-\frac{r^2\mathbb{E}[\hat{\epsilon}]}{3}N\right),
    \end{equation}
    for $0<r<1$. By the triangle inequality,
    \begin{align}
        |\epsilon-\hat{\epsilon}| &\le |\epsilon-\mathbb{E}[\hat{\epsilon}]|+|\hat{\epsilon}-\mathbb{E}[\hat{\epsilon}]|\\
        &= r\mathbb{E}[\hat{\epsilon}]+ O(\epsilon^2)\\
        &= (r+O(\epsilon))\epsilon,
    \end{align}
    with probability at least $1-2\exp\left(-\frac{r^2\mathbb{E}[\hat{\epsilon}]}{3}N\right)$. Therefore, we achieve
    \begin{equation}
        |\epsilon-\hat{\epsilon}|\le (r+O(\epsilon))\epsilon
    \end{equation}
    with probability at least $1-\delta$ by setting $N=\ceil{\frac{3}{r^2\mathbb{E}[\hat{\epsilon}]}(\ln1/\delta+\ln2)}=O(\log(1/\delta)\cdot1/r^2\epsilon)$.
\end{proof}

\begingroup
  \renewcommand{\thetheorem}{\ref{thm:suff_single-copy}}
  \begin{theorem}[restated]
    \label{suff_single-copy-restated}
    For a $n$-qubit state $\ket{\psi} \in \mathcal{H}$, with $n$ being constant larger than 1, denote $\mathcal{H}=\mathcal{H}_0\oplus \mathcal{H}_1\oplus \dots \oplus \mathcal{H}_k$, where $\mathcal{H}_j$ are irreducible representations of $G_\psi$ and $\mathcal{H}_0={\rm span}\{\ket{\psi}\}$. Suppose there exist stabilizer states $\ket{s_1}\in\mathcal{H}_1,\dots,\ket{s_k}\in\mathcal{H}_k$. Then, there is a single-copy scheme that, for $0<r<1$ and any state $\rho$ with infidelity $\epsilon$, outputs an estimator $\hat{\epsilon}$ satisfying $|\epsilon-\hat{\epsilon}|\le r\epsilon$ with probability at least $1-\delta$, using $N=O(\log(1/\delta)\cdot1/r^2\epsilon)$ copies of $\rho$.
  \end{theorem}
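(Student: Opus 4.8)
The plan is to collapse the whole estimation onto a single Bernoulli parameter proportional to $\epsilon$, so that the Chernoff bound of Lemma~\ref{lem:chernoff} delivers the stated complexity in one shot. The whole scheme stays within single-copy Clifford measurements: per round we twirl, randomly pick one of the orthogonal stabilizer states, and check overlap with it.

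First I would record the structure of the twirled state $\rho'=\frac{1}{|G_\psi|}\sum_{U\in G_\psi}U\rho U^\dagger$. The Schur-lemma computation already carried out in the main text gives $\rho'_{ii}=\lambda_i\Pi_i$ with $\lambda_i\ge 0$ (nonnegativity from positive semidefiniteness of $\rho'$), and $\lambda_0=1-\epsilon$ since twirling preserves fidelity. Taking the trace and using $\Tr[\rho'_{ij}]=0$ for $i\neq j$ yields $\sum_{i=0}^k\lambda_i\dim\mathcal{H}_i=1$, so with $d_j:=\dim\mathcal{H}_j$,
\[
    \epsilon=\sum_{j=1}^k d_j\lambda_j .
\]
This uses only the diagonal block structure, not any equivalence condition on the irreps. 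Next I would read off each $\lambda_j$ as a stabilizer overlap: since $\ket{s_j}\in\mathcal{H}_j$ we have $\Pi_i\ket{s_j}=\delta_{ij}\ket{s_j}$, hence $\langle s_j|\rho'|s_j\rangle=\langle s_j|\rho'_{jj}|s_j\rangle=\lambda_j$.

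Next I would realize these overlaps as honest single-copy Clifford measurements and fold in a randomized index choice. Because $\ket{s_j}$ is a stabilizer state there is a Clifford $V_j$ with $V_j\ket{0^n}=\ket{s_j}$, and $\langle 0^n|V_j^\dagger\rho'V_j|0^n\rangle=\lambda_j$. To stay strictly within the single-copy framework I realize the twirl stochastically: in each round sample $U$ uniformly from $G_\psi$, so that measuring $\rho$ in the basis containing $U^\dagger\ket{s_j}$ succeeds with probability $\mathbb{E}_U[\langle U^\dagger s_j|\rho|U^\dagger s_j\rangle]=\langle s_j|\rho'|s_j\rangle=\lambda_j$. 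The key device is a randomized choice of $j$: writing $D=\sum_{j=1}^k d_j=2^n-1$, in each round I select index $j$ with probability $d_j/D$, apply the single Clifford $V_j^\dagger U$, measure in the computational basis, and set $Z=1$ iff the outcome is $0^n$. Marginalizing over $j$ gives $\mathbb{E}[Z]=\tfrac1D\sum_{j=1}^k d_j\lambda_j=\epsilon/D$, and since $Z\in\{0,1\}$ we have $Z\sim{\rm Bern}(\epsilon/D)$ exactly.

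Finally, from i.i.d.\ samples $Z_1,\dots,Z_N$ I set $\hat\epsilon=D\cdot\frac1N\sum_{i=1}^N Z_i$, which is unbiased. Applying Lemma~\ref{lem:chernoff}(i),(iii) to $\bar Z$ with mean $\epsilon/D$ gives $\mathbb{P}[\,|\hat\epsilon-\epsilon|\ge r\epsilon\,]=\mathbb{P}[\,|\bar Z-\epsilon/D|\ge r\epsilon/D\,]<2\exp(-N r^2\epsilon/(3D))$, so $N=O(\log(1/\delta)\cdot D/(r^2\epsilon))$ suffices; because $n$ is a constant, $D=2^n-1$ and $k$ are constants, giving $N=O(\log(1/\delta)\cdot 1/r^2\epsilon)$ with the clean multiplicative bound $r\epsilon$ (no $O(\epsilon)$ correction). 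I expect the only genuinely delicate point to be packaging everything as a legitimate single-copy Clifford scheme—folding the twirl $U$, the index randomization, and the rotation $V_j^\dagger$ into one Clifford per copy—rather than any hard estimate, since the concentration is immediate once $Z$ is identified as a clean ${\rm Bern}(\epsilon/D)$ variable. An alternative route estimates each $\lambda_j$ on a separate $1/k$ fraction of the samples and combines via $\hat\epsilon=\sum_{j=1}^k d_j\hat\lambda_j$ with a union bound over the $k=O(1)$ estimators, but the randomized-$j$ construction is cleaner and avoids the bookkeeping.
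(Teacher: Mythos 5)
Your proposal is correct, and it takes a genuinely different (and cleaner) route than the paper. The paper estimates each block coefficient $\lambda_i$ separately: it splits the samples into $k$ batches, estimates $\hat\lambda_i$ from a dedicated batch of size $N_i$, and combines via $\hat\epsilon=\sum_i \hat\lambda_i\dim\mathcal{H}_i$ with a triangle inequality and a union bound over the $k$ estimators. The delicate point there is that an individual $\lambda_i$ can be far smaller than the target accuracy $\frac{r\epsilon}{k\dim\mathcal{H}_i}$, so the relative deviation $\delta=\frac{r\epsilon}{k\dim\mathcal{H}_i\lambda_i}$ may exceed $1$ and the standard multiplicative Chernoff bound (Lemma~\ref{lem:chernoff}(ii)) does not apply; the paper must invoke the large-deviation variant of Lemma~\ref{lem:chernoff_2} and then bound the exponent using $\lambda_i\le\epsilon$. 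Your importance-sampled index choice ($j$ with probability $d_j/D$, $D=2^n-1$) sidesteps this entirely: the aggregated outcome is exactly ${\rm Bern}(\epsilon/D)$, the required relative deviation is just $r<1$, and a single application of the standard two-sided Chernoff bound finishes the proof with no union bound and no large-deviation regime. Your verification of the structural facts matches the paper's (Schur's lemma gives $\rho'_{jj}=\lambda_j\Pi_j$ without any equivalence assumption on the irreps, $\langle s_j|\rho'|s_j\rangle=\lambda_j$ since only the $jj$ block survives, and $\epsilon=\sum_j d_j\lambda_j$ from the trace), your stochastic realization of the twirl folded with $V_j^\dagger$ into one Clifford per copy is a legitimate single-copy scheme in the paper's sense (the paper's own twirl is realized the same way), and both routes yield $N=O(\log(1/\delta)/r^2\epsilon)$ for constant $n$ — indeed your constant is mildly better, roughly a factor $k$ smaller than the paper's $\sum_i\lceil\frac{3k\dim\mathcal{H}_i}{r^2\epsilon}(\ln k/\delta+\ln 2)\rceil$, and like the paper you obtain the clean error $r\epsilon$ with no $O(\epsilon)$ correction, as the theorem requires.
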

\endgroup

\begin{proof}
    Suppose $\ket{\psi}$ is an $n\ge2$-qubit state, where $n$ is constant. Since $\rho'_{ii} = \lambda_i\Pi_i$ for $i=1,\dots,k$, we have
    \begin{equation}
        \epsilon=\sum_{i=1}^k \lambda_i \cdot \dim\mathcal{H}_i.
    \end{equation}
    Our strategy is to estimate each $\lambda_i$ using the stabilizer state $\ket{s_i}$ for $i=1,\dots,k$. Specifically, for each $i=1,\dots, k$, we choose $U_i \in \mathcal{C}_n$ such that $U_i^\dagger \ket{0^n}=\ket{s_i}$. We apply $U_i$ to the twirled state $\rho'$ and measure all qubits in the computational basis state. Then, for a measurement outcome $x \in \{0,1\}^n$, we set a random variable $X^{(i)}$ as
    \begin{equation}
        X^{(i)} = 
        \begin{cases}
            1,\quad &\text{if $x=0^n$},\\
            0,\quad &\text{otherwise}.
        \end{cases}
    \end{equation}
    Since $\langle s_i|\rho'|s_i \rangle = \langle s_i|\rho'_{ii}|s_i \rangle=\lambda_i$, $X^{(i)}\sim {\rm Bern}(\lambda_i)$. Repeating it $N_i$ times, we have samples $X^{(i)}_1,\dots,X^{(i)}_{N_i}$. We set the estimator of $\lambda_i$ as
    \begin{equation}
        \hat{\lambda}_i = \frac{1}{N}\sum_{j=1}^N X^{(i)}_j,
    \end{equation}
    then it is straightforward that $\mathbb{E}[\hat{\lambda}_i]=\lambda_i$. By Lemma~\ref{lem:chernoff_2},
    \begin{align}
        \mathbb{P}\left[\hat{\lambda}_i-\lambda_i \ge \frac{r\epsilon}{k\cdot \dim\mathcal{H}_i}\right]
        &< \exp\left(-\frac{\left(\frac{r\epsilon}{k\dim\mathcal{H}_i\lambda_i}\right)^2}{2+\frac{r\epsilon}{k\dim\mathcal{H}_i\lambda_i}}N\lambda_i\right)
        \\
        &= \exp\left(-\frac{\left(\frac{r}{k\dim\mathcal{H}_i}\right)^2}{2\lambda_i+\frac{r\epsilon}{k\dim\mathcal{H}_i}}\epsilon^2 N\right)\\
        &\le \exp\left(-\left(\frac{r}{k\dim\mathcal{H}_i}\right)^2 \frac{\epsilon N}{3}\right)
        \label{eq:temp}
    \end{align}
    Meanwhile, by applying Lemma~\ref{lem:chernoff}, we have
    \begin{align}
        \mathbb{P}\left[\hat{\lambda}_i-\lambda_i \le -\frac{r\epsilon}{k\cdot \dim\mathcal{H}_i}\right]
        &< \exp\left(-\left(\frac{r\epsilon}{k\dim\mathcal{H}_i\lambda_i}\right)^2\frac{N\lambda_i}{2}\right)\\
        &\le \exp\left(-\left(\frac{r}{k\dim\mathcal{H}_i}\right)^2\frac{\epsilon N}{2}\right).
        \label{eq:temp____}
    \end{align}
    By combining Eqs.~\eqref{eq:temp} and \eqref{eq:temp____}, we have
    \begin{equation}
        \mathbb{P}\left[\left\vert\hat{\lambda}_i-\lambda_i \right\vert\le \frac{r\epsilon}{k\cdot \dim\mathcal{H}_i}\right] < 2\exp\left({-\left(\frac{r}{k\dim\mathcal{H}_i}\right)^2\frac{\epsilon N}{3}}\right).
    \end{equation}
    By choosing $N_i = \left\lceil \frac{3k\dim\mathcal{H}_i}{r^2 \epsilon}(\ln k/\delta + \ln2) \right\rceil$, we have $\left\vert\hat{\lambda}_i-\lambda_i\right\vert \le \frac{r\epsilon}{k\cdot \dim\mathcal{H}_i}$ with probability at least $\delta/k$.

    Finally, we set
    \begin{equation}
        \hat{\epsilon}=\sum_{i=1}^{k} \hat{\lambda}_i\cdot \dim\mathcal{H}_i.
    \end{equation}
    Then it leads to
    \begin{align}
        \left\vert\hat{\epsilon} -\epsilon\right\vert
        &\le \sum_{i=1}^k\left\vert \hat{\lambda}_i -\lambda_i\right\vert \cdot \dim \mathcal{H}_i\\
        &\le \sum_{i=1}^{k} \frac{r\epsilon}{k\dim\mathcal{H}_i} \cdot \dim\mathcal{H}_i\\
        &=r\epsilon,
    \end{align}
    with probability at least $(1-\delta/k)^k \ge 1-\delta$. The corresponding sample complexity is
    \begin{align}
        N
        &= \sum_{i=1}^k N_i\\
        &= \sum_{i=1}^k \left\lceil \frac{3k\dim\mathcal{H}_i}{r^2 \epsilon}(\ln k/\delta + \ln2) \right\rceil\\
        &= O(\log(1/\delta)\cdot1/r^2\epsilon),
    \end{align}
    since both $k$ and $\dim\mathcal{H}_i$'s are constant.
\end{proof}

\begingroup
  \renewcommand{\thetheorem}{\ref{thm:optimality}}
  \begin{theorem}[restated]
    \label{thm:optimality-restated}
    Let $\ket{\psi}$ be an $n$-qubit state. Suppose there exists a benchmarking scheme that allows arbitrary joint measurements on multiple copies of $\rho$ and, for any noisy state $\rho$ with infidelity $\epsilon$, outputs an estimator $\hat{\epsilon}$ satisfying $|\hat{\epsilon} - \epsilon| \leq r\epsilon$ with probability at least $2/3$, while consuming $N$ copies of $\rho$ in total. Then, $N = \Omega(1/r^2\epsilon)$.
  \end{theorem}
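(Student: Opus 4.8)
The plan is to prove the lower bound by a two-point hypothesis-testing argument, exactly in the spirit of the proof of Theorem~\ref{thm:no-go}, but upgraded so that it applies even to collective measurements on all $N$ copies at once. The key idea is to choose the two hard instances to be \emph{commuting} density matrices, so that the extra power of joint quantum measurements buys nothing and the problem collapses to a purely classical distinguishing task. Concretely, fix a state $\ket{\psi^\perp}$ orthogonal to $\ket{\psi}$ and, assuming $0<r<1/2$ (the range $r\ge 1/2$ follows by applying the bound with $r=1/4$, which already yields $\Omega(1/\epsilon)=\Omega(1/r^2\epsilon)$ when $r=O(1)$), define
\begin{align}
  \rho_0 &= (1-\epsilon)\ketbra{\psi}{\psi} + \epsilon\,\ketbra{\psi^\perp}{\psi^\perp},\\
  \rho_1 &= (1-(1-2r)\epsilon)\ketbra{\psi}{\psi} + (1-2r)\epsilon\,\ketbra{\psi^\perp}{\psi^\perp},
\end{align}
with infidelities $\epsilon$ and $(1-2r)\epsilon$, respectively. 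As in Theorem~\ref{thm:no-go}, thresholding the estimator $\hat\epsilon$ at $(1-r)\epsilon$ turns the benchmarking scheme into a test that distinguishes $\rho_0^{\otimes N}$ from $\rho_1^{\otimes N}$ with success probability at least $2/3$, using the separation $(1+r)(1-2r)\epsilon<(1-r)\epsilon$. By the Holevo--Helstrom bound, this forces $\tfrac12\|\rho_0^{\otimes N}-\rho_1^{\otimes N}\|_1 \ge 1/3$.

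Next I would exploit commutativity. Since $\rho_0$ and $\rho_1$ are simultaneously diagonal in any orthonormal basis extending $\{\ket{\psi},\ket{\psi^\perp}\}$, so are their $N$-fold tensor powers, and hence the trace distance equals the classical total variation distance between the eigenvalue distributions. These distributions are precisely the product Bernoulli laws $P={\rm Bern}(\epsilon)^{\otimes N}$ and $Q={\rm Bern}((1-2r)\epsilon)^{\otimes N}$ (each copy records whether it lies in $\ket{\psi}$ or $\ket{\psi^\perp}$, and all other eigenvalues vanish). Therefore $d_{\rm TV}(P,Q)=\tfrac12\|\rho_0^{\otimes N}-\rho_1^{\otimes N}\|_1\ge 1/3$. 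Applying Pinsker's inequality together with the tensorization (chain rule) of the KL divergence then gives
\begin{equation}
  \frac{1}{9}\le d_{\rm TV}(P,Q)^2 \le \frac{\ln 2}{2}\,D_{\rm KL}(P\|Q)=\frac{\ln 2}{2}\,N\cdot D_{\rm KL}\!\big({\rm Bern}(\epsilon)\,\big\|\,{\rm Bern}((1-2r)\epsilon)\big).
\end{equation}

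The final step is to bound the single-copy divergence. Using $D_{\rm KL}(p\|q)\le \chi^2(p\|q)=\frac{(p-q)^2}{q(1-q)}$ with $p=\epsilon$ and $q=(1-2r)\epsilon$, the numerator is $(2r\epsilon)^2$ while the denominator is $\Theta(\epsilon)$, so $D_{\rm KL}({\rm Bern}(\epsilon)\|{\rm Bern}((1-2r)\epsilon))=O(r^2\epsilon)$. Substituting into the displayed inequality yields $N=\Omega(1/r^2\epsilon)$, as claimed. I expect the only genuinely delicate point to be the justification that arbitrary collective measurements cannot help, and this is exactly what the commuting construction resolves: for commuting states the Holevo--Helstrom-optimal measurement coincides with the classical measurement in the common eigenbasis, so no joint strategy can outperform the classical test. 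It is also worth emphasizing where the improvement over Theorem~\ref{thm:no-go} originates: there every stabilizer measurement produces ${\rm Bern}(c+O(\epsilon))$ with a \emph{constant} offset $c$, giving per-copy divergence $O(r^2\epsilon^2)$, whereas here the hard instance lives in a two-dimensional block whose off-diagonal eigenvalue is $\Theta(\epsilon)$, shrinking the denominator $q(1-q)$ from $\Theta(1)$ to $\Theta(\epsilon)$ and thus improving the per-copy divergence to $O(r^2\epsilon)$ — precisely the factor that sharpens the $1/\epsilon^2$ bound into the optimal $1/\epsilon$.
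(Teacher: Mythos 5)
Your proposal is correct and follows essentially the same route as the paper's proof: the same two commuting hard instances $\rho_0=(1-\epsilon)\ketbra{\psi}{\psi}+\epsilon\ketbra{\psi^\perp}{\psi^\perp}$ and $\rho_1$ with infidelity $(1-2r)\epsilon$, the same thresholding of $\hat\epsilon$ at $(1-r)\epsilon$ to get a binary test with trace-distance lower bound $2/3$, and the same Pinsker-plus-tensorization step with per-copy divergence $O(r^2\epsilon)$. Your "classical" detour is only cosmetically different: since the states commute, the quantum relative entropy $S(\rho_0\|\rho_1)$ the paper computes is literally the Bernoulli KL divergence $D_{\rm KL}({\rm Bern}(\epsilon)\|{\rm Bern}((1-2r)\epsilon))$ that you bound via $\chi^2$, and the paper's use of $\frac{1}{2}\|\rho-\sigma\|_1=\max_{0\le\Lambda\le I}\Tr[(\rho-\sigma)\Lambda]$ is the Helstrom bound you invoke; your $\chi^2$ shortcut is arguably cleaner than the paper's explicit logarithm expansion, though both implicitly need $r$ bounded away from $1/2$ (your denominator $q(1-q)=\Theta((1-2r)\epsilon)$ degrades as $r\to 1/2$, just as the paper's expansion of $-\log(1-2r)$ does).

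One side remark in your writeup is wrong, however: the claim that the range $r\ge 1/2$ "follows by applying the bound with $r=1/4$" does not work, because a scheme guaranteeing precision $r\ge 1/2$ does \emph{not} satisfy the stronger precision-$1/4$ premise, so you cannot invoke the bound at a smaller $r$. The correct fix for constant $r\in[1/2,1)$ is to widen the gap between the two hypotheses instead: distinguish infidelity $\epsilon$ from $\epsilon'=c\,\frac{1-r}{1+r}\epsilon$ for some constant $c<1$, so that $(1+r)\epsilon'<(1-r)\epsilon$ and thresholding still works, giving per-copy divergence $O(\epsilon)$ and hence $N=\Omega(1/\epsilon)=\Omega(1/r^2\epsilon)$. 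This is a minor blemish — the paper itself simply assumes $r<1/2$ and never addresses this regime — but as stated your parenthetical justification is invalid.
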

\endgroup

\begin{proof}
    Suppose $\ket{\psi}$ is an $n$-qubit state, and let $\hat{\epsilon}=\hat{\epsilon}(\rho^{\otimes N})$ the estimator that outputs infidelity of $\rho$ by performing a positive operator-valued measurement (POVM) on $\rho^{\otimes N}$. To be concrete, let a set $\{(\epsilon_j,\Lambda_j)\}_j$ be the POVM where $\Lambda_j$'s are positive semi-definite operators such that $\sum_j\Lambda_j=I$, and $\hat{\epsilon}_j$ be the corresponding estimation values. With this notation, $\hat{\epsilon}(\rho^{\otimes N})=\epsilon_j$ with probability $\Tr\left(\rho^{\otimes N}\Lambda_j\right)$. Let us further assume that given the input state's infidelity $\epsilon$ and the multiplicative precision $r$, $\hat{\epsilon}$ guarantees to be $|\hat{\epsilon} - \epsilon|\le r\epsilon$ with probability at least $2/3$, if $N \ge N_{\epsilon,r}$.
    
    We first show that using this estimator $\hat{\epsilon}$, we can distinguish the following states:
    \begin{align}
        \rho_{0} &= (1-\epsilon)\ketbra{\psi}{\psi}+\epsilon\ketbra{\psi^\perp}{\psi^\perp},\\
        \rho_{1} &= (1-(1-2r)\epsilon)\ketbra{\psi}{\psi} + (1-2r)\epsilon\ketbra{\psi^\perp}{\psi^\perp},
        \label{eq:rho1}
    \end{align}
    with probability at least 2/3. Here, we take $r<1/2$, and $\ket{\psi^\perp}$ is a state that is orthogonal to $\ket{\psi}$. Specifically, note that $\rho_0$ and $\rho_1$ have infidelities $\epsilon$ and $(1-2r)\epsilon$, respectively. Therefore, choosing $N =\max\{N_{\epsilon,r},N_{(1-2r)\epsilon, r}\}$, we have
    \begin{align}
        \hat{\epsilon}(\rho_0^{\otimes N}) &\ge (1-r)\epsilon & \text{with probability $\ge 2/3$},\\
        \hat{\epsilon}(\rho_1^{\otimes N}) &\le (1+r)(1-2r)\epsilon<(1-r)\epsilon & \text{with probability $\ge 2/3$}.
    \end{align}
    With this relation, we can construct another POVM $\{\Lambda'_0,\Lambda'_1\}$ that distinguishes $\rho_0$ and $\rho_1$, where
    \begin{align}
        \Lambda'_0 &= \sum_{\Lambda_j: \hat{\epsilon}_j \ge (1-r)\epsilon} \Lambda_j\\
        \Lambda'_1 &= \sum_{\Lambda_j: \hat{\epsilon}_j < (1-r)\epsilon} \Lambda_j.
    \end{align}
    As $\Tr\left(\rho_0^N \Lambda_0\right)\ge2/3$ and $\Tr\left(\rho_1^N \Lambda_1\right)\ge2/3$, we can distinguish $\rho_0$ and $\rho_1$ with probability at least $2/3$.

    Now, we show that $N$ should be $\Omega(1/r^2\epsilon)$ for such POVM to exist. Since $\frac{1}{2}\left\|\rho-\sigma\right\|_1 = \max_{0\le \Lambda\le I}\Tr\left((\rho-\sigma)\Lambda\right)$ for any density matrices $\rho$ and $\sigma$, we have
    \begin{align}
        \left\|\rho_0^{\otimes N}-\rho_1^{\otimes N}\right\|_1
        &\ge 2\Tr\left((\rho_0^{\otimes N}-\rho_1^{\otimes N})\Lambda'_0\right)\\
        &\ge 2\times(2/3-1/3)\\
        &=2/3,
    \end{align}
    where we used $\Lambda'_0+\Lambda'_1=I$ for the second inequality. By the quantum Pinsker inequality and the fact that the quantum relative entropy is tensorized (see e.g., Ref.~\cite{Wilde_2013}),
    \begin{align}
        \frac{1}{2\ln 2}\|\rho_0^{\otimes N}-\rho_1^{\otimes N}\|_1^2
        &\le S(\rho_0^{\otimes N}\|\rho_1^{\otimes N})\\
        &=N\cdot S(\rho_0\|\rho_1),
    \end{align}
    where $S(\rho\|\sigma)$ is relative entropy of some states $\rho$ and $\sigma$. Meanwhile,
    \begin{align}
        S(\rho_0\|\rho_1)
        &=\Tr\rho_0 (\log\rho_0 - \log\rho_1) \\
        &= (1-\epsilon)\log\left(\frac{1-\epsilon}{1-(1-2r)\epsilon}\right)+\epsilon\log\left(\frac{1}{1-2r}\right)\\
        &=(1-\epsilon)\log\left(1-\frac{2r\epsilon}{1-(1-2r)\epsilon}\right)-\epsilon\log(1-2r)\\
        &\le -\frac{2r\epsilon}{\ln2}\cdot\frac{1-\epsilon}{1-(1-2r)\epsilon} + \frac{2r\epsilon}{\ln 2}(1+r+O(r^2))\\
        &\le \frac{2r\epsilon}{\ln2}\left(-1+\frac{2r\epsilon}{1-(1-2r)\epsilon}+1+r+O(r^2)\right)\\
        &\le \frac{2r\epsilon}{\ln2}\left(r+O(r\epsilon)+O(r^2)\right)\\
        &=O(r^2\epsilon).
    \end{align}
    Therefore, $N \cdot O(r^2\epsilon) \ge \frac{1}{2\ln 2}\|\rho_0^{\otimes N}-\rho_1^{\otimes N}\|_1^2$. Since $\frac{1}{2\ln 2}\|\rho_0^{\otimes N}-\rho_1^{\otimes N}\|_1^2 \ge 2/3$, we have $N=\Omega(1/r^2\epsilon)$.

    Finally, since $N=\max\{N_{\epsilon,r},N_{(1-2r)\epsilon,r}\}$, $N_{\epsilon,r} = \Omega(1/r^2\epsilon)$ or $N_{(1-2r)\epsilon,r} =\Omega(1/r^2\epsilon)$ (or both). If $N_{\epsilon,r} = \Omega(1/r^2\epsilon)$, we are done immediately. If $N_{\epsilon,r} = \Omega(1/r^2\epsilon)$, simply putting $\epsilon'=(1-2r)\epsilon$ yields the same result since $N_{\epsilon',r}=\Omega(1/r^2\epsilon)=\Omega(1/r^2\epsilon')$.
\end{proof}

\setcounter{theorem}{5}

\section{\label{sec:twirling}Detailed analysis of state twirling}

In this section, we provide a detailed explanation of state twirling.
We begin with a Pauli framework we developed for analyzing state twirling operations with respect to subgroups of the Clifford group. This framework turns out to be useful for our case studies for multi-qubit magic states. Specifically, we subsequently illustrate the specific consequences of state twirling protocols applied to the $\ket{\text{CZ}}$ and $\ket{\text{CCZ}}$ magic states. In addition, we also present an explicit algorithm that tells whether the conditions for Bell measurement scheme and single-copy scheme are satisfied for an arbitrary magic state.

\subsection{Pauli framework for state twirling with Clifford subgroup}

We first examine the general effect of twirling with a subgroup $G$ of the $n$-qubit Clifford group $\mathcal{C}_n$. For an arbitrary $n$-qubit density matrix $\rho$, the twirling operation with respect to the subgroup $G$ is defined as
\begin{equation}
    \rho \mapsto \rho' = \frac{1}{|G|} \sum_{U\in G} U \rho U^\dagger.
\end{equation}
To understand the effect of this operation, we investigate how the subgroup $G$ partitions the set of Hermitian Pauli operators. Explicitly, the set of Hermitian Pauli operators $\mathcal{P}^{\text{Herm}}_n$ is
\begin{equation}
\mathcal{P}^{\text{Herm}}_n= \{\pm P_1 \otimes \dots \otimes P_n : P_j \in \{I,X,Y,Z\}, \quad j=1,\dots,n\}.
\end{equation}
We introduce an equivalence relation $\sim_G$ on the set $\mathcal{P}^{\text{Herm}}_n$ defined by
\begin{equation}
    P \sim_G P' \quad \text{if and only if} \quad \exists U \in G \text{ such that } U P U^\dagger = P',
\end{equation}
for $P, P' \in \mathcal{P}^{\rm Herm}_n$. Since $G$ is a subgroup of the Clifford group, it is straightforward to verify that $\sim_G$ is indeed an equivalence relation, thus partitioning $\mathcal{P}^{\rm Herm}_n$ into distinct equivalence classes. Note also that letting $K \subset \mathcal{P}^{\rm  Herm}_n$ be an equivalence class, $UPU^\dagger \in K$ for all $P \in K$ and $U \in G$, as $U$ is a Clifford gate. Since every $U\in G$ is invertible, $U$ only permutes the elements in $K$, i.e., $UKU^\dagger:=\{UPU^\dagger:P\in K\} = K$.

With this partition into equivalence classes, we can simplify the twirled state $\rho'$ as follows. Since the density matrix $\rho$ can be expressed as a linear combination of the elements in $\mathcal{P}^{\rm Herm}_n$, we can denote
\begin{equation}
    \rho = \sum_{P\in\mathcal{P}^{\rm Herm}_n} c_P P,
\end{equation}
where $c_P$'s are real numbers. We decompose $\mathcal{P}^{\rm Herm}_n$ as
\begin{equation}
    \mathcal{P}^{\rm Herm}_n=\bigsqcup_{\alpha}K_\alpha,
\end{equation}
where $K_\alpha$'s are the distinct equivalence classes with respect to $\sim_G$. This leads to
\begin{align}
    \rho' &= \frac{1}{|G|}\sum_{\alpha} \left(\sum_{P\in K_\alpha} c_P\sum_{U\in G}U P U^{\dagger}\right).
    \label{eq:rho'_twirling}
\end{align}
Meanwhile, for all $P,P' \in K_\alpha$, there exists $V \in G$ such that $P' = VPV^\dagger$, and thus
\begin{align}
    \sum_{U\in G} UP'U^\dagger = \sum_{U\in G} (UV) P(UV)^\dagger = \sum_{U \in G}UPU^\dagger.
\end{align}
Therefore, 
\begin{align}
    \frac{1}{|G|}\sum_{U\in G} U P U^\dagger = \frac{1}{|G| |K_\alpha|}\sum_{U\in G}\sum_{P \in K_\alpha} UPU^\dagger = \frac{1}{|K_\alpha|}\sum_{P\in K_\alpha} P.
\end{align}
Plug this in Eq.~\eqref{eq:rho'_twirling}, we have
\begin{align}
    \rho' &= \sum_{\alpha} \frac{\sum_{P\in K_\alpha} c_P}{|K_\alpha|}\cdot \left(\sum_{P\in K_\alpha}P\right).
\end{align}
By re-parametrizing it with $C_\alpha=\frac{1}{|K_\alpha|}\sum_{P\in K_\alpha} c_P$,
\begin{equation}
    \rho' = \sum_{\alpha} C_\alpha \cdot \left(\sum_{P\in K_\alpha}P\right).
\end{equation}
Consequently, the twirling process collapses all the coefficients $\{c_P\}_{P\in K_\alpha}$ within the same equivalent class $K_\alpha$ into a single parameter $C_\alpha$, reducing the number of parameters of the density matrix.

\subsection{Case studies for twirling magic states}

Now we focus on the specific magic states including $\ket{T}$, $\ket{H}$, $\ket{\rm CZ}$, $\ket{\rm CCZ}$ states,
\begin{align}
    \ketbra{T}{T} &= \frac{1}{2}\left(I + \frac{1}{\sqrt{3}}(X+Y+Z)\right),\\
    \ketbra{H}{H} &= \frac{1}{2}\left(I + \frac{1}{\sqrt{2}}(X+Y)\right),\\
    \ket{\rm CZ} &= \frac{1}{\sqrt{3}}(\ket{00}+\ket{01}+\ket{10}),\\
    \ket{\rm CCZ} &= \frac{1}{2}(\ket{00+}+\ket{01+}+\ket{10+}+\ket{11-}).
\end{align}
and analyze state twirling by applying the above framework.

\subsubsection{Single-qubit magic states}
To begin with, we discuss the twirling of single qubit magic states $\ket{T}$ and $\ket{H}$. First, it is straightforward to observe that $G_T = \langle U_0 \rangle$, where
\begin{equation}
    U_0 = \frac{e^{i\pi/4}}{\sqrt{2}}
    \begin{pmatrix}
        1 & 1 \\
        i & -i
    \end{pmatrix}.
\end{equation}
As shown above, $G_T$ partitions $P_1^{\rm Herm}$ into four distinct equivalent classes:
\begin{equation}
\begin{split}
    &K_{0,+}^{T} = \{I\}, \quad K_{1,+}^{T} = \{X, Y, Z\};\\
    &K_{0,-}^{T} = \{-I\}, \quad K_{1,+}^{T} = \{-X, -Y, -Z\}.
\end{split}
\end{equation}
Therefore,
\begin{equation}
    \rho'_T = C_{0,+}\left(\sum_{P\in K_{0,+}^{T}}P\right) + C_{1,+}\left(\sum_{P\in K_{1,+}^{T}}P\right) + C_{0,-}\left(\sum_{P\in K_{0,-}^{T}}P\right) + C_{1,-}\left(\sum_{P\in K_{1,-}^{T}}P\right).
\end{equation}
Here, it is straightforward that $\sum_{P \in K_{m,+}^{T}}P = -\sum_{P \in K_{m,-}^{T}}P$ for $m=0,1$. Therefore, by denoting $C_0 = C_{0,+}-C_{0,-}=1/2$ and $C_1 = C_{1,+} - C_{1,-}$, we have
\begin{equation}
    \rho'_T = \frac{1}{2} + C_1\sum_{P\in K_{1,+}}P.
\end{equation}
Taking $C_1 = \frac{1-2\epsilon}{2\sqrt{3}}$ leads to
\begin{equation}
    \rho'_T = (1-\epsilon) \ketbra{T}{T} + \epsilon\ketbra{T^\perp}{T^\perp}
\end{equation}
for a state $\ket{T^\perp}$ orthogonal to $\ket{T}$.

Similarly, observing $G_H = \langle S \rangle$ with $S={\rm diag}(1, i)$ partitions $P_1^{\rm Herm}$ into five distinct equivalent classes:
\begin{equation}
\begin{split}
    &K_{0,+}^{H} = \{I\}, \quad K_{1,+}^{H} = \{X, Y\};\\
    &K_{0,-}^{H} = \{-I\}, \quad K_{1,-}^{H} = \{-X, -Y\};\\
    &K_{2}^{H} = \{Z,-Z\}.
\end{split}
\end{equation}
Again, noticing that $\sum_{P \in K_{m,+}^{T}}P = -\sum_{P \in K_{m,-}^{T}}P$ for $m=0,1$ and $\sum_{P \in K_{2}}P = 0$ leads to
\begin{equation}
    \rho'_H = \frac{1}{2} + C_1\sum_{P\in K_{1,+}}P.
\end{equation}
With the parametrization of $C_1 = \frac{1-2\epsilon}{2\sqrt{2}}$, we have
\begin{equation}
    \rho'_H = (1-\epsilon)\ketbra{H}{H} + \epsilon\ketbra{H^\perp}{H^\perp},
\end{equation}
for a state $\ket{H^\perp}$ orthogonal to $\ket{H}$.

\subsubsection{$\ket{\rm CZ}$ state}

Now we discuss the two-qubit magic state $\ket{\rm CZ}$, and the corresponding twirling group $G_{\rm CZ} = \langle {\rm CZ}, {\rm SWAP}, X_2 \cdot {\rm CX}_{1,2}\rangle$ where $\ket{\rm CZ}$ is an eigenstate of every element in $G_{\rm CZ}$. $G_{\rm CZ}$ partitions $\mathcal{P}^{\rm Herm}_{2}$ into seven distinct equivalent classes:
\begin{equation}
\begin{split}
    & K_{0,+}^{\rm CZ} = \{II\}, \quad K_{1,+}^{\rm CZ} = \{IZ,ZI,-ZZ\}, \quad K_{2,+}^{\rm CZ} = \{IX,XI,XZ,ZX,XX,YY\};\\
    & K_{m,-}^{\rm CZ} = \{-P|P\in K_{m,+}^{\rm CZ}\} \quad (m=0,1,2);\\
    & K_3^{\rm CZ} = \{\pm YI, \pm IY, \pm XY, \pm YX, \pm YZ, \pm ZY\}.
\end{split}
\end{equation}
It is straightforward that $\sum_{P\in K_{m,+}^{\rm CZ}}P = -\sum_{P\in K_{m,-}^{\rm CZ}}P$ for $m = 0,1,2$, and $\sum_{P\in K_3^{\rm CZ}}P = 0$. Therefore, for an arbitrary $2$-qubit density matrix $\rho$, the twirled state $\rho'$ is in the form of
\begin{equation}
    \rho'_{\rm CZ} = \frac{1}{4}I + C_1 \sum_{P\in K_{1,+}^{\rm CZ}}P + C_2 \sum_{P\in K_{2,+}^{\rm CZ}}P.
\end{equation}
After a further re-parametrization with $C_1 = \frac{1}{12} - \frac{\epsilon_1}{3}$, $C_2 = \frac{1}{6} - \frac{\epsilon_1}{6} - \frac{\epsilon_2}{4}$, we have
\begin{equation}
    \rho'_{\rm CZ} = (1-\epsilon_1 -\epsilon_2)\ketbra{\rm CZ}{\rm CZ}+\epsilon_1\Pi_1 + \frac{\epsilon_2}{2}\Pi_2,
\end{equation}
where $\Pi_1=\ketbra{11}{11}$ and $\Pi_2=I-\ketbra{\rm CZ}{\rm CZ} -\ketbra{11}{11}$.

\subsubsection{$\ket{\rm CCZ}$ state}

Finally, we discuss the twirling of the three-qubit magic state $\ket{\rm CCZ}$. It is convenient to notice that
\begin{equation}
    \ket{\rm CCZ} = {\rm CCZ}\ket{+++},
\end{equation}
where the controlled-controlled-Z (CCZ) gate is defined as ${\rm CCZ} = {\rm diag}(1,1,1,1,1,1,1,-1)$ in the computational basis. Let us consider state twirling of $\ket{+++}$ instead, and use this result for twirling $\ket{\rm CCZ}$ later. Note that the twirling group of $\ket{+++}$ is given as
\begin{equation}
    G_{+++} = \langle X_i, {\rm CNOT}_{j,k} \rangle \quad\text{(for $i,j,k=1,2,3$)}.
\end{equation}
This twirling group partitions $\mathcal{P}^{\rm Herm}_3$ into distinct equivalent classes including:
\begin{equation}
\begin{split}
    & K_{0,+}^{+++} = \{III\}, \quad K_{1,+}^{+++} = \{X_1, X_2, X_3, X_1 X_2, X_1 X_3, X_2 X_3, X_1 X_2 X_3\};\\
    & K_{m,-}^{\rm +++} = \{-P|P\in K_{m,+}^{\rm CZ}\} \quad (m=0,1).
\end{split}
\end{equation}
There are more distinct classes involving Pauli operators acting $Y$ or $Z$ on any qubit. However, each of these classes, say $K$, should satisfy $K=-K:=\{-P:P\in K\}$, and thus all are canceled out in the expression of the twirled state $\rho'_{+++}$. Therefore, the twirled state is in the form of
\begin{equation}
    \rho_{+++}' = \frac{1}{8}I + C_1\sum_{P\in K_{1,+}^{XXX}} P,
\end{equation}
and with the re-parametrization of $C_1 = \frac{1}{8} - \frac{\epsilon}{7}$,
\begin{equation}
    \rho_{+++}' = (1-\epsilon)\ketbra{+++}{+++} + \epsilon\frac{I-\ketbra{+++}{+++}}{7}.
    \label{eq:+++state}
\end{equation}

Now, we use this result to analyze the twirling with respect to $\ket{\rm CCZ}$. To this end, we first show that for all $U \in G_{+++}$, ${\rm CCZ} \cdot U \cdot {\rm CCZ}$ are Clifford operations. This is directly shown from that facts that $G_{+++}$ is generated by $X$ and ${\rm CNOT}$ gates, and
\begin{align}
    &{\rm CCZ} \cdot X_1 \cdot {\rm CCZ} = X_1 \cdot {\rm CZ}_{2,3}\\
    &{\rm CCZ} \cdot {\rm CNOT}_{2,3} \cdot {\rm CCZ} = {\rm CZ}_{1,2} \cdot {\rm CNOT}_{2,3}.
\end{align}
In addition,
\begin{equation}
    {\rm CCZ} \cdot U \cdot {\rm CCZ} \ket{\rm CCZ} \propto \ket{\rm CCZ}.
\end{equation}
for all $U \in G_{+++}$. Therefore, the twirling group for $\ket{\rm CCZ}$ state is
\begin{equation}
    G_{\rm CCZ} = \{{\rm CCZ} \cdot U \cdot {\rm CCZ}: U\in G_{+++}\},
\end{equation}
Therefore, using Eq.~\eqref{eq:+++state}, the twirled state $\rho'_{\rm CCZ}$ is in the form of
\begin{align}
    \rho'_{\rm CCZ}
    &= {\rm CCZ}\left( \sum_{U \in G_{+++}}U ({\rm CCZ}\rho {\rm CCZ})U^\dagger \right) {\rm CCZ}\\
    &= {\rm CCZ}\left( (1-\epsilon)\ketbra{+++}{+++} + \epsilon\frac{I-\ketbra{+++}{+++}}{7} \right) {\rm CCZ}\\
    &= (1-\epsilon)\ketbra{\rm CCZ}{\rm CCZ} + \frac{\epsilon}{7}\Pi_1,
\end{align}
where $\Pi_1 = I-\ketbra{\rm CCZ}{\rm CCZ}$.

\subsection{Algorithms for applicability of benchmarking schemes}

While the Pauli framework introduced above provides a powerful way to describe the effect of state twirling for the well-known magic states analyzied above, it may not be straightforward for an arbitrary magic state. Here, we present an explicit algorithm that given an $n$-qubit state $\ket{\psi}$, tells whether the conditions for Bell measurement scheme (Theorem~\ref{thm:suff_two-copy-restated}) or single-copy scheme (Theorem~\ref{thm:suff_single-copy}) are satisfied.

\begin{figure}
\begin{algorithm}[H] 
\caption{Check applicability of the Bell measurement scheme}
\label{alg:bell_measurement}
\begin{algorithmic}[1]
    \renewcommand{\algorithmicrequire}{\textbf{Input:}}
	\renewcommand{\algorithmicensure}{\textbf{Output:}}
    \Require State $\ket{\psi} \in \mathcal{H}$, corresponding twirling group $G_\psi$, and its irreducible decomposition 
    $\mathcal{H} = \bigoplus_{j=0}^k \mathcal{H}_j$ with $\mathcal{H}_0 = \mathrm{span}\{\ket{\psi}\}$.
    \Ensure Applicability of the Bell measurement scheme (\textbf{Yes} or \textbf{No}).
    \vspace{0.5em}
    \For{$j = 1, \dots, k$}
        \If{$\dim(\mathcal{H}_j) = 1$}
            \State Let $\mathcal{H}_j = \mathrm{span}\{\ket{\phi}\}$.
            \State Check whether $U\ket{\phi} = e^{i\theta_U}\ket{\phi}, \quad \forall\, U \in G_\psi$.
            \If{the above condition holds}
                \State \Return \textbf{No}
            \EndIf
        \EndIf
    \EndFor
    \State \Return \textbf{Yes}
\end{algorithmic}
\end{algorithm}
\end{figure}

First, to check the applicability of the Bell measurement scheme, we first decompose the Hilbert space $\mathcal{H}$ into direct sums of irreducible representations of $G_\psi$, i.e.,
\begin{equation}
    \mathcal{H}=\mathcal{H}_0 \oplus \mathcal{H}_1 \oplus \cdots \oplus \mathcal{H}_k,
\end{equation}
where we take $\mathcal{H}_0 = {\rm span}\{\ket{\psi}\}$. Let us assume that the explicit decomposition is given (this can be done using character theory of finite groups, see Chap. 2 of~\cite{fultonRepresentationTheory2004}). By definition of $G_\psi$, there exists a phase $\theta_U \in [0,2\pi)$ such that $U\ket{\psi} = e^{i\theta_U}\ket{\psi}$ for all $U \in G_\psi$.

With this setup, the Bell measurement scheme is applicable if $\mathcal{H}_0 \not\simeq \mathcal{H}_j$ for all $j=1,\dots,k$. Verifying this condition is straightforward: since two irreducible representations with different dimensions cannot be equivalent, it suffices to inspect one-dimensional irreducible representations among $\mathcal{H}_1,\dots,\mathcal{H}_k$. If $\mathcal{H}_j$ is one-dimensional, say $\mathcal{H}_j = {\rm span}\{\ket{\phi}\}$, then $\mathcal{H}_0 \simeq \mathcal{H}_j$ if and only if $U\ket{\phi} = e^{i\theta_U}\ket{\phi}$ for all $U \in G_\psi$. The overall procedure for this verification is summarized in Algorithm~\ref{alg:bell_measurement}.

\begin{figure}
\begin{algorithm}[H] 
\caption{Applicability of the single-copy scheme}
\label{alg:single_copy}
\begin{algorithmic}[1]
    \renewcommand{\algorithmicrequire}{\textbf{Input:}}
	\renewcommand{\algorithmicensure}{\textbf{Output:}}
    \Require State $\ket{\psi} \in \mathcal{H}$, corresponding twirling group $G_\psi$, its irreducible decomposition 
    $\mathcal{H} = \bigoplus_{j=0}^k \mathcal{H}_j$ with $\mathcal{H}_0 = \mathrm{span}\{\ket{\psi}\}$ and the partitions $\bigsqcup_{\alpha} S_\alpha = \{\mathcal H_1,\dots,\mathcal H_k\}$, where each $S_\alpha$ is an equivalence class under $\simeq$.
    \Ensure Applicability of the single-copy scheme (\textbf{Yes} or \textbf{No}).
    \vspace{0.5em}

    \For{each equivalence class $S_\alpha$}
        \If{$|S_\alpha| = 1$}
            \State Check whether the unique irreducible component in $S_\alpha$ contains at least one stabilizer state.
            \If{the above condition holds}
                \State \Return \textbf{No}
            \EndIf
        \Else
            \State Find all stabilizer states in $\mathcal{H}_{S_\alpha} = \bigoplus_{\mathcal{H}_j \in S_\alpha} \mathcal{H}_j$: 
            $\{\ket{s_1}, \dots, \ket{s_m}\}$.
            \For{each $\ket{s_i}$}
                \State Compute its orbit under $G_\psi$: $O_{s_i} = \{\, U\ket{s_i} : U \in G_\psi \,\}$.
                \State Set $\mathcal{V}_{s_i} = \mathrm{span}(O_{s_i})$.
            \EndFor
            \State Check whether there exist $|S_\alpha|$ orbit subspaces 
            $\{\mathcal{V}_{s_{i_1}},\dots,\mathcal{V}_{s_{i_{|S_\alpha|}}}\}$ satisfying
                $\mathcal{H}_{S_\alpha} = \mathcal{V}_{s_{i_1}} \oplus \mathcal{V}_{s_{i_2}} \oplus \cdots \oplus \mathcal{V}_{s_{i_{|S_\alpha|}}}$.
            \If{the above condition holds}
                \State \Return \textbf{No}
            \EndIf
        \EndIf
    \EndFor
    \State \Return \textbf{Yes}
\end{algorithmic}
\end{algorithm}
\end{figure}

Next, we present an algorithm for checking the applicability of the single-copy scheme in Theorem~\ref{thm:suff_single-copy}. As before, we decompose the Hilbert space $\mathcal{H}$ into irreducible representations of $G_\psi$:
\begin{equation}
    \mathcal{H} = \mathcal{H}_0 \oplus \mathcal{H}_1 \oplus \cdots \oplus \mathcal{H}_k,
\end{equation}
where $\mathcal{H}_0 = \mathrm{span}\{\ket{\psi}\}$. The single-copy scheme is applicable if every irreducible component $\mathcal{H}_j$ (for $j=1,\dots,k$) contains at least one stabilizer state. At first glance, one might expect that applicability can be verified simply by checking each irreducible component separately. However, a subtlety arises when some irreducible representations are equivalent: if $\mathcal{H}_i \simeq \mathcal{H}_j$, then the corresponding subspaces can be freely mixed by choosing a different but equivalent decomposition $\mathcal{H}'_i \oplus \mathcal{H}'_j = \mathcal{H}_i \oplus \mathcal{H}_j$. Hence, when equivalent irreps are present, the decomposition of $\mathcal{H}$ is not unique. The single-copy scheme is applicable whenever there exists \emph{some} valid decomposition such that each irreducible representation contains at least one stabilizer state.

To address this issue, we partition the set of irreducible representations $\{\mathcal H_1,\dots,\mathcal H_k\}$ into equivalence classes under $\simeq$:
\begin{equation}
    \{\mathcal H_1,\dots,\mathcal H_k\} = \bigsqcup_{\alpha} S_\alpha,
\end{equation}
where each $S_\alpha$ collects equivalent irreducible representations. For each class $S_\alpha$, we define the aggregate subspace
\begin{equation}
    \mathcal H_{S_\alpha} = \bigoplus_{\mathcal H_j \in S_\alpha} \mathcal H_j,
\end{equation}
so that the overall space decomposes as
\begin{equation}
\mathcal H = \mathcal H_0 \oplus \left(\bigoplus_{\alpha} \mathcal H_{S_\alpha}\right).
\end{equation}

We then inspect each $S_\alpha$ separately. If $|S_\alpha|=1$, we simply test whether the unique irreducible component in $S_\alpha$ contains at least one stabilizer state, which can be verified by enumerating all $n$-qubit stabilizer states. If $|S_\alpha|>1$, we must check whether $\mathcal{H}_{S_\alpha}$ admits a decomposition into $|S_\alpha|$ irreducible subspaces, each containing a stabilizer state. To this end, we first identify all stabilizer states in $\mathcal{H}_{S_\alpha}$, denoted $\{\ket{s_1},\dots,\ket{s_m}\}$. For each $\ket{s_i}$, we compute its orbit under the group action:
\begin{equation}
    O_{s_i} = \{U\ket{s_i} : U \in G_\psi\}.
\end{equation}
Note that each orbit span $\mathrm{span}(O_{s_i})$ forms a $G_\psi$-invariant subspace. We then check whether there exist $|S_\alpha|$ orbits $O_{s_{i_1}},\dots,O_{s_{i_{|S_\alpha|}}}$ such that
\begin{equation}
    \mathcal H_{S_\alpha} = \mathrm{span}(O_{s_{i_1}}) \oplus \mathrm{span}(O_{s_{i_2}}) \oplus \cdots \oplus \mathrm{span}(O_{s_{i_{|S_\alpha|}}}),
\end{equation}
where we require each orbit span $\mathrm{span}(O_{s_{i_j}})$ is an irreducible representation of $G_\psi$. By repeating this procedure for all $S_\alpha$, we can verify the applicability of the single-copy scheme.
The complete procedure is summarized in Algorithm~\ref{alg:single_copy}.

\section{\label{sec:numerical_appen}Numerical simulation of Bell-measurement scheme}

\begin{figure}
    \centering
    \includegraphics[width=\linewidth]{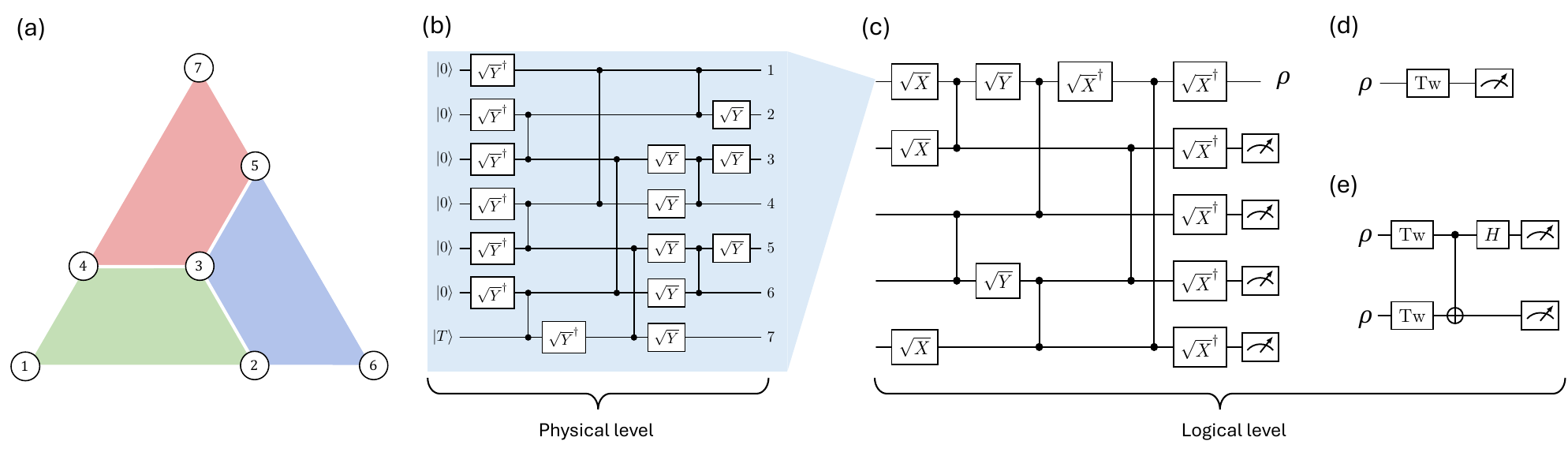}
    \caption{(a) Graphical description of the [[7, 1, 3]] color code. Each generator acts $X$ or $Z$ on red, blue, or green patches. (b) The non-fault-tolerant circuit on the physical level encodes a physical $\ket{T}$ state into [[7, 1, 3]] color code. (c) The 5-to-1 MSD circuit on the logical level. (d) The state tomography circuit on the logical level. (e) The Bell measurement circuit on the logical level.}
    \label{fig:Apdx_numerical_setup}
\end{figure}

In this section, we provide technical details for the numerical results presented in the main text. Here, we simulate the experimental setup of the Bell measurement scheme for logical $\ket{T}$ state. Our simulation setup closely follows the experimental method introduced in the recent experimental demonstration of logical magic state distillation~\cite{rodriguezExperimentalDemonstrationLogical2024}. With the distilled logical magic states encoded in the [[7, 1, 3]] color code, we run two different benchmarking circuits of state tomography and Bell measurement. Based on the simulation, we estimate the number of samples for the benchmarking task.

\subsection{[[7, 1, 3]] color code}

The [[7, 1, 3]] color code~\cite{Steane_1996} (also known as the Steane code) is a stabilizer code where each logical qubit consists of seven physical qubits. It has the following six stabilizer generators:
\begin{align}
    \begin{array}{cccccccccc}
        G_1 &= &I&I&I&X&X&X&X\\
        G_2 &= &I&X&X&I&I&X&X\\
        G_3 &= &X&I&X&I&X&I&X\\
        G_4 &= &I&I&I&Z&Z&Z&Z\\
        G_5 &= &I&Z&Z&I&I&Z&Z\\
        G_6 &= &Z&I&Z&I&Z&I&Z,
    \end{array}
    \label{eq:Steane_stabilizers}
\end{align}
Graphically, each generator acts $X$ or $Z$ on one of the color patches in Fig.~\ref{fig:Apdx_numerical_setup}(a). The corresponding logical states are given as
\begin{align}
    \ket{0_L}&=\frac{1}{\sqrt{8}}\left(\ket{0000000}+\ket{1010101}+\ket{0110011}+\ket{1100110}+\ket{0001111}+\ket{1011010}+\ket{0111100}+\ket{1101001}\right),\\
    \ket{1_L}&=\frac{1}{\sqrt{8}}\left(\ket{1111111}+\ket{0101010}+\ket{1001100}+\ket{0011001}+\ket{1110000}+\ket{0100101}+\ket{1000011}+\ket{0010110}\right).
\end{align}
A key advantage of this code is the transversal implementation of logical Clifford gates and logical Pauli measurements. Specifically, any Clifford gate $U$ can be realized logically by applying the same gate on each physical qubit:
\begin{equation}
    U_L = U^{\otimes 7},
\end{equation}
where $U_L$ denotes the implementation of $U$ at the logical level. Furthermore, since it implies $Z_L=Z^{\otimes 7}$, a logical computational-basis measurement can be performed simply by measuring each physical qubit. In the absence of physical errors, the logical measurement outcome is computed as the parity of the physical measurement outcomes $x \in \{0,1\}^7$:
\begin{equation}
    x_1 + \dots + x_7 \mod 2.
\end{equation}
In the presence of physical errors, however, additional processing—such as error detection or correction—is necessary, as will be described later.

\subsection{Circuit overview}

Our numerical simulation consists of three distinct stages: (1) encoding physical $\ket{T}$ states into the [[7, 1, 3]] color code [Fig.~\ref{fig:Apdx_numerical_setup}(b)], (2) logical-level implementation of the 5-to-1 MSD circuit [Fig.~\ref{fig:Apdx_numerical_setup}(c)], and (3) logical-level benchmarking circuits [Fig.~\ref{fig:Apdx_numerical_setup}(d)-(e)].

As illustrated in Fig.~\ref{fig:Apdx_numerical_setup}(b), the encoding circuit is non-fault-tolerant, resulting in significant initial noise in the logical magic state. The infidelity of this noisy state is reduced by the subsequent MSD process [Fig.\ref{fig:Apdx_numerical_setup}(c)], yielding the output state $\rho$. Here, the distilled state $\rho$ is accepted only if the four logical qubit measurement outcomes are 1, 0, 1, and 1, respectively (see Ref.~\cite{rodriguezExperimentalDemonstrationLogical2024} for details about the MSD circuit).

When $\rho$ is accepted, we perform benchmarking either by standard state tomography [Fig.\ref{fig:Apdx_numerical_setup}(d)] or via our proposed Bell measurement method [Fig.\ref{fig:Apdx_numerical_setup}(e)]. Each benchmarking circuit begins with a randomly chosen twirling gate from the set $\{I, U_0, U_0^\dagger\} \subset \mathcal{C}_1$, where
\begin{equation}
    U_0 = \frac{e^{i\pi/4}}{\sqrt{2}}
    \begin{pmatrix}
    1 & 1 \\
    i & -i
    \end{pmatrix},
\end{equation}
which cyclically transforms Pauli operators as $X \rightarrow Y \rightarrow Z \rightarrow X$,
\begin{align}
    U_0XU_0^\dagger = Y,\\
    U_0YU_0^\dagger = Z,\\
    U_0ZU_0^\dagger = X.
\end{align}
Consequently, the tomography circuit effectively measures $\rho$ in random Pauli bases $X$, $Y$, or $Z$. In our simulations, these logical-level Clifford circuits and measurements are realized via their transversal physical implementations.

To reflect realistic experimental imperfections, each physical-level operation is subjected to noise characterized by a single parameter $p$:
\begin{itemize}
    \item Single-qubit gates are followed by a depolarizing channel with average fidelity $1 - p/5$.
    \item Two-qubit gates are followed by a depolarizing channel with average fidelity $1 - p$.
    \item Qubit initialization includes a depolarizing channel with average fidelity $1 - p/2$.
    \item Measurements are simulated with classical bit-flip errors occurring with probability $p/2$.
\end{itemize}
The single-qubit and two-qubit depolarizing channels with average fidelity $1-q$ are defined as:
\begin{align}
    \mathcal{N}_{\rm depo, 1}(\rho)
    &= (1-2q)\rho + qI\\
    &= \left(1-\frac{3}{2}q\right)\rho+\frac{q}{2}\sum_{P\in\{X,Y,Z\}}P\rho P,\\
    \mathcal{N}_{\rm depo, 2}(\rho)
    &= \left(1-\frac{4}{3}q\right)\rho + \frac{q}{3}I^{\otimes 2}\\
    &= \left(1-\frac{5}{4}q\right)\rho+\frac{q}{12}\sum_{P\in\{I,X,Y,Z\}^{\otimes2}\setminus\{II\}}P\rho P.
\end{align}

\subsection{Error detection}

Combining the previously described components, the resulting simulation circuits comprise $7 \times 5 = 35$ physical qubits for state tomography and $7 \times 5 \times 2 = 70$ physical qubits for Bell measurement, with all qubits measured at the end. Based on these measurement outcomes $x \in \{0,1\}^{35}$ (state tomography) or $x \in \{0,1\}^{70}$ (Bell measurement), we perform an error detection. Specifically, we check the consistency of the outcomes with the stabilizer generators defined in Eq.~\eqref{eq:Steane_stabilizers}. Since phase errors ($Z$-errors) do not influence the measurement outcomes, the stabilizer generators $G_1, G_2, G_3$ (comprising only $X$ operators) are irrelevant for error detection. Thus, error detection is performed exclusively with the stabilizers $G_4, G_5, G_6$, each containing only $Z$ operators. We accept the measurement outcomes $x$ if none of these stabilizer generators detect an error; otherwise, we discard them. All reported logical infidelities $\epsilon$ and benchmarking results are calculated after this error-detection step.

The MSD and benchmarking circuits exclusively utilize transversal gates and measurements, ensuring fault tolerance. Given that the [[7, 1, 3]] color code can detect up to two errors, the logical error rates after error detection are suppressed to the third order in $p$. Meanwhile, the 5-to-1 MSD circuit reduces the infidelity of the distilled magic state to second order in $p$. Consequently, logical errors introduced during benchmarking are negligible for small physical error rates $p$.

\subsection{Calculating sampling overhead}

\begin{figure}[t]
    \centering
    \includegraphics[width=1\linewidth]{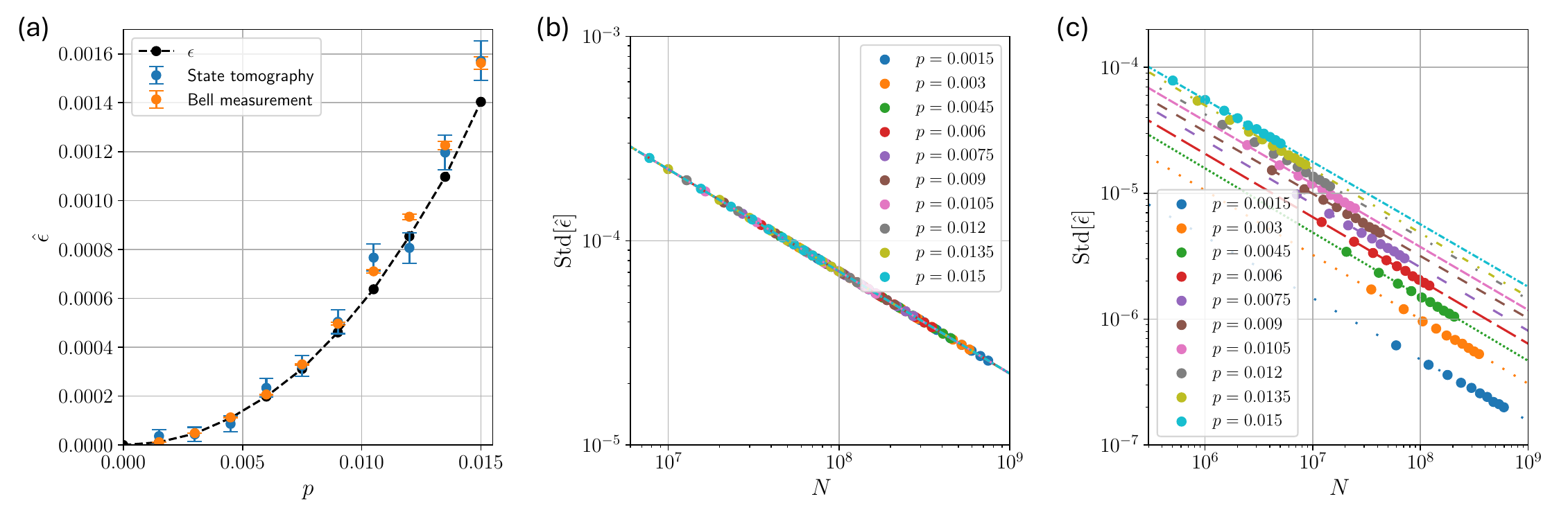}
    \caption{(a) Infidelity $\epsilon$ of logical state $\rho$ compared with estimated infidelity $\hat{\epsilon}$ obtained from standard state tomography and the Bell measurement benchmarking scheme. The black dashed line shows a power-law fit of $\epsilon$ as $y=8.11x^{2.07}$. (b), (c) Standard deviation of estimated infidelity $\hat{\epsilon}$ as a function of the number of consumed copies of $\rho$ for state tomography (b) and Bell measurement (c). Dashed and dotted lines indicate power-law fits for respective error rates $p$.}
    \label{fig:Apdx_numerical}
\end{figure}

Fig.~\ref{fig:Apdx_numerical}(a) presents numerical simulation results comparing the true infidelity $\epsilon$ of the logical state $\rho$ with the estimated infidelity $\hat{\epsilon}$ obtained via standard state tomography and the Bell measurement benchmarking scheme. The full simulation circuits were repeated $5.76\times 10^8$ to $5.76\times 10^9$ times for state tomography method and $1.8234\times 10^9$ to $1.8234\times 10^{10}$ times for Bell measurement method to obtain the estimations~\footnote{The extra overhead for Bell measurement simulation comes from the requirement for the simultaneous success of two distilled magic state in our simulation, which in practice is unnecessary if we have an ability to store distilled magic states.}.

To analyze the asymptotic behavior of sampling overhead, we consider the number $N$ of copies of $\rho$ consumed and investigate how the standard deviation of the estimated infidelity decreases with increasing $N$. Figs.~\ref{fig:Apdx_numerical}(b) and (c) illustrate these relationships for the standard state tomography and Bell measurement, respectively. To quantify the sampling overhead $N$ required for benchmarking with a target multiplicative precision $r$, we assume the estimator $\hat{\epsilon}$ to approach a random variable from a normal distribution due to the central limit theorem, such that ${\rm Std}[\hat{\epsilon}] = r\epsilon$ corresponds to achieving the benchmarking task with approximately $68\%$ confidence.

As discussed in the main text, the standard state tomography samples from a Bernoulli distribution ${\rm Bern}(c + O(\epsilon))$ for some constant $0 < c < 1$. Consequently, the standard deviation of the estimator is
\begin{equation}
{\rm Std}[\hat{\epsilon}] = \sqrt{(c + O(\epsilon))(1 - c - O(\epsilon))/N} \approx \sqrt{c(1 - c)/N},
\end{equation}
which is approximately independent of $\epsilon$. This behavior is observed in Fig.~\ref{fig:Apdx_numerical}(b), where the standard deviations appear nearly identical across different values of $p$ (and hence different $\epsilon$).
In contrast, the Bell measurement scheme samples from ${\rm Bern}(\Theta(\epsilon))$, leading to
\begin{equation}
{\rm Std}[\hat{\epsilon}] = \sqrt{\Theta(\epsilon)/N},
\end{equation}
which decreases with $\epsilon$. This $\epsilon$-dependent behavior is clearly shown in Fig.~\ref{fig:Apdx_numerical}(c), where smaller values of $p$ result in smaller standard deviations.

To systematically extract the sampling overhead $N$ required to achieve a target precision, we perform a power-law fit for each value of $p$ (and hence fixed $\epsilon$):
\begin{equation}
    {\rm Std}[\hat{\epsilon}] = a \times N^{b},
\end{equation}
where $a$ and $b$ are fitting parameters obtained from the numerical data. Plugging this relation in ${\rm Std}[\hat{\epsilon}]=r\epsilon$ yields the estimated sample size needed to achieve multiplicative precision $r$,
\begin{equation}
    N = \left(\frac{r\epsilon}{a}\right)^{1/b}.
\end{equation}
The resulting estimates of sampling overhead are reported in the main text.

\section{\label{sec:numerical_appen_additional}Numerical simulation of single-copy scheme}

In addition to the Bell measurement scheme, we provide another numerical result for the single-copy scheme, which is designed to benchmark multi-qubit magic states. Here, we simulate the experimental setup for benchmarking the logical fidelity of the $\ket{\rm CCZ}$ state. This simulation setup follows the experimental method realized with trapped ion devices in Ref.~\cite{Wang2024_832colorcode}. Contrary to the previous simulation for $\ket{T}$ state, we do not perform magic state distillation. Instead, we utilize error-detection circuits to encode $\ket{\rm CCZ}$ state into the [[8, 3, 2]] color code with high fidelity. After obtaining high-fidelity logical $\ket{\rm CCZ}$ state, we run the benchmarking circuits and analyze the results. Notably, Ref.~\cite{Wang2024_832colorcode} demonstrated the fault-tolerant preparation of logical $\ket{\rm CCZ}$ state in the [[8, 3, 2]] color code and implemented fault-tolerant logical $ZZX$ measurements. In combination with the logical twirling gates for the ${\rm CCZ}$ state described in Appendix~\ref{sec:twirling}---which can be implemented with transversal $X$, ${\rm CNOT}$, and ${\rm CCZ}$ gates---these operations enable the application of our single-copy benchmarking protocol.

\subsection{[[8, 3, 2]] color code}

The [[8, 3, 2]] color code is a stabilizer code encoding three logical qubits using eight physical qubits, detecting any single-qubit error~\cite{Campbell2016smallestcolourcode}. It has the following five stabilizer generators:
\begin{align}
    \begin{array}{ccccccccccc}
        G_1 &= &Z&Z&Z&Z&I&I&I&I\\
        G_2 &= &Z&Z&I&I&Z&Z&I&I\\
        G_3 &= &Z&I&Z&I&Z&I&Z&I\\
        G_4 &= &Z&Z&Z&Z&Z&Z&Z&Z\\
        G_5 &= &X&X&X&X&X&X&X&X\\
    \end{array}
    \label{eq:832_stabilizers}
\end{align}
Graphically, $G_1$, $G_2$, and $G_3$ acts $Z$ on the three faces (red, green, and blue, respectively) of the cube in Fig.~\ref{fig:Apdx_numerical_setup2}(a), and $G_4$ and $G_5$ act $Z$ and $X$ on all qubits, respectively. The corresponding logical $X$ and $Z$ operators are given as
\begin{align}
    \begin{array}{ccccccccccc}
        (X_1)_L &= &X&X&X&X&I&I&I&I\\
        (X_2)_L &= &X&X&I&I&X&X&I&I\\
        (X_3)_L &= &X&I&X&I&X&I&X&I\\
        (Z_1)_L &= &Z&I&I&I&Z&I&I&I\\
        (Z_2)_L &= &Z&I&Z&I&I&I&I&I\\
        (Z_3)_L &= &Z&Z&I&I&I&I&I&I
    \end{array}
\end{align}

Several logical gates can be implemented transversally in this code. As noticed above, any Pauli operator can be implemented transversally. In particular, it also admits a transversal ${\rm CCZ}$ gate within a code block, given by
\begin{equation}
    {\rm CCZ}_L = T \otimes T^\dagger \otimes T^\dagger \otimes T \otimes T^\dagger \otimes T \otimes T \otimes T^\dagger.
\end{equation}
In addition, a logical ${\rm CNOT}$ gate within a code block can be written as SWAP operators:
\begin{align}
    ({\rm CNOT}_{1,2})_L &= {\rm SWAP}_{1,5} {\rm SWAP}_{2,6},\\
    ({\rm CNOT}_{2,1})_L &= {\rm SWAP}_{1,3} {\rm SWAP}_{2,4},\\
    ({\rm CNOT}_{1,3})_L &= {\rm SWAP}_{1,5} {\rm SWAP}_{3,7},\\
    ({\rm CNOT}_{3,1})_L &= {\rm SWAP}_{1,2} {\rm SWAP}_{3,4},\\
    ({\rm CNOT}_{2,3})_L &= {\rm SWAP}_{1,3} {\rm SWAP}_{5,7},\\
    ({\rm CNOT}_{3,2})_L &= {\rm SWAP}_{1,2} {\rm SWAP}_{5,6}.
\end{align}
Rather than physically swapping the corresponding qubits, in practice, one can implement these logical gates by relabeling physical qubits. This approach does not require physical operation or gate application and thus avoids introducing additional errors.

\subsection{Circuit overview}

\begin{figure}
    \centering
    \includegraphics[width=\linewidth]{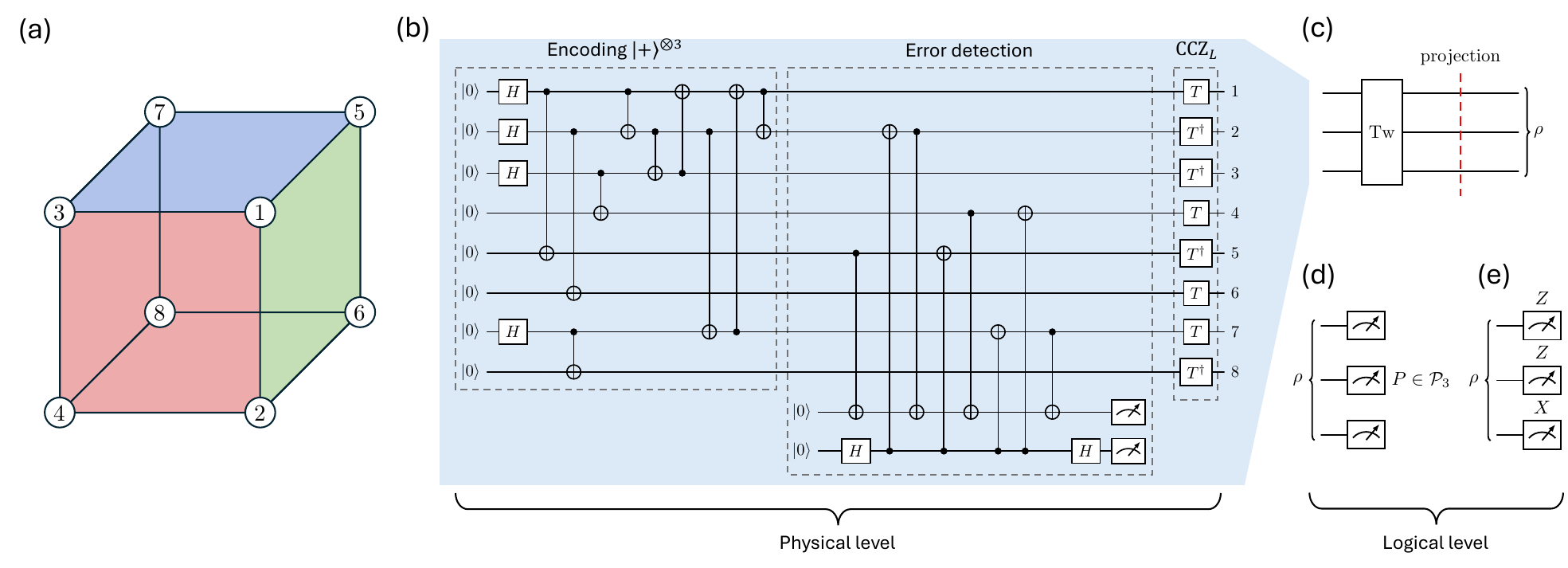}
    \caption{(a) Graphical description of the [[8, 3, 2]] color code. Each generator acts $Z$ on red, green, blue faces of the cube, or $Z$/$X$ on all qubits. (b) The physical-level circuit proposed in Ref.~\cite{Wang2024_832colorcode} that prepares the logical $\ket{{\rm CCZ}_L}$ state in [[8,3,2]] color code. The first dashed box non-fault-tolerantly encodes $\ket{+}^{\otimes 3}$ state, and the second dashed box performs error-detection. Finally, the last dashed box performs logical ${\rm CCZ}$ gate transversally to obtain $\ket{{\rm CCZ}_L}$. (c) The state twirling and projection circuit on the logical level. (d) The state tomography circuit on the logical level. (e) The single-copy benchmarking circuit on the logical level.}
    \label{fig:Apdx_numerical_setup2}
\end{figure}

Our numerical simulation consists of three distinct stages: (1) encoding $\ket{\rm CCZ}$ state into the [[8, 3, 2]] color code [Fig.~\ref{fig:Apdx_numerical_setup2}(b)], (2) state twirling and projection to the code space [Fig.~\ref{fig:Apdx_numerical_setup2}(c)], and (3) logical-level benchmarking circuits [Fig.~\ref{fig:Apdx_numerical_setup2}(d)-(e)]. As in Appendix~\ref{sec:numerical_appen}, each physical-level operation is subjected to noise characterized by a single parameter $p$ with the same noise model.

First, we obtain high-fidelity $\ket{\rm CCZ}$ encoded in the [[8, 3, 2]] color code using the circuit shown in Fig.~\ref{fig:Apdx_numerical_setup2}(b). To this end, we first generate a noisy logical $\ket{+}^{\otimes 3}$ state by applying a non-fault-tolerant encoding circuit. Here, rather than performing a magic state distillation as in previous section, we perform an error-detection circuit to detect errors, closely following Ref.~\cite{Wang2024_832colorcode}. Specifically, by introducing two ancillary qubits, we measure
\begin{align}
    \begin{array}{ccccccccccc}
        (X_1)_L(X_3)_L &= &I&X&I&X&X&I&X&I,\\
        G_1 G_3 &= &I&Z&I&Z&Z&I&Z&I,
    \end{array}
\end{align}
each of which stabilizes the logical $\ket{+}^{\otimes 3}$ state. If both measurements are 0s, we accept the resulting state, and discard otherwise. Although this error-detection step does not measure all stabilizer generators, it can detect all high-weight propagated errors~\cite{Wang2024_832colorcode, Reichardt2020faulttolerantquantum}. Afterwards, we apply a transversal logical ${\rm CCZ}$ gate to obtain the logical $\ket{\rm CCZ}$ state (note that $\ket{\rm CCZ} = {\rm CCZ}\ket{+}^{\otimes 3}$).

With the encoded logical $\ket{\rm CCZ}$ state, we perform state twirling on the logical level~[Fig.~\ref{fig:Apdx_numerical_setup2}(c)]. As described in Appendix~\ref{sec:twirling}, the twirling group for $\ket{\rm CCZ}$ state is given by
\begin{equation}
    G_{\rm CCZ} = \{{\rm CCZ} \cdot U \cdot {\rm CCZ}: U\in \langle X_i, {\rm CNOT}_{j,k} \rangle \text{ for }i,j,k = 1,2,3\},
\end{equation}
and thus generated by $X$ gates, ${\rm CNOT}$ gates, and ${\rm CCZ}$ gates. Since all these gates can be implemented transversally in the [[8, 3, 2]] color code, the state twirling is realized by applying these logical gates transversally. With these logical gates, the state twirling is realized by randomly choosing a twirling gate from $G_{\rm CCZ}$. For ease of analysis, we also perform a perfect projection (equivalent to a full round of noiseless error-detection) to the code space so that the resulting state $\rho$ lies within the code space. This step ensures that there is no $O(p)$ population that is left out of the logical subspace as it could affect the following measurement outcomes. After this projection, the infidelity of the resultant state should scale as $O(p^2)$. 

After obtaining high-fidelity logical $\ket{\rm CCZ}$ state, we perform benchmarking either by standard state tomography [Fig.~\ref{fig:Apdx_numerical_setup2}(d)] or via our proposed single-copy scheme [Fig.~\ref{fig:Apdx_numerical_setup2}(e)]. Here, the infidelity of the input state $\rho$ is second order in $p$ due to the error-detection step. Meanwhile, if the gates and measurements in the benchmarking circuit are noisy, the errors introduced in the benchmarking circuit is also second order in $p$, and thus benchmarking would fail. To avoid this issue, we assume that the benchmarking circuits are ideal without any errors~\footnote{Note that we did not have such a issue in Appendix~\ref{sec:numerical_appen} as the error in the benchmarking circuit was negligible compared to the infidelity}. With this setup, we measure $\rho$ with all Pauli operators on three logical qubits (except $I^{\otimes 3}$) using the same number of sampling circuits for the standard state tomography. In the single-copy scheme, we perform the logical $ZZX$ measurement.

Here, we remark that this assumption of ideal benchmarking circuits is practically relevant to several experimental platforms. For instance, in trapped-ion systems, measurement error can be greatly suppressed by performing repetitive quantum non-demolition (QND) measurements~\cite{Hume2007highfidelity,Myerson2008highfidelity}, rendering it negligible compared with gate errors.

\subsection{Calculating sampling overhead and results}

\begin{figure}
    \centering
    \includegraphics[width=1\linewidth]{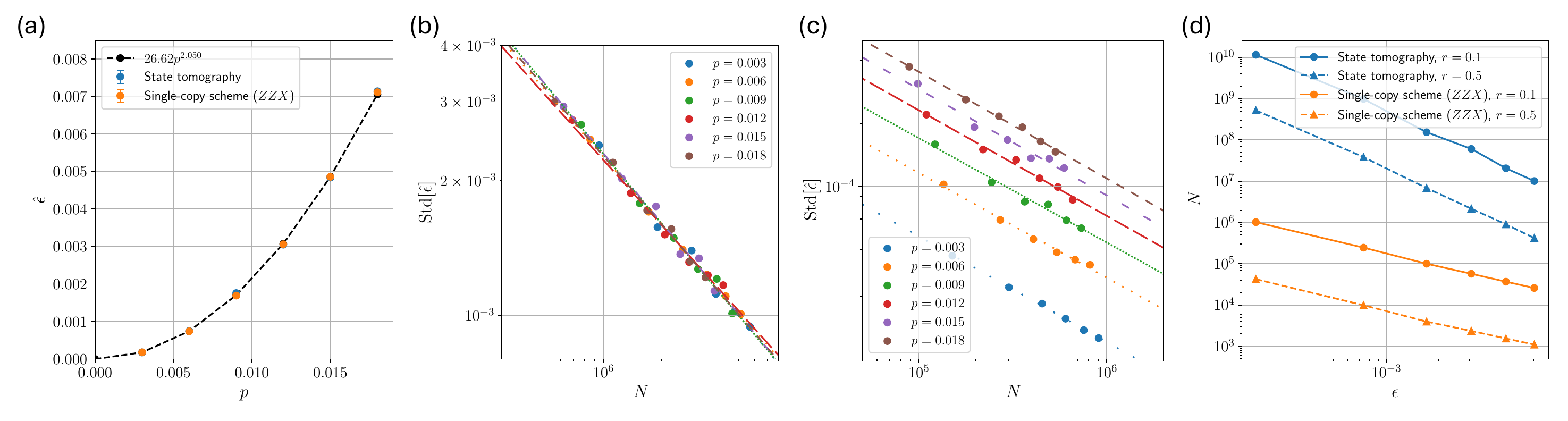}
    \caption{(a) Infidelity $\epsilon$ of logical state $\rho$ compared with estimated infidelity $\hat{\epsilon}$ obtained from standard state tomography and the single-copy benchmarking scheme. The black dashed line shows a power-law fit of $\epsilon$ as $y=26.62x^{2.05}$. (b), (c) Standard deviation of estimated infidelity $\hat{\epsilon}$ as a function of the number of consumed copies of $\rho$ for state tomography (b) and single-copy scheme with $ZZX$ measurement (c). Dashed and dotted lines indicate power-law fits for respective error rates $p$. (d) Sampling overhead $N$ for the standard state tomography and single-copy scheme with $CCZ$ measurement to achieve multiplicative precision $r=0.1$ and $r=0.5$ with $68\%$ confidence.}
    \label{fig:Apdx_numerical2}
\end{figure}

For calculating the sampling overhead, we follow the same procedure as in Appendix~\ref{sec:numerical_appen}. First, Fig.~\ref{fig:Apdx_numerical2}(a) presents numerical simulation results of the estimated infidelity $\hat{\epsilon}$ obtained via standard state tomography and the single-copy benchmarking scheme. The full simulation circuits were repeated $3{,}048{,}192{,}000$ times for state tomography method and $483{,}840{,}000$ times for single-copy method with $ZZX$ measurement to obtain the estimations. Due to the absence of errors in the benchmarking circuits and sufficient number of samples, both methods yield accurate estimations of the true infidelity $\epsilon$ of the logical state $\rho$. We fit a power-law curve to the data, resulting in $\epsilon \approx 26.62 p^{2.050}$.

To analyze the asymptotic behavior of sampling overhead, we again consider the number $N$ of copies of $\rho$ consumed and investigate how the standard deviation of the estimated infidelity decreases with increasing $N$. Figs.~\ref{fig:Apdx_numerical2}(b) and (c) illustrate these relationships for the standard state tomography and single-copy scheme with $ZZX$ measurement, respectively. By performing power-law fits for each value of $p$, we extract the sampling overhead $N$ required to achieve a target multiplicative precision $r$.

Finally, Fig.~\ref{fig:Apdx_numerical2}(d) compares the estimated sampling overheads for both methods. The single-copy scheme demonstrates a significant reduction in sampling overhead compared to standard state tomography, particularly at lower infidelities $\epsilon$. This result validates the advantage of the single-copy benchmarking scheme for multi-qubit magic states.

\section{\label{sec:ancilla}Ancillary qubits provide no advantage}

In this section, we show that additional ancillary qubits do not change the sample complexity of the benchmarking schemes, under the restriction that only Clifford gates and Pauli measurements are available. Specifically, we consider a circuit where an $n$-qubit state $\rho$ undergoes a Clifford gate with an additional $m$ ancillary qubits, all initialized in the state $\ket{0^m}$. That is, we apply a Clifford gate $U \in \mathcal{C}_{n+m}$ to the state $\rho \otimes \ketbra{0^m}{0^m}$ and then measure all $n+m$ qubits in the computational basis.

We show that this circuit can be \textit{simulated} by an $n$-qubit circuit that does not use any ancillary qubits. Specifically, we construct an $n$-qubit Clifford gate $V \in \mathcal{C}_n$ such that measuring $V \rho V^\dagger$ in the computational basis, followed by some classical post-processing, yields the same output distribution as the original circuit. We begin with the simplest case where the original circuit includes only one ancillary qubit $m=1$.

\begin{theorem}
    Consider an $(n+1)$-qubit Clifford circuit applying $U \in \mathcal{C}_{n+1}$ jointly on an $n$-qubit state $\rho$ and a single ancilla initialized to $\ket{0}$, followed by computational basis measurements. Then, there exists an $n$-qubit Clifford circuit applying $V \in \mathcal{C}_n$ solely on $\rho$, followed by computational basis measurements and classical post-processing, that yields an identical measurement distribution.
    \label{thm:single-ancilla}
\end{theorem}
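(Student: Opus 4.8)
The plan is to propagate the final computational-basis measurement backward through $U$ and exploit that the ancilla is a stabilizer state. Writing $\Pi_y = \ketbra{y}{y} = \prod_{i=1}^{n+1}\frac{I+(-1)^{y_i}Z_i}{2}$, the outcome probability is $p(y)=\langle y|U(\rho\otimes\ketbra{0}{0})U^\dagger|y\rangle = \Tr[\prod_i \frac{I+(-1)^{y_i}N_i}{2}\,(\rho\otimes\ketbra{0}{0})]$ with $N_i := U^\dagger Z_i U$. The $N_i$ are mutually commuting Hermitian Paulis on $n+1$ qubits, so sampling $y$ is exactly a joint measurement of $N_1,\dots,N_{n+1}$ on the input $\rho\otimes\ketbra{0}{0}$, recording each $\pm1$ eigenvalue as a bit $y_i$. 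I would write $N_i = R_i\otimes S_i$ with $R_i$ a signed $n$-qubit Pauli acting on $\rho$ and $S_i\in\{I,X,Y,Z\}$ acting on the ancilla.

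Second, I would reduce how the ancilla is touched. In the symplectic picture the ancilla factor of each $N_i$ carries an $x$-coordinate $x_i\in\mathbb{F}_2$, with $x_i=1$ iff $S_i\in\{X,Y\}$. Since these $x_i$ live in the one-dimensional space $\mathbb{F}_2$ and the $N_i$ pairwise commute, I can perform Gaussian elimination by replacing generators with products $N_iN_j$ (choosing one $x=1$ generator as pivot and multiplying it into all other $x=1$ generators). Multiplying commuting Pauli observables corresponds to XOR-ing their $\pm1$ outcomes, so this is an invertible $\mathbb{F}_2$-affine recombination of the bits $y_i$ (the sign from Pauli multiplication supplies the affine shift), hence pure classical post-processing. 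After elimination, at most one generator has ancilla part in $\{X,Y\}$; relabel so it is $N_{n+1}'=R_{n+1}'\otimes S_{n+1}'$, while $N_1',\dots,N_n'$ all have $S_i'\in\{I,Z\}$.

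Third, I would handle the two kinds of generators. For $i\le n$, the factor $S_i'\in\{I,Z\}$ fixes $\ket{0}$, so $N_i'$ acts on $\rho\otimes\ketbra{0}{0}$ exactly as the $n$-qubit Pauli $R_i'$ acts on $\rho$; these $R_i'$ pairwise commute and are therefore jointly measurable by a single $n$-qubit Clifford $V$ that maps them to $Z$-type operators, followed by a computational-basis measurement and a linear read-off (any $\mathbb{F}_2$-relation among the $R_i'$ merely fixes some output bits deterministically, which is harmless). If the eliminated generator is present, I would show its bit $y_{n+1}'$ is an independent fair coin: expanding $\frac{I+(-1)^{y_{n+1}'}N_{n+1}'}{2}$, the $N_{n+1}'$ term applies $S_{n+1}'\in\{X,Y\}$ to $\ket{0}$, producing an ancilla factor proportional to $\ketbra{1}{0}$, whose trace against the ancilla-diagonal contribution of the remaining projectors vanishes; hence $p(y')=\frac12\,\Tr[\prod_{i\le n}\frac{I+(-1)^{y_i'}R_i'}{2}\rho]$ is independent of $y_{n+1}'$. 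Thus I simulate $y_{n+1}'$ by a classical fair coin, obtain $y_1',\dots,y_n'$ from $V$ acting on $\rho$, and invert the recombination of the second step to output $y$, reproducing $p(y)$ exactly.

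I expect the main obstacle to be the careful bookkeeping in the third step, specifically proving that the eliminated outcome is genuinely an unbiased coin independent of the rest, i.e. that the off-diagonal ancilla trace $\Tr[\,\cdot\,\ketbra{1}{0}]$ vanishes once all the remaining $\{I,Z\}$-type projectors are inserted. Secondary care is needed to track Pauli multiplication signs through the Gaussian elimination so that the induced classical map on outcomes is genuinely invertible and affine, and to confirm that deterministic relations among the $R_i'$ do not obstruct their joint measurement by $V$. None of these steps is conceptually hard, but the whole argument hinges on the ancilla remaining a $\ket{0}$-eigenstate, which is exactly what forces every branch where the propagated measurement acts as $X$ or $Y$ on the ancilla to decohere into a trivial, independent coin flip.
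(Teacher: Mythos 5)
Your proposal is correct and follows essentially the same route as the paper's proof: back-propagate the measurement to the stabilizer group $\langle (-1)^{x_i}U^\dagger Z_i U\rangle$, recombine generators by Gaussian elimination (implemented classically as an invertible affine map on outcome bits, which is the paper's bijections $\alpha,\beta,\gamma$), choose $V$ with $V^\dagger Z_i V$ equal to the system Paulis, and treat the residual ancilla bit as either deterministic or a fair coin. The only difference is organizational --- you eliminate just the ancilla $X$-coordinate and absorb residual ancilla-$Z$ factors via $Z\ket{0}=\ket{0}$, collapsing the paper's four-case analysis (all-$Z$, all-$X$, all-$Y$, mixed) into the two behaviors of a deterministic parity constraint versus an independent unbiased coin, with the vanishing off-diagonal trace argument matching the paper's computation $(I_S\otimes\bra{0})\tfrac{I+(-1)^a h\otimes X}{2}(I_S\otimes\ket{0})=\tfrac{I_S}{2}$.
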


\begin{figure}
    \centering
    \includegraphics[width=\columnwidth]{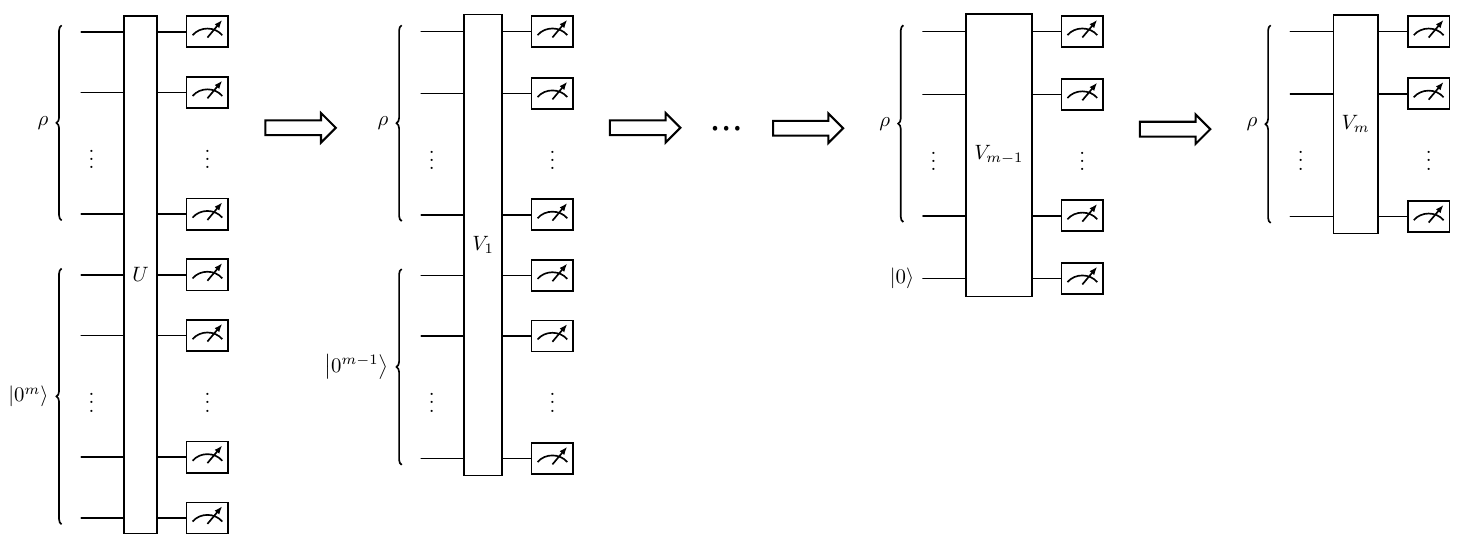}
    \caption{Recursive applications of Theorem~\ref{thm:single-ancilla} leads to Corollary~\ref{cor:multi-ancilla}.}
    \label{fig:removing_ancillas}
\end{figure}

By recursively applying Theorem~\ref{thm:single-ancilla}, it follows immediately that a circuit with any number of ancillary qubits can be simulated by a circuit without ancillas. As Fig.~\ref{fig:removing_ancillas} describes, we can eliminate one ancillary qubit at a time, replacing the circuit with a Clifford circuit without the last ancilla and classical post-processing, while preserving the output probability distribution. Repeating this procedure $m$ times, we ultimately obtain a circuit acting only on the $n$-qubit state $\rho$.

\begin{corollary}
    Consider an $(n+m)$-qubit Clifford circuit applying $U \in \mathcal{C}_{n+1}$ jointly on an $n$-qubit state $\rho$ and a single ancilla initialized to $\ket{0^m}$, followed by computational basis measurements. Then, there exists an $n$-qubit Clifford circuit applying $V \in \mathcal{C}_n$ solely on $\rho$, followed by computational basis measurements and classical post-processing, that yields an identical measurement distribution.
    \label{cor:multi-ancilla}
\end{corollary}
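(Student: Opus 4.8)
The plan is to prove Corollary~\ref{cor:multi-ancilla} by induction on the number $m$ of ancillary qubits, using Theorem~\ref{thm:single-ancilla} as the engine that strips off one ancilla per step. The base case $m=1$ is exactly Theorem~\ref{thm:single-ancilla} (and $m=0$ is trivial, the circuit already acting on $\rho$ alone). Figure~\ref{fig:removing_ancillas} depicts precisely this one-ancilla-at-a-time peeling, so the task is to turn that picture into a clean induction while tracking the classical post-processing.

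For the inductive step I would assume the corollary for $m-1$ ancillas and take an $(n+m)$-qubit Clifford circuit applying $U \in \mathcal{C}_{n+m}$ to $\rho \otimes \ketbra{0^m}{0^m}$ with all $n+m$ qubits measured. The key move is a regrouping: treat the enlarged $(n+m-1)$-qubit state $\tilde\rho := \rho \otimes \ketbra{0^{m-1}}{0^{m-1}}$ as the ``system'' and the single remaining qubit as ``the'' ancilla initialized to $\ket{0}$. Applying Theorem~\ref{thm:single-ancilla} with $n$ replaced by $n+m-1$ — which is legitimate since that theorem is stated for an \emph{arbitrary} input density matrix, here $\tilde\rho$ — yields a Clifford $V' \in \mathcal{C}_{n+m-1}$ acting on $\tilde\rho$, followed by measurement of $n+m-1$ qubits and a classical map $f$ that reproduces the original distribution over $\{0,1\}^{n+m}$.

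The crucial observation is that this new circuit, namely $V'$ applied to $\tilde\rho = \rho \otimes \ketbra{0^{m-1}}{0^{m-1}}$ followed by measurement, is again of exactly the form the corollary addresses, but now with only $m-1$ ancillas still sitting in $\ket{0}$. Invoking the inductive hypothesis on it produces a Clifford $V \in \mathcal{C}_n$ acting on $\rho$ alone, a measurement of $n$ qubits, and a classical map $g$ reproducing the $(n+m-1)$-bit distribution. Composing the two post-processing stages, $f \circ g$, gives a single classical function from the final $n$ measurement outcomes back to the original $(n+m)$-bit distribution, which closes the induction.

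The content here is bookkeeping rather than a genuine obstacle, and the two things I would verify carefully are exactly the two places where the recursion could silently fail. First, one must check that after each application of Theorem~\ref{thm:single-ancilla} the remaining qubits retain the product structure $\rho \otimes \ketbra{0^{\,\cdot}}{0^{\,\cdot}}$, so that the inductive hypothesis genuinely applies to the reduced circuit; this holds because the regrouped ancilla is the only qubit touched by the enlargement and the rest stay in their $\ket{0}$ initialization. Second, one must confirm that the classical post-processing accumulated over the $m$ peeling steps composes into a single classical map of the $n$ surviving outcomes — immediate, since classical functions (and classical channels, if randomness is used) are closed under composition. The mild subtlety worth flagging is the index shift in the appeal to Theorem~\ref{thm:single-ancilla}, where the ``system'' grows to $n+m-1$ qubits at the first step and shrinks by one at each subsequent step.
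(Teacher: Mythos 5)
Your proof is correct and takes essentially the same route as the paper: the paper's own argument is precisely the recursive peeling you describe, applying Theorem~\ref{thm:single-ancilla} once per ancilla (as depicted in Fig.~\ref{fig:removing_ancillas}) and composing the classical post-processing over the $m$ steps. Your induction merely formalizes that recursion, correctly noting that the single-ancilla theorem applies to the enlarged system state $\rho\otimes\ketbra{0^{m-1}}{0^{m-1}}$ and that the (possibly randomized) post-processing maps compose into a single classical channel.
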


Therefore, the presence of additional ancillas cannot reduce the sample complexity. We conclude this section by proving Theorem~\ref{thm:single-ancilla}.

\begin{proof}[Proof of Theorem~\ref{thm:single-ancilla}]
    To begin with, we introduce the notations that we use throughout this proof. We denote the $n$ qubit Pauli group as $\mathcal{P}_n$. In addition, let $I_S$, $I_A$, and $I_{SA}$ denote the identity acting on the first $n$ qubits for the main system, the last ancilla qubit, and all $n+1$ qubits, respectively. Finally, for $j=1,\dots,n+1$ $X_j$ denotes the Pauli operator acts $X$ on the $j$-th qubit and acts $I$ on the other qubits. We denote $Y_j$ and $Z_j$ similarly.
    
    For the original circuit with the ancilla, let $P(x)$ denote the probability that the measurement outcome is $x\in \{0,1\}^{n+1}$. Then,
    \begin{align}
       P(x) &= \langle x|U(\rho\otimes \ketbra{0}{0}) U^\dagger |x\rangle\\
       &= \Tr\left[\rho (I_S \otimes \bra{0})U^\dagger\ketbra{x}{x}U(I_S\otimes \ket{0})\right],
    \end{align}
    where $I_S$ is the identity operator on the first $n$ qubits. Thus, it is useful to analyze $U^\dagger\ketbra{x}{x}U$ first. Note that for all $x\in \{0,1\}^{n+1}$, $\ketbra{x}{x}=\prod_{i=1}^{n+1}\frac{I_{SA}+(-1)^{x_i}Z_i}{2}$, where $x_i$ denotes the $i$-th element of $x$. Then, we have
    \begin{align}
        U^{\dagger}\ketbra{x}{x}U=\prod_{i=1}^{n+1}\frac{I_{SA}+(-1)^{x_i}g_i}{2}
    \end{align}
    where $g_i=U^{\dagger}Z_iU\in\mathcal{P}_{n+1}$ for all $i=1,\dots,n+1$. Since $Z_1, \dots, Z_{n+1}$ are mutually commuting and independent, so are $g_1, \dots, g_{n+1}$. Therefore, we can define a stabilizer group $S$, an Abelian subgroup of $\mathcal{P}_n$ generated by $(-1)^{x_i}g_i$'s, i.e.,
    \begin{equation}
        S = \langle (-1)^{x_1}g_1, \dots, (-1)^{x_{n+1}}g_{n+1} \rangle.
    \end{equation}
    Then, as $g_i^2=I$ for all $i=1,\dots,n+1$, we have
    \begin{equation}
        U^{\dagger}\ketbra{x}{x}U=\frac{1}{2^{n+1}}\sum_{P\in S}P.
    \end{equation}
    It shows that the state only depends on the group $S$, independent of the choice of the generator. Therefore, if $\{(-1)^{y_1}g_1',
    \dots,(-1)^{y_{n+1}}g_{n+1}'\}$ also generates $S$ for some $g_1',\dots,g_{n+1}'\in \mathcal{P}_{n+1}$ and $y \in \{0,1\}^{n+1}$, then we can also denote $U^{\dagger}\ketbra{x}{x}U=\prod_{i=1}^{n+1}\frac{I_{SA}+(-1)^{y_i}g_i}{2}$.

    We categorize the proof into four cases based on the action of $g_1,\dots,g_{n+1}$ on the ancilla (the $(n+1)$-th qubit): (1) All $g_i$ act as $I$ or $Z$ on the ancilla. (2) All $g_i$ act as $I$ or $X$ on the ancilla. (3) All $g_i$ act as $I$ or $Y$ on the ancilla. (4) There exist at least two $g_i$'s acting non-trivially on the ancilla with different Pauli operators. We now show, for each case 1-4, the existence of a corresponding $V\in\mathcal{C}_n$ and classical post-processing that yield measurement distributions identical to the original circuit.
    
    \textit{Case 1}: All $g_i$ act as $I$ or $Z$ on the ancilla. In this case, there exists $g \in \{g_1,\dots,g_{n+1}\}$ that acts $Z$ on the ancilla, since otherwise, the rank of $\prod_{i=1}^{n+1}\frac{I_{SA}+(-1)^{x_i}g_i}{2}$ is more than two, while the rank of $U^\dagger \ketbra{x}{x}U$ should always be one. Without loss of generality, let $g_{n+1}$ act $Z$ on the ancilla.

    Given $\{(-1)^{x_1}g_1,\dots,(-1)^{x_{n+1}}g_{n+1}\}$ as the generating set of $S$, we have another generating set by multiplying $(-1)^{x_{n+1}}g_{n+1}$ to the other generators that act also $Z$ on the ancilla, so that $(-1)^{x_{n+1}}g_{n+1}$ is the only generator that acts nontrivially on the ancilla. Specifically, for some one-to-one and onto function $\alpha:\{0,1\}^{n+1}\rightarrow\{0,1\}^{n+1}$ and some $g'_1,\dots,g'_{n+1} \in \mathcal{P}_{n+1}$, we have the following generators of $S$:
    \begin{equation}
        S = \langle (-1)^{\alpha(x)_1}g'_1, \dots, (-1)^{\alpha(x)_{n}}g'_{n}, (-1)^{\alpha(x)_{n+1}}g_{n+1} \rangle,
    \end{equation}
    where $g'_1,\dots,g'_n$ acts trivially on the ancilla for $j=1,\dots,n$. Therefore, we can denote $g_{n+1}=h_{n+1}\otimes Z$ for some $h_{n+1} \in \mathcal{P}_n$, and $g'_j=h_j\otimes I$ for $j=1,\dots ,n$ where $h_1,\dots,h_n \in \mathcal{P}_{n}$ are $n$ independent and mutually commuting Pauli operators in $\mathcal{P}_n$. We can further simplify the generators by using the fact that $\{h_1,\dots,h_{n+1}\}$ is not an independent set. This is because the size of the generating set of an Abelian subgroup of $\mathcal{P}_n$ cannot be larger than $n$ (see, e.g., Ref.~\cite{Nielsen_Chuang_2010}). Therefore, $h_{n+1}$ can be written as a product of some of $h_1,\dots, h_n$, so multiplying some of $(-1)^{\alpha(x)_1}g'_1 \dots (-1)^{\alpha(x)_n}g'_n$ to $(-1)^{\alpha(x)_{n+1}}g_{n+1}$ so that the last generator acts nontrivially only on the ancilla. Therefore, with the one-to-one and onto function $\beta: \{0,1\}^{n+1} \rightarrow \{0,1\}^{n+1}$ corresponding to this multiplication and $\gamma = \beta \circ \alpha$, we have
    \begin{equation}
        S = \langle (-1)^{\gamma(x)_1}g'_1, \dots, (-1)^{\gamma(x)_{n}}g'_{n}, (-1)^{\gamma(x)_{n+1}}g'_{n+1} \rangle,
    \end{equation}
    where $g'_{n+1} = Z_{n+1}$.

    Now, let $V \in \mathcal{C}_n$ be a $n$-qubit Clifford gate such that $V^{\dagger} Z_i V = h_i$ for all $i=1,\dots,n$. Then,
    \begin{align}
        U^{\dagger} \ketbra{x}{x} U
        &= \prod_{i=1}^{n+1}\frac{I_{SA}+(-1)^{\gamma(x)_i}g'_i}{2}\\
        &= \left(\prod_{j=1}^{n}\frac{I_S+(-1)^{\gamma(x)_j}h_j}{2}\otimes I_A\right)\frac{I_{SA}+(-1)^{\gamma(x)_{n+1}}Z_{n+1}}{2}\\
        &= \left(\prod_{j=1}^{n}\frac{I_{S}+(-1)^{\gamma(x)_j}V^\dagger Z_j V}{2}\right)\otimes\frac{I_A+(-1)^{\gamma(x)_{n+1}}Z}{2}\\
        &= \left(V^\dagger\ketbra{\gamma(x)_{1:n}}{\gamma(x)_{1:n}} V \right) \otimes \ketbra{\gamma(x)_{n+1}}{\gamma(x)_{n+1}}
    \end{align}
    With this notation, the output distribution $P$ is
    \begin{align}
        P(x) &=
        \begin{cases}
            \langle\gamma(x)_{1:n}|V\rho V^{\dagger}|\gamma(x)_{1:n}\rangle,\quad&\text{if $\gamma(x)_{n+1}=0$}\\
            0,\quad&\text{if $\gamma(x)_{n+1}=1$}
        \end{cases}
    \end{align}

    Now, it is straightforward to obtain the same output distribution using a Clifford circuit without the ancillary qubit and post-processing. Consider a circuit with the input of $\rho$ without an ancilla, and we apply $V$ and then measure all $n$ qubits in the computational basis. Let us denote the output distribution of this new circuit as $Q$. Then,
    \begin{align}
        Q(y) = \langle y|V\rho V^{\dagger}|y\rangle.
    \end{align}
    for all $y\in\{0,1\}^n$. Once we get an outcome $y$, we process $y \mapsto x = \gamma^{-1}(y0)$. Let the distribution of $x$ after the post-processing be $Q'$. Then,
    \begin{align}
        Q'(x)
        &=
        \begin{cases}
            Q(\gamma(x)_{1:n}),\quad&\text{if }\gamma(x)_{n+1}=0\\
            0,\quad&\text{if }\gamma(x)_{n+1}=1
        \end{cases}
        \\
        &=
        \begin{cases}
            \langle\gamma(x)_{1:n}|V\rho V^{\dagger}|\gamma(x)_{1:n}\rangle,\quad&\text{if }\gamma(x)_{n+1}=0\\
            0,\quad&\text{if }\gamma(x)_{n+1}=1
        \end{cases}
    \end{align}
    which is identical to the distribution of the original circuit $P(x)$. 
    
    \textit{Case 2}: All $g_i$ act as $I$ or $X$ on the ancilla. By the same reasoning of the \textit{Case 1}, we can choose the generators for the stabilizer group as
    \begin{equation}
        S = \langle (-1)^{\gamma(x)_1}g'_1, \dots, (-1)^{\gamma(x)_{n+1}}g'_{n+1} \rangle,
    \end{equation}
    for some one-to-one and onto function $\gamma:\{0,1\}^{n+1} \rightarrow \{0,1\}^{n+1}$ and $g'_1, \dots, g'_{n+1}$ in the forms of $g_j = h_j \otimes I$ with $h_j \in \mathcal{P}_n$ for $j=1,\dots,n$, and $g'_{n+1} = X_{n+1}$. Let $V \in \mathcal{C}_n$ be a $n$-qubit Clifford gate such that $V^{\dagger} Z_i V = h_i$ for all $i=1,\dots,n$. Then, similarly as before,
    \begin{align}
        U^{\dagger} \ketbra{x}{x} U
        &= \left(\prod_{j=1}^{n}\frac{I_{S}+(-1)^{\gamma(x)_j}V^\dagger Z_j V}{2}\right)\otimes\frac{I_A+(-1)^{\gamma(x)_{n+1}}X}{2}\\
        &= \left(V^\dagger\ketbra{\gamma(x)_{1:n}}{\gamma(x)_{1:n}} V \right) \otimes\frac{I_A+(-1)^{\gamma(x)_{n+1}}X}{2}.
    \end{align}
    Since $\langle0|X|0\rangle=0$, the output distribution $P$ is
    \begin{equation}
        P(x) =\frac{1}{2}\langle\gamma(x)_{1:n}|V\rho V^{\dagger}|\gamma(x)_{1:n}\rangle,
    \end{equation}
    for all $x\in \{0,1\}^{n+1}$

    Again, the circuit with the input of $\rho$ that applies $V$ and measuring every qubit can produce the same output distribution with post-processing. Let the output distribution of this new circuit be $Q$. Then,
    \begin{equation}
        Q(y) = \langle{y}|V\rho V^{\dagger}|{y}\rangle,
    \end{equation}
    for all $y\in\{0,1\}^n$. Once we get an outcome $y$, we process it as follows:
    \begin{equation}
        y \mapsto
        x = 
        \begin{cases}
            \gamma^{-1}(y0), \quad &\text{with probability 1/2}\\
            \gamma^{-1}(y1), &\text{with probability 1/2}
        \end{cases}
    \end{equation}
    Let the distribution of $x$ after this post-processing be $Q'$. Then,
    \begin{align}
        Q'(x) &= \frac{1}{2}Q(\gamma(x)_{1:n})\\
        &= \frac{1}{2} \langle{\gamma(x)_{1:n}}|V\rho V^\dagger |{\gamma(x)_{1:n}}\rangle,
    \end{align}
    which is identical to $P(x)$.

    \textit{Case 3}: All $g_i$ act as $I$ or $Y$ on the ancilla. The proof for this case is identical to that of \textit{Case 2} by replacing $X$ with $Y$.

    \textit{Case 4}: There exist at least two $g_i$'s acting non-trivially on the ancilla with different Pauli operators. In this case, suppose $g_{n}$ acts $Z$ on the ancilla and $g_{n+1}$ acts $X$ on the ancilla. Then, we can multiply $(-1)^{x_n}g_{n}$ and/or $(-1)^{x_{n+1}}g_{n+1}$ to other generators that act nontrivially on the ancilla so that $g_n$ and $g_{n+1}$ are the only generators that act nontrivially on the ancilla. Specifically, for some one-to-one and onto function $\alpha: \{0,1\}^{n+1} \rightarrow \{0,1\}^{n+1}$, we have
    \begin{equation}
        S = \langle (-1)^{\alpha(x)_1}g'_1, \dots, (-1)^{\alpha(x)_{n-1}}g'_{n-1}, (-1)^{\alpha(x)_n}g_n, (-1)^{\alpha(x)_{n+1}}g_{n+1} \rangle,
    \end{equation}
    where $g'_1,\dots, g_{n-1}'$ are in the forms of $g'_{j}=h_{j}\otimes I$ for $j=1,\dots ,n-1$ for $h_1,\dots,h_{n-1} \in \mathcal{P}_n$. We also denote $g_n=h_n \otimes Z$ and $g_{n+1}=h_{n+1} \otimes X$. Here, it is apparent that $h_1,\dots,h_{n-1}$ are independent and mutually commuting Pauli operators in $\mathcal{P}_n$.
    
    One can see that this is true even when appending $h_n$, i.e., $h_1,\dots,h_{n}$ are independent and mutually commuting. First, it is clear that $h_n$ commutes with all $h_1,\dots,h_{n-1}$, as $g_n$ commutes with $g_1,\dots,g_{n-1}$. Second, if $h_1,\dots,h_{n}$ is not independent, $h_n \in \langle h_1,\dots,h_{n-1}\rangle$. Therefore, there exists $i_1, \dots,i_m \le n-1$ such that $h_{n}=h_{i_1}h_{i_2}\dots h_{i_m}$. Then, we have $g\in S$ such that $g=g_{i_1}\dots g_{i_m} g_n = I\otimes Z$. It contradicts with the fact that $S$ is a abelian subgroup of $P_{n+1}$, as $g$ anticommutes with $g_{n+1}$. Therefore, $h_1,\dots,h_{n}$ are independent and mutually commuting.
    
    Let $V \in \mathcal{C}_n$ be a $n$-qubit Clifford gate such that $V^{\dagger} Z_i V = h_i$ for all $i=1,\dots,n$. Then,
    \begin{align}
        U^{\dagger}\ketbra{x}{x}U
        &= \left(\prod_{j=1}^{n-1}\frac{I_{SA}+(-1)^{\alpha(x)_j}g'_{j}}{2}\right)\frac{I_{SA}+(-1)^{\alpha(x)_n}g_{n}}{2} \cdot \frac{I_{SA}+(-1)^{\alpha(x)_{n+1}}g_{n+1}}{2}\\
        &= \left(\prod_{j=1}^{n-1}\frac{I_S+(-1)^{\alpha(x)_j}h_j}{2} \otimes I_A\right)
        \frac{I_{SA}+(-1)^{\alpha(x)_n}h_n\otimes Z}{2}
        \cdot
        \frac{I_{SA}+(-1)^{\alpha(x)_{n+1}}h_{n+1}\otimes X}{2}.
    \end{align}
    Now, note that
    \begin{align}
        (I_{S}\otimes \bra{0})\frac{I_{SA}+(-1)^{\alpha(x)_n}h_n\otimes Z}{2}
        &= \frac{I_{S}+(-1)^{\alpha(x)_{n}} h_n}{2}(I_{S}\otimes \bra{0})
    \end{align}
    and
    \begin{align}
        (I_{S}\otimes \bra{0})\frac{I_{SA}+(-1)^{\alpha(x)_{n+1}}h_{n+1}\otimes X}{2}(I_{S}\otimes \ket{0})
        &= \frac{I_S}{2}.
    \end{align}
    Therefore,
    \begin{align}
        (I_{S}\otimes \bra{0})U^{\dagger}\ketbra{x}{x}U (I_{S}\otimes \ket{0})
        &= \left(\prod_{j=1}^{n-1}\frac{I_S+(-1)^{\alpha(x)_j}h_j}{2}\right)
        \frac{I_S+(-1)^{\alpha(x)_n}h_n}{2}\cdot \frac{1}{2}\\
        &= V^\dagger\left(\prod_{i=1}^{n}\frac{I_S+(-1)^{\alpha(x)_i}Z_i}{2}\right)V\cdot \frac{1}{2}\\
        &= \frac{1}{2}V^{\dagger}\ketbra{\alpha(x)_{1:n}}{\alpha(x)_{1:n}}V.
    \end{align}

    We can obtain the same output distribution without the ancilla with the circuit that applies $V$ on the input of $\rho$ and then measures every qubit. Let the output distribution of this circuit be $Q$:
    \begin{equation}
        Q(y) = \langle{y}|V\rho V^{\dagger}|{y}\rangle,
    \end{equation}
    for all $y\in\{0,1\}^n$. Once we get an outcome $y$, we process it as follows:
    \begin{equation}
        y \mapsto
        x = 
        \begin{cases}
            \alpha^{-1}(y0), \quad &\text{with probability 1/2}\\
            \alpha^{-1}(y1), &\text{with probability 1/2}
        \end{cases}
    \end{equation}
    Let $Q'$ denote the distribution of $x$ after this post-processing. Then,
    \begin{align}
        Q'(x) &= \frac{1}{2}Q(\alpha(x)_{1:n})\\
        &= \frac{1}{2} \langle{\alpha(x)_{1:n}}|V\rho V^\dagger |{\alpha(x)_{1:n}}\rangle,
    \end{align}
    which is identical to $P(x)$.
\end{proof}

\end{document}